\newcommand{\remove}[1]{}
\newcommand{\argmin}{\operatornamewithlimits{argmin}}
\newtheorem{thm}{Theorem}
\newtheorem{lem}{Lemma}
\newcommand{\qed}{\hfill \ensuremath{\Box}}
\begin{document}

\title{As-You-Go Deployment of a Wireless Network with On-Line Measurements and Backtracking \thanks{The research 
reported in this paper was supported by a Department of Electronics and Information 
Technology (DeitY, India) and NSF (USA) funded project on Wireless 
Sensor Networks for Protecting Wildlife and Humans, by an Indo-French Centre for 
Promotion of Advance Research (IFCPAR) funded project, and by the Department of Science and Technology (DST, India). }}


\newcounter{one}
\setcounter{one}{1}
\newcounter{two}
\setcounter{two}{2}



\author{
Arpan~Chattopadhyay$^\fnsymbol{one}$, Marceau~Coupechoux$^\fnsymbol{two}$, and Anurag~Kumar$^\fnsymbol{one}$\\
\parbox{0.49\textwidth}{\centering $^\fnsymbol{one}$Dept. of ECE, Indian Institute of Science\\
Bangalore 560012, India\\
arpanc.ju@gmail.com, anurag@ece.iisc.ernet.in}
\hfill
\parbox{0.49\textwidth}{\centering $^\fnsymbol{two}$Telecom ParisTech and CNRS LTCI \\
Dept. Informatique et R\'eseaux\\
23, avenue d'Italie, 75013 Paris, France\\
marceau.coupechoux@telecom-paristech.fr}
}

\maketitle
\thispagestyle{empty}

\begin{abstract}
We are motivated by the need, in some applications, for impromptu or as-you-go deployment of wireless sensor networks. 
A person walks  along a line, 
making link quality measurements with the previous relay at equally spaced locations, and deploys relays at some 
of these locations, so as to connect a sensor placed on the line with a
sink at the start of the line. 
In this paper, we extend our earlier work on the problem  
(see \cite{chattopadhyay-etal13measurement-based-impromptu-placement_wiopt}) to incorporate two new aspects: 
(i) inclusion of path outage in the deployment objective, and (ii) permitting the deployment agent to make 
measurements over several consecutive steps before selecting a placement location among them 
(which we call \emph{backtracking}). We consider a light traffic regime, and
formulate the problem as 
a Markov decision process. Placement algorithms are obtained for two cases: (i) the distance to the source is geometrically 
distributed with known mean, 
and (ii) the average cost per step case. 
We motivate the per-step cost function in terms of several known forwarding protocols 
for sleep-wake cycling wireless sensor networks. We obtain the structures of the 
optimal policies for the various formulations, and provide some sensitivity results about 
the policies and the optimal values. We then provide a numerical study of the algorithms, 
thus providing insights into the advantage of backtracking, and a comparison with simple heuristic placement policies.
\end{abstract}

\section{Introduction}
\label{Introduction}

Wireless interconnection of resource-constrained mobile user devices or wireless sensors to
the wireline infrastructure via relay nodes is an important
requirement, since a direct one-hop link from the source node to 
the infrastructure ``base-station" may not always be feasible, due to distance or poor channel condition. 
Such wireless interconnection of sensors with the wireline infrastructure is usually 
performed by a multi-hop wireless network, the resulting system being commonly called a Wireless Sensor Network (WSN). 
There are situations in which a WSN needs to be deployed in an ``as-you-go" fashion. One such situation is in 
emergencies, e.g, situational awareness networks deployed by first-responders such as fire-fighters 
or anti-terrorist squads. As-you-go deployment is also of interest when deploying networks 
over large terrains, such as forest trails, particularly when the network is temporary and needs to be quickly 
redeployed in a different part of the forest (e.g., to monitor a moving phenomenon such as groups of wildlife), 
or when the deployment needs to be stealthy (e.g., to monitor fugitives). 
Motivated by these more general problems, we consider the problem of ``as-you-go" deployment of relay nodes
along a line, between a sink node and a source node (see
Figure~\ref{fig:line-network-general}), where the {\em deployment operative}\footnote{In this paper 
we consider a \emph{single} person carrying out the deployment and refer to this person 
as a ``deployment operative," or a ``deployment agent."}
starts from the sink node, places relay nodes along the line, and
places the source node where the line ends.  

\begin{figure}[!t]
\centering
\includegraphics[scale=0.37]{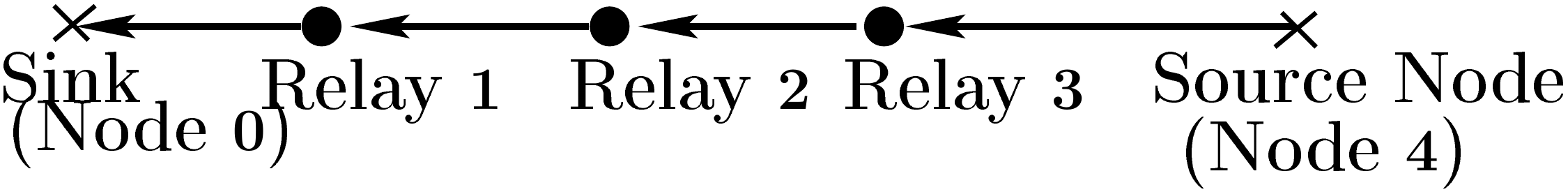}
\vspace{-2mm}
\caption{A line network with one source, one sink and three relays.}
\label{fig:line-network-general}
\vspace{-5mm}
\end{figure}

In \cite{chattopadhyay-etal13measurement-based-impromptu-placement_wiopt} we have 
formulated such a problem as one of optimal sequential relay 
deployment driven by measurements between a node yet to be deployed and the last relay already deployed. 
In \cite{chattopadhyay-etal13measurement-based-impromptu-placement_wiopt} we worked 
under the restriction that the deployment agent only moves forward. 
Such forward-only movement would be a necessity if the deployment needs to be quick. Due to shadowing, 
the path-loss over a link of a given length is random and a more efficient deployment can be expected if 
link quality measurements at several  locations along the line are compared and an optimal choice made among 
these. Since, in general, this would require the deployment agent to retrace his steps, we call this approach 
\emph{backtracking}. Backtracking would take more time, but might provide a good compromise between deployment 
speed and deployment efficiency, for an application such as the as-you-go deployment of a wireless sensor 
network over a forest trail (e.g, for wildlife monitoring).
When placing a relay at some distance from the previous relay, we can expect a better deployment if we explore several locations 
in the vicinity, at which the link qualities can be expected to be uncorrelated, and then pick the best among these.

In this paper, we mathematically formulate the  problems of
as-you-go deployment of relays along a line as optimal sequential decision problems. 
We introduce various measures of hop-cost with justification, and then formulate relay placement problems that 
minimize (i) the expected total hop cost when the distance $L$ of the source from the sink is geometrically distributed, 
or (ii) the expected average cost per-step. 
Our channel model accounts for path-loss, 
shadowing, and fading. The cost 
of a deployment is evaluated as a linear combination of three components: the sum or the maximum transmit 
power along the path, the sum outage 
probability along the path, and the number of relays deployed. 
We explore deployment with and without backtracking. 
We formulate each of these problems as a Markov decision process (MDP), obtain the optimal policy structures, illustrate their performance 
numerically 
and compare their performance with reasonable heuristics.

\subsection{Related Work}\label{subsec:related_work}

Recent years have seen increasing interest 
in the research community to explore the impromptu relay placement problem.  
For example, Howard et al., in \cite{howard-etal02incremental-self-deployment-algorithm-mobile-sensor-network}, 
provide heuristic algorithms for incremental deployment of
sensors in order to cover the deployment area. 
Souryal et al., in \cite{souryal-etal07real-time-deployment-range-extension}, address the
problem of impromptu wireless network deployment with experimental study of indoor RF link quality variation; 
similar approach is taken in \cite{aurisch-tlle09relay-placement-emergency-response} also. 
The authors of \cite{liu-etal10breadcrumb} describe a {\em breadcrumbs} system for aiding firefighting inside buildings. 
However,these approaches are based purely on heuristics and experimentation; 
they lack the rigour, both in the 
formulation and in the deployment strategy, and hence are not convincingly optimal or near optimal in terms of performance. 
In our work our effort has been to formulate these problems as optimal sequential 
decision problems in order derive optimal policies whose performance can be compared 
with simple heuristics, and which could be used to propose other heuristics. 
Recently, Sinha et al. (\cite{sinha-etal12optimal-sequential-relay-placement-random-lattice-path}) 
have provided an algorithm based on MDP formulation in order to establish a multi-hop 
network between a sink and an unknown source location, 
by placing relay nodes along a random lattice path. Their model uses a deterministic mapping between power and wireless 
link length, and, hence, does not consider the effect of shadowing that leads to statistical variability of the transmit power 
required to maintain the link quality over links having the same length. This problem was addressed by 
Chattopadhyay et al. in \cite{chattopadhyay-etal13measurement-based-impromptu-placement_wiopt}, where they considered 
spatial variation in link qualities due to shadowing in the context of as-you-go deployment along a 
line of unknown random length. The variation of link qualities over space led to the introduction of measurement-based 
deployment, in which the deployment agent measures the power required to establish a link (with a given quality) 
to the already placed nodes; the placement algorithm uses this value to decide whether to place a relay at that point.

The work reported in \cite{chattopadhyay-etal13measurement-based-impromptu-placement_wiopt}, 
however, has limitations that we address in the present paper. 

\begin{enumerate}[label=(\roman{*})]
 \item The framework of \cite{chattopadhyay-etal13measurement-based-impromptu-placement_wiopt} 
requires the link of each hop to have a certain outage probability. 
In practice, as the deployment agent walks away from the previously placed node, he can reach a 
point where even the maximum node power does not provide a link of the desired quality to the previous relay, 
and walking any farther is unlikely to provide a workable link. At this point the deployment is 
considered to have failed. In our present paper, we do not bound the outage probability of each hop but make 
the sum outage over all the hops a part of the optimization objective. 

\item In the framework of \cite{chattopadhyay-etal13measurement-based-impromptu-placement_wiopt}, 
the deployment agent can only move forward. 
In the present paper we introduce ``backtracking," which permits the deployment agent 
to compare the link qualities over several potential placement locations before deploying the relay at any one of them.
\end{enumerate}

\subsection{Organization}\label{subsec:organization}

The rest of the paper is organized as follows. The system model and notation has been described in 
Section~\ref{sec:system_model_and_notation}.  
As-you-go deployment (without backtracking, for a line having geometrically distributed length) for sum and max power 
objectives have been described in Section~\ref{sec:sum_power_sum_outage_no_backtracking_discounted} and 
Section~\ref{sec:max_power_sum_outage_no_backtracking_discounted} respectively. 
Section~\ref{sec:sum_power_sum_outage_with_backtracking_discounted} and 
Section~\ref{sec:max_power_sum_outage_with_backtracking_discounted} have addressed the problems of 
as-you-go deployment with backtracking, for a line having geometrically distributed length, for sum and max power 
objectives respectively. 
As-you-go deployment with backtracking along a line of infinite length, for average cost per step 
objective, has been discussed in Section~\ref{sec:backtracking_average_cost}. 
The numerical results have been 
discussed in Section~\ref{sec:numerical_work}, followed by the conclusion in Section~\ref{sec:conclusion}.

\section{System Model and Notation}\label{sec:system_model_and_notation}

\subsection{Length of the Line}\label{subsection:length_of_the_line}

We consider two different models:
\begin{enumerate}[label=(\roman{*})]
 \item  We first consider the scenario where the distance $L$ to the source from the sink 
at the start of the line is a priori unknown, but there is prior 
information (e.g., the mean $\overline{L}$) that leads us to model $L$ as a geometrically 
distributed number of steps. The step length $\delta$ 
and the mean $\overline{L}$, can be used 
to obtain the parameter of the geometric distribution, i.e., the probability $\theta$ that the line ends at the 
next step. 
All distances are assumed to be integer multiples of $\delta$. 
In the problem formulation, we assume $\delta=1$ for simplicity.\footnote{The geometric distribution 
is the maximum entropy discrete probability 
mass function with a given mean. Thus, by using the geometric distribution, 
we are leaving $L$ as uncertain as we can, given the prior knowledge of $\overline{L}$.}

\item  Next, we consider the setting where the line has infinite length. This can be useful 
where the line is long enough, and the end is not known (e.g., a long forest trail). Also, it can be used to deploy 
 a chain of relays over a long line, which can be used by several source-sink pairs, and the source-sink pairs 
could even be moved from one place to another (if required).
\end{enumerate}

\subsection{Channel Model}\label{subsection:channel_model}

In order to model the wireless channel, we consider the usual 
aspects of path-loss, shadowing, and fading. The received power for a particular link (i.e., a transmitter-receiver pair)  
of length $r$ is given by:
\begin{equation}
 P_{rcv}=P_T c \bigg(\frac{r}{r_0×}\bigg)^{-\eta}HW \label{eqn:channel_model}
\end{equation}
where $P_T$ is the transmit power, $c$ corresponds to the path-loss at the reference distance $r_0$, 
$\eta$ is the path-loss exponent, $H$ denotes the fading random variable (e.g., it could be an exponential 
random variable) and $W$ 
denotes the shadowing random variable. 
$H$ accounts for the variation of the received power over time, and it takes  
independent values from one coherence time to another. 
The path-loss between a transmitter and a receiver at a given distance can have a large spatial  
variability around the mean path-loss (averaged over fading), as the transmitter is moved over different points at the 
same distance; this is called shadowing.\footnote{Consider (\ref{eqn:channel_model}). If we transmit a 
sufficiently large number of packets 
on a link over multiple coherence times and record the received signal strength of all the packets, we can compute 
$\overline{P}_{rcv}$ which is the mean received signal power averaged over fading. 
But if the realization of shadowing in that link is $w$, 
then $\overline{P}_{rcv}=P_T c \bigg(\frac{r}{r_0×}\bigg)^{-\eta}w \mathbb{E}(H)$, 
from which we can easily calculate $w$.}  Shadowing is usually modeled as a log-normally distributed, 
random, multiplicative path-loss factor; in dB, shadowing is normally distributed with values of standard 
deviation as large as $8$ to $10$~dB. Also, shadowing is spatially uncorrelated over distances that depend on 
the sizes of the objects in the propagation environment (see \cite{agarwal-patwari07correlated-shadow-fading-multihop}); 
our measurements in a 
forest-like region of our Indian Institute of Science campus gave a decorrelation distance of $6$~meters. This is evident from 
Figure~\ref{fig:correlation_vs_distance}, where the variation of the measured correlation, $\rho$, between shadowing 
of two links (having  
one end common and the other ends located on the same straight line) with the distance between the other two ends, $d$, 
has been shown. Log-normality of shadowing was verified via the Kolmogorov-Smirnov goodness-of-fit test, and hence, 
the very small values of $\rho$ at $d \geq 6$~meters show that we can safely assume independent shadowing 
at different potential relay locations  
if the step size $\delta$ is $6$~meters or above. However, in our formulation, 
$W$ is assumed to take values from a finite set $\mathcal{W}$ with the probability mass function $p_W(w):=\mathbb{P}(W=w)$; 
in our numerical work we have quantized the range of values that log-normal shadowing can assume. 

A link is considered to be in \emph{outage} if  the received 
signal power drops (due to fading) below $P_{rcv-min}$  (e.g., below $-88$~dBm). 
Since practical radios can only be set to transmit at a finite set of power levels, 
the transmit power of each node can be chosen from a discrete set, $\mathcal{S}:=\{P_1, P_2, \cdots, P_M \}$, where 
$\{P_1, P_2, \cdots, P_M \}$ is arranged in ascending order. 
For a link of length $r$, a transmit power $\gamma\in \mathcal{S}$ and any particular realization of shadowing $W=w$, 
the outage probability is denoted by $P_{out}(r,\gamma,w)$, which is increasing in $r$ and decreasing 
in $\gamma$, $w$ (according to (\ref{eqn:channel_model})). 
It is also easy to check that the random variable $P_{out}(r,\gamma,W)$ is 
stochastically increasing in $r$ for each $\gamma \in \mathcal{S}$, and stochastically 
decreasing in $\gamma$ for each $r$. Note that $P_{out}(r,\gamma,w)$ depends on the 
fading statistics in the environment.\footnote{For a link with shadowing realization $w$, if the transmit power 
is $\gamma$, the received power of a packet will be $P_{rcv}=\gamma c \bigg(\frac{r}{r_0×}\bigg)^{-\eta}wH$. 
Outage is defined to be 
the event $P_{rcv} \leq P_{rcv-min}$. If $H$ is exponentially distributed with mean $1$, then we have, 
$P_{out}(r,\gamma,w)=\mathbb{P}\bigg( \gamma c \bigg(\frac{r}{r_0×}\bigg)^{-\eta}wH \leq P_{rcv-min} \bigg)=
1-e^{-\frac{P_{rcv-min}(\frac{r}{r_0×})^{\eta}}{\gamma c w×}}$.} 

\begin{figure}[!t]
\centering
\includegraphics[scale=0.32]{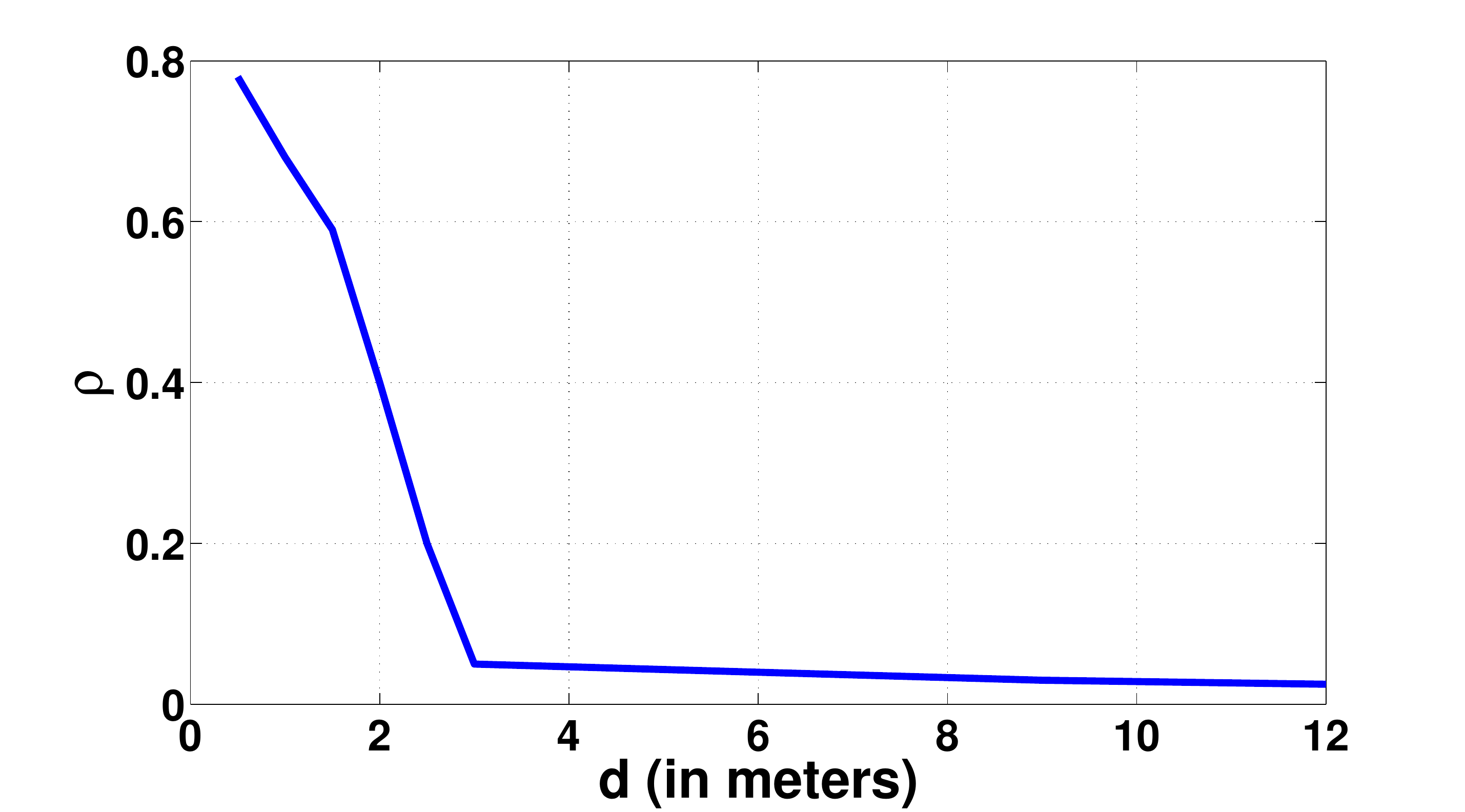}
\vspace{-2mm}
\caption{Variation of link shadowing correlation $\rho$ as a function of distance $d$ measured in a forest-like 
environment in our campus; one end (either the 
transmitter or the receiver) is common to both links and the other ends for both links are kept on the same line, but $d$ distance 
apart from each other.}
\label{fig:correlation_vs_distance}
\vspace{-3mm}
\end{figure}

\subsection{Deployment Process and Some Notation}\label{subsection:deployment_process_notation}

\begin{figure}[!t]
\centering
\includegraphics[scale=0.32]{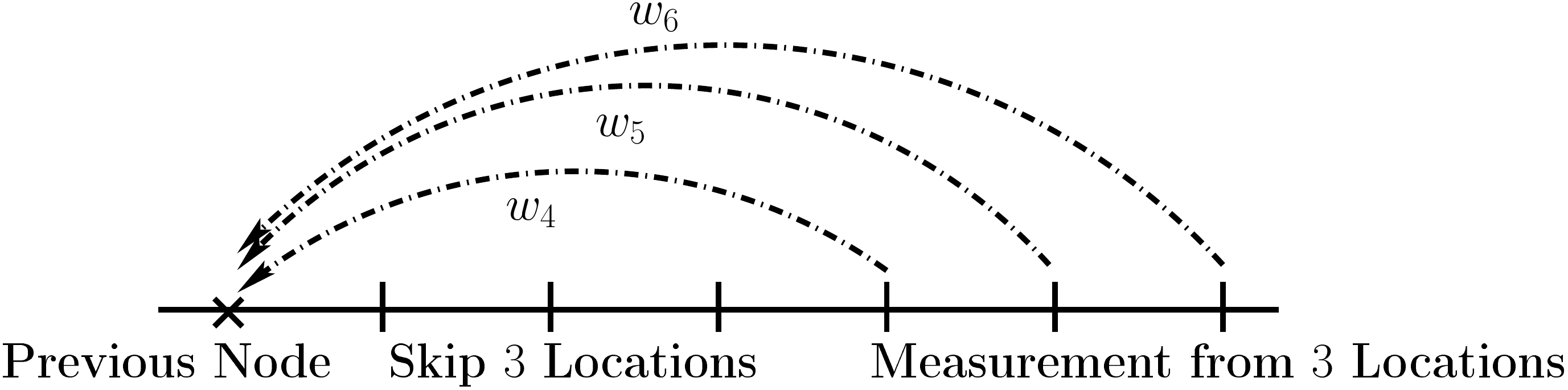}
\vspace{-2mm}
\caption{Backtracking with $A=3$ and $B=3$; the deployment agent skips the first $A$ steps from the previous node 
and measures the shadowing $w_{A+1},w_{A+2},\cdots,w_{A+B}$ from next $B$ locations in order to 
decide where to place the next relay.}
\label{fig:backtracking_illustration}
\vspace{-5mm}
\end{figure}

As the deployment agent walks along the line, at each step or at some subset of steps (each step is assumed to be 
a potential relay location, thereby discretizing the problem) he measures the link quality from the 
current location to the previous node (see Figure~\ref{fig:backtracking_illustration}); 
these measurements are used to decide where to place the next relay node and at what transmit power it should operate. 
In this paper, we do not consider the possibility of
another person following behind, who can learn from the measurements
and actions of the first person, thereby supplementing
the actions of the preceding individual. 
For deployment with {\em backtracking}, we assume that after placing a node, 
the deployment agent skips the next $A$ locations (i.e., walks forward a distance $A \delta$, where $A \geq 0$) 
and measures the shadowing $\underline{w}:=(w_{A+1},w_{A+2},\cdots,w_{A+B})$\footnote{Underlined symbols denote vectors 
in this paper.} to the previous node from locations 
$(A+1),(A+2), \cdots,(A+B)$. 
Then he places the relay at one of the locations $(A+1),(A+2), \cdots,(A+B)$ and moves on. This procedure 
is illustrated in Figure~\ref{fig:backtracking_illustration}. 
For the geometrically distributed length if the line ends within $(A+B)$ steps from the previous node, 
then the source is placed where the line ends. In this case, after the deployment process is complete 
(i.e., when the source is placed), we denote the 
number of deployed relays by $N$, which is a random number, with the randomness coming 
from the randomness in the link qualities (due to shadowing) and in the length of the line. 

As shown in Figure~\ref{fig:line-network-general}, the sink is called Node $0$, the relay 
closest to the sink is called Node $1$, and finally the source is called Node $(N+1)$. 
The link whose transmitter is Node $i$ and receiver is Node 
$j$ is called link $(i,j)$. A generic link is denoted by $e$.

We assume that the shadowing at any two different links in the network 
are independent, i.e., $W_{(e_1)}$ is independent of $W_{(e_2)}$ for $e_1 \neq e_2$. This can be a reasonable assumption 
if $\delta$ is chosen to be at least the de-correlation distance of shadowing (as discussed in 
Section~\ref{subsection:channel_model}).

For comparison, we also consider the case in which backtracking is \emph{not} allowed. 
In this case, after placing a relay, the agent skips the next $A$~steps, and 
sequentially measures shadowing from the locations $(A+1),(A+2),\cdots,(A+B)$. As the agent explores the locations 
$(A+1),(A+2),\cdots,(A+B-1)$ and measures the shadowing in those locations, at each step he decides whether to 
place a relay there, and if the decision is to place a relay, then he 
also decides at what transmit power the relay will operate. In this process, if he has walked $(A+B)$ steps away from the previous 
relay, or if he encounters the source location within this distance, then he must place the relay or the source.

The choice of $A$ and $B$ depends on the constraints and requirements for the deployment. A larger value of $A$ will 
result in faster exploration of the line, since many locations can be skipped. For a fixed $A$, a larger value 
of $B$ results in more measurements, and hence we can expect a better performance on an average. However, $A$ and $B$ must be 
chosen such that the outage probability of a randomly chosen link having length $(A+B)$~steps are within tolerable limits with 
high probability.\footnote{Randomness in outage probability of a randomly chosen link 
comes from the spatial variation of link quality 
due to shadowing.}

\subsection{Traffic Model}\label{subsection:traffic_model}

We consider a traffic model where the traffic is so
low that there is only one packet in the network at a time; we call
this the ``lone packet model.''  As a consequence of this assumption,
(i) the performance over each link depends only on the
path-loss, shadowing and fading over that link, as there are no simultaneous
transmissions to cause interference, and (ii) the transmission delay
over each link is easily calculated, as there are no simultaneous
transmitters to contend with. This permits us to easily write down the
communication cost on a path over the deployed relays. 
Such a traffic model is realistic for sensor networks that carry out low duty cycle 
measurement of environment variables, or just carry an occasional alarm packet. 
Also, a design with the lone packet model can be the starting point for a design with desired positive traffic.

\vspace{-3mm}
\subsection{Network Cost}\label{subsection:network_cost}
\vspace{-2mm}
In each case, we evaluate the cost of a deployed network in terms of the sum of certain hop costs. 
In case all the nodes have {\em wake-on radios,} the nodes normally stay in sleep mode, and each sleeping node 
draws a very small current from the battery (see \cite{vodel-hardt13energy-efficient-communication-distributed-embedded-systems}). 
When a node has a packet, it sends a wake-up tone to the intended receiver. 
The receiver wakes up and the sender transmits the 
packet. The receiver sends an ACK packet in reply. Clearly, the energy spent in transmission and reception of data packets 
govern the lifetime of a node, given that the ACK size is negligible compared to the packet size. Also, 
the energy spent in transmission and reception of packets govern the lifetime in certain 
receiver-centric synchronous duty cycled MAC protocols, 
under moderate traffic which ensures no contention and 
substantial amount of energy consumption in data transmission and reception. 

Let $t_p$ be the duration of a packet, and suppose that the node~$i$ uses power $\Gamma_i$ during transmission, 
which can be chosen according to the link quality. 
Let $P_{r}$ denote the power
that any receiving node uses for a packet reception. The 
relay node~$k$ ($1 \leq k \leq N$) can deliver $E/(\Gamma_k+P_{r})t_p$ packets before its battery is drained out. 
The source can deliver 
$E/\Gamma_{N+1}^{'} t_p$ packets, where $\Gamma_{N+1}^{'}$ is the transmit power used by the source. Writing 
$\Gamma_{N+1}^{'}=\Gamma_{N+1}+P_r$, we can write the cost function which
 appropriately captures the lifetime of the network:
\begin{equation}
\mathbb{E} ( \max_{i \in \{1,2,\cdots,N+1\}} \Gamma_i + \xi_o \sum_{i=1}^{N+1}P_{out}^{(i,i-1)}+ \xi_r N )
\label{eqn:cost_function_max_power_sum_outage}
\end{equation}
where $\xi_r$ is the cost of placing a relay and $\xi_o$ is the cost per unit outage probability. 
$P_{out}^{(i,i-1)}$ is the outage probability in the link $(i,i-1)$, and is decreasing in the 
transmit power $\Gamma_i$. The sum outage probability $\sum_{i=1}^{N+1}P_{out}^{(i,i-1)}$ is an indicator of the end-to-end
packet dropping rate when the outage probabilities are small and there is no retransmission for dropped packets.

On the other hand, since the packet arrival rate $\zeta$ at the source is very small, 
the lifetime of the $k$-th node is $T_k:=\frac{E}{\zeta(\Gamma_k+P_{r})t_p×}$ seconds. Hence, the rate at 
which we have to replace the batteries in the network is given by $\sum_k \frac{1}{T_k×}=\sum_k \frac{\zeta(\Gamma_k+P_{r})t_p×}{E}$. The 
energy expenditure due to $P_r$ is absorbed into $\xi_r$, and we have the following cost function:
\begin{equation}
\mathbb{E} ( \sum_{i=1}^{N+1} \Gamma_i + \xi_o \sum_{i=1}^{N+1}P_{out}^{(i,i-1)}+ \xi_r N )
\label{eqn:cost_function_sum_power_sum_outage}
\end{equation}
For average cost 
per step objective, the max power cost does not make any sense and we consider only sum-power cost.

\section{Impromptu Deployment for Geometrically Distributed Length without Backtracking: Sum Power and 
Sum Outage Objective}\label{sec:sum_power_sum_outage_no_backtracking_discounted}

\subsection{Problem Formulation}\label{subsec:problem_formulation_sum_power_sum_outage_no_backtracking}

Here we seek to solve the following problem:

\begin{equation}
 \min_{\pi \in \Pi} \mathbb{E}_{\pi} \bigg( \sum_{i=1}^{N+1}\Gamma^{(i,i-1)}+ 
\xi_o \sum_{i=1}^{N+1}P_{out}^{(i,i-1)} +\xi_r N \bigg)
\label{eqn:sum_power_discounted_no_backtracking}
\end{equation}
where $\Pi$ is the set of all placement policies and $\pi$ is a specific placement policy. 

Let us recall the deployment procedure for no backtracking as described in Section~\ref{subsection:deployment_process_notation}. 
When the agent is $r$ steps away from the previous node ($A+1 \leq r \leq A+B$), 
he measures the shadowing $w$ on the link from the current location to the previous node. 
He uses the knowledge of $(r,w)$ to decide whether to place a node there. 
We formulate (\ref{eqn:sum_power_discounted_no_backtracking}) as a  Markov 
Decision Process with state space $\{A+1,A+2,\cdots,A+B\} \times \mathcal{W}$. 
At state $(r,w), (A+1) \leq r \leq (A+B-1), w \in \mathcal{W}$, the action is either to 
place a relay and select some transmit power $\gamma \in \mathcal{S}$, or not to place. When $r=A+B$, the only 
feasible action is to place and select a transmit power $\gamma \in \mathcal{S}$. Note that, the 
problem restarts after placing a relay, because of the memoryless property of the geometric distribution and 
the independence of shadowing across links; the state of the system at such regeneration points is denoted by $\mathbf{0}$. 
When the source is placed, the process terminates. 
The randomness in the system comes from the geometric distribution of the length of the line and 
the random shadowing in different links. 
{\em Note that the cost function in (\ref{eqn:sum_power_discounted_no_backtracking}) 
can also be motivated as Lagrangian relaxations of constraints on the expectations of the sum outage and 
the number of deployed relays, $N$.}

\subsection{Bellman Equation}\label{subsec:bellman_equation_sum_power_sum_outage_no_backtracking}

Let us denote the optimal expected cost-to-go at state $(r,w)$ and at 
state $\mathbf{0}$ be $J(r,w)$ and $J(\mathbf{0})$ respectively. 
Note that here we have an infinite horizon
total cost MDP with a finite state space and finite action
space. The assumption P of Chapter~$3$ in \cite{bertsekas07dynamic-programming-optimal-control-2} is satisfied
here, since the single-stage costs are nonnegative (power, outage and relay costs are all nonnegative). 
Hence, by the theory developed in \cite{bertsekas07dynamic-programming-optimal-control-2}, 
we can restrict ourselves to the class of stationary deterministic
Markov policies. Any deterministic Markov policy $\pi$ is a sequence $\{\mu_k\}_{k \geq 1}$ of mappings from 
the state space to the action space. A deterministic Markov policy is called ``stationary'' if $\mu_k=\mu$ for all $k \geq 1$. 

By Proposition~$3.1.1$ of \cite{bertsekas07dynamic-programming-optimal-control-2}, the optimal value function $J(\cdot)$
satisfies the Bellman equation which is given by, for all $(A+1) \leq r \leq (A+B-1)$,

\footnotesize
\begin{eqnarray}
 J(r,w)&=&\min \bigg\{ \min_{\gamma \in \mathcal{S}} ( \gamma+\xi_o P_{out}(r,\gamma,w) )+\xi_r + J(\mathbf{0}), \nonumber\\
&& \theta \mathbb{E}_W \min_{\gamma \in \mathcal{S}} (\gamma+ \xi_o  P_{out} (r+1,\gamma,W)) \nonumber\\
&& + (1-\theta)\mathbb{E}_W J(r+1,W)  \bigg\},\nonumber\\
J(A+B, w)&=&\min_{\gamma \in \mathcal{S}} (\xi_r+\gamma+\xi_o P_{out}(A+B,\gamma,w) ) + J(\mathbf{0}) \nonumber\\
J(\mathbf{0})&=& \sum_{k=1}^{A+1} (1-\theta)^{k-1}\theta \mathbb{E}_W \min_{\gamma \in \mathcal{S}} (\gamma+ \xi_o P_{out} (k,\gamma,W))\nonumber\\
&& +(1-\theta)^{A+1}\mathbb{E}_W J(A+1,W)\label{eqn:bellman_equation_sum_power_sum_outage_no_backtracking}
\end{eqnarray}
\normalsize

The equation for $J(r,w)$ can be understood as follows. 
If the current state is $(r,w), (A+1) \leq r \leq (A+B-1)$ and the line has not ended yet, we can either place a relay 
and use some $\gamma$ power in it, or we may not place. If we place, 
a cost $\min_{\gamma \in \mathcal{S}} (\xi_r+ \gamma+\xi_o P_{out}(r,\gamma,w) )$ is incurred at the current step and the 
cost-to-go from there is $J(\mathbf{0})$ since the decision process regenerates at the point. If we do not 
place a relay, the line will end with probability $\theta$ in the next step, in which case a 
cost $\mathbb{E}_W \min_{\gamma \in \mathcal{S}} (\gamma+ \xi_o P_{out} (r+1,\gamma,W))$ will be 
incurred. If the line does not end in the next step, the next state will be a random state $(r+1,W)$ 
and a mean cost of 
$\mathbb{E}_W J(r+1,W)$ will be incurred. At state $(A+B,w)$ the only possible decision is to place a relay; hence the 
expression follows. 
At state $\mathbf{0}$, the deployment agent starts walking until he encounters the source location or location $(A+1)$; if the line 
ends at step $k, 1 \leq k \leq A+1$ (with probability $(1-\theta)^{k-1}\theta$), a cost of  
$\mathbb{E}_W \min_{\gamma \in \mathcal{S}} (\gamma+ \xi_o P_{out} (k,\gamma,W))$ is incurred. If the line does not end within 
$(A+1)$ steps (this event has probability $(1-\theta)^{A+1}$), the next state will be a random state $(A+1,W)$.

\subsection{Value Iteration}\label{subsec:value_iteration_sum_power_sum_outage_no_backtracking}

The value iteration for (\ref{eqn:sum_power_discounted_no_backtracking}) is given by, for all $(A+1) \leq r \leq (A+B-1)$:

\footnotesize
\begin{eqnarray}
 J^{(k+1)}(r,w)&=&\min \bigg\{ \min_{\gamma \in \mathcal{S}} ( \gamma+\xi_o P_{out}(r,\gamma,w) ) +\xi_r+ J^{(k)}(\mathbf{0}), \nonumber\\
&& \theta \mathbb{E}_W \min_{\gamma \in \mathcal{S}} (\gamma+ \xi_o  P_{out} (r+1,\gamma,W)) \nonumber\\
&& + (1-\theta)\mathbb{E}_W J^{(k)}(r+1,W)  \bigg\},\nonumber\\
J^{(k+1)}(A+B, w)&=&\min_{\gamma \in \mathcal{S}} (\gamma+\xi_o P_{out}(A+B,\gamma,w)+\xi_r ) \nonumber\\
&& + J^{(k)}(\mathbf{0}) \nonumber\\
J^{(k+1)}(\mathbf{0})&=& \sum_{k=1}^{A+1} (1-\theta)^{k-1}\theta \mathbb{E}_W \min_{\gamma \in \mathcal{S}} (\gamma+ \xi_o P_{out} (k,\gamma,W))\nonumber\\
&& +(1-\theta)^{A+1}\mathbb{E}_W J^{(k)}(A+1,W)\label{eqn:value_iteration_sum_power_sum_outage_no_backtracking}
\end{eqnarray}
\normalsize
with $J^{(0)}(r,w):=0$ for all $r,w$ and $J^{(0)}(\mathbf{0}):=0$. 

\begin{lem}\label{lemma:value_iteration_sum_power_sum_outage_no_backtracking}
 The value iteration (\ref{eqn:value_iteration_sum_power_sum_outage_no_backtracking}) provides a nondecreasing
sequence of iterates that converges to the optimal value function, i.e., $J^{(k)}(r, w) \uparrow J(r, w)$ for all $r,w$ and 
$J^{(k)}(\mathbf{0}) \uparrow J(\mathbf{0})$.
\end{lem}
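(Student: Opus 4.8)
The plan is to combine the standard monotone-convergence argument for positive-cost dynamic programming with the finiteness of the optimal value. Write the right-hand side of (\ref{eqn:value_iteration_sum_power_sum_outage_no_backtracking}) as an operator $T$ acting on a value vector $V=(V(\mathbf{0}),\{V(r,w)\}_{r,w})$, so that the iteration reads $J^{(k+1)}=TJ^{(k)}$ with $J^{(0)}\equiv 0$, and note that $T$ is \emph{monotone}: if $V\le V'$ componentwise then $TV\le TV'$, because $T$ is built only from additions of nonnegative constants, minima over the finite set $\mathcal{S}$, minima of two terms, and expectations over $W$, each of which preserves the pointwise order.

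First I would prove monotonicity of the iterates by induction on $k$. For the base case, $J^{(1)}(r,w)\ge 0=J^{(0)}(r,w)$ and $J^{(1)}(\mathbf{0})\ge 0=J^{(0)}(\mathbf{0})$ because every single-stage cost appearing in $T$ (the power $\gamma$, the outage penalty $\xi_o P_{out}$, and the relay cost $\xi_r$) is nonnegative. For the inductive step, assuming $J^{(k)}\ge J^{(k-1)}$ componentwise, monotonicity of $T$ gives $J^{(k+1)}=TJ^{(k)}\ge TJ^{(k-1)}=J^{(k)}$. Hence $\{J^{(k)}\}$ is nondecreasing in $k$ at every state.

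Next I would show the iterates are bounded above by the optimal value $J$, which is itself finite. Finiteness of $J$ follows by exhibiting one admissible policy of finite expected cost --- for instance, the stationary policy that always places the next relay at distance $A+1$ and operates it at the maximum power $P_M$; under the geometric law for $L$ this deploys a geometrically distributed number of relays, so the expected sum of $\Gamma$, of $\xi_o P_{out}$ (each term at most $\xi_o$), and of $\xi_r N$ is finite. Since $J$ satisfies the Bellman equation $J=TJ$ (Proposition~$3.1.1$ of \cite{bertsekas07dynamic-programming-optimal-control-2}), another induction using $J^{(0)}=0\le J$ and monotonicity of $T$ gives $J^{(k)}=T^kJ^{(0)}\le T^kJ=J$ for all $k$. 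A nondecreasing sequence bounded above converges, so $J^{(k)}(r,w)\uparrow J^{(\infty)}(r,w)\le J(r,w)$ and similarly at $\mathbf{0}$.

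Finally I would identify the limit with $J$. Because the state and action spaces are finite, $T$ is continuous (a finite combination of $\min$ operations and finite convex combinations), so passing to the limit in $J^{(k+1)}=TJ^{(k)}$ yields $J^{(\infty)}=TJ^{(\infty)}$; that is, $J^{(\infty)}$ solves the Bellman equation. The convergence of value iteration from $J^{(0)}\equiv 0$ to the optimal cost under assumption P (already verified in the text; see Chapter~$3$ of \cite{bertsekas07dynamic-programming-optimal-control-2}) then gives $J^{(\infty)}=J$; alternatively, one checks directly that $J$ is the \emph{smallest} nonnegative solution of $J=TJ$, so $J\le J^{(\infty)}$, which together with $J^{(\infty)}\le J$ closes the argument. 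The only delicate point is checking that the abstract hypotheses --- nonnegativity of the single-stage costs, the regeneration structure at $\mathbf{0}$, and the well-posedness of the total-cost criterion with a random number of stages --- line up with the cited results; once assumption P is in force, the rest is routine.
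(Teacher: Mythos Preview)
Your argument is correct and is essentially the same approach as the paper's, only spelled out in full: the paper simply observes that assumption~P of \cite{bertsekas07dynamic-programming-optimal-control-2} holds (nonnegative stage costs, finite state and action spaces) and invokes Propositions~3.1.5 and~3.1.6 of that reference, which together give exactly the monotone convergence $J^{(k)}\uparrow J$ that you establish by hand via monotonicity of $T$, the induction $J^{(k)}\le J$, and identification of the limit as a fixed point. One small remark: in your final paragraph, the sentence appealing to ``convergence of value iteration under assumption~P'' is the very statement being proved, so rely on your alternative route (that $J$ is the minimal nonnegative fixed point of $T$, hence $J\le J^{(\infty)}\le J$) to close the argument cleanly.
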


\begin{proof}
 See Appendix~\ref{appendix:sum_power_sum_outage_no_backtracking_discounted}.
\end{proof}

\subsection{Policy Structure}\label{subsec:policy_structure_sum_power_sum_outage_no_backtracking}

\begin{lem}\label{lemma:value_function_properties_sum_power_sum_outage_no_backtracking}
 $J(r,w)$ is increasing in $r$, $\xi_o$ and $\xi_r$, decreasing in $w$, and jointly concave in 
$\xi_o$ and $\xi_r$. $J(\mathbf{0})$ is increasing and jointly concave in $\xi_o$ and $\xi_r$.
\end{lem}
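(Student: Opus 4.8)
The plan is to separate the statement into two groups of claims and attack each with a different tool: the monotonicity in $r$ and in $w$ will be obtained by induction along the value iteration of Lemma~\ref{lemma:value_iteration_sum_power_sum_outage_no_backtracking}, while the monotonicity and joint concavity in $(\xi_o,\xi_r)$ will follow from writing $J$ (and $J(\mathbf{0})$) as a pointwise infimum of affine functions of $(\xi_o,\xi_r)$.

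For the monotonicity in $r$ and $w$, I would prove by induction on $k$ that every iterate $J^{(k)}(r,w)$ is nondecreasing in $r$ over $\{A+1,\dots,A+B\}$ and nonincreasing in $w$, and that $J^{(k)}(\mathbf 0)\ge 0$; the base case $J^{(0)}\equiv 0$ is trivial. For the inductive step, write the right-hand side of (\ref{eqn:value_iteration_sum_power_sum_outage_no_backtracking}) at a state $(r,w)$ with $r\le A+B-1$ as $\min\{g_k(r,w),h_k(r)\}$, where $g_k(r,w)=\min_{\gamma\in\mathcal S}\big(\gamma+\xi_o P_{out}(r,\gamma,w)\big)+\xi_r+J^{(k)}(\mathbf 0)$ is the ``place'' value and $h_k(r)=\theta\,\mathbb E_W\min_{\gamma\in\mathcal S}\big(\gamma+\xi_o P_{out}(r+1,\gamma,W)\big)+(1-\theta)\mathbb E_W J^{(k)}(r+1,W)$ is the ``do not place'' value. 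Since $P_{out}(r,\gamma,w)$ is increasing in $r$ and decreasing in $w$ with $\xi_o\ge 0$, and since $\min_{\gamma\in\mathcal S}(\cdot)$ over the finite set $\mathcal S$ and $\mathbb E_W(\cdot)$ both preserve monotonicity in $r$ and in $w$, the induction hypothesis on $J^{(k)}$ makes $g_k$ nondecreasing in $r$ and nonincreasing in $w$, and $h_k$ nondecreasing in $r$ and independent of $w$. Using $\min\{a,b\}\le\min\{a',b'\}$ whenever $a\le a'$ and $b\le b'$ gives the inductive monotonicities for interior $r$, and $J^{(k+1)}(r,w)\le g_k(r,w)\le g_k(A+B,w)=J^{(k+1)}(A+B,w)$ takes care of the boundary step $r=A+B-1\to A+B$. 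Letting $k\to\infty$ and invoking $J^{(k)}\uparrow J$ transfers both monotonicities to $J$, since monotonicity in a coordinate is preserved under pointwise limits.

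For the dependence on $(\xi_o,\xi_r)$, fix any placement policy $\pi\in\Pi$. Under $\pi$ the law of the deployment trajectory — the set of relay locations, the transmit powers chosen, and the number $N$ of relays — depends only on the geometric length of the line and on the shadowing realizations, not on the cost coefficients, which enter solely through the objective. Hence
\[
\mathbb E_\pi\Big(\textstyle\sum_{i=1}^{N+1}\Gamma^{(i,i-1)}+\xi_o\sum_{i=1}^{N+1}P_{out}^{(i,i-1)}+\xi_r N\Big)=a_\pi+\xi_o\,b_\pi+\xi_r\,c_\pi,
\]
where $a_\pi=\mathbb E_\pi\sum_i\Gamma^{(i,i-1)}\ge 0$, $b_\pi=\mathbb E_\pi\sum_i P_{out}^{(i,i-1)}\ge 0$, $c_\pi=\mathbb E_\pi N\ge 0$ are independent of $(\xi_o,\xi_r)$ and finite (finiteness follows from $\mathbb E N<\infty$ and the bounded power set $\mathcal S$). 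Therefore $J(\cdot)=\inf_{\pi\in\Pi}\big(a_\pi+\xi_o b_\pi+\xi_r c_\pi\big)$ is a pointwise infimum of affine functions of $(\xi_o,\xi_r)$ with nonnegative slopes, hence jointly concave in $(\xi_o,\xi_r)$ and nondecreasing in each of $\xi_o$ and $\xi_r$; reading the same representation at the regeneration state yields the identical conclusions for $J(\mathbf 0)$.

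The bookkeeping in the induction is routine; the one point that genuinely needs care is the concavity claim, because the value-iteration operator contains a minimum over actions and a minimum of concave functions is not concave in general. This is precisely why I would not attempt to propagate concavity through the iteration and instead rely on the global ``infimum of affine functions'' representation — this observation is the crux of the proof.
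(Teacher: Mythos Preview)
Your argument is correct, and for the monotonicity in $r$ and $w$ it is essentially the paper's argument: induct along the value iteration, use that $P_{out}(r,\gamma,w)$ is increasing in $r$ and decreasing in $w$, and pass to the limit via $J^{(k)}\uparrow J$.

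For the dependence on $(\xi_o,\xi_r)$ you take a genuinely different route from the paper, and your stated reason for doing so is mistaken. You write that ``a minimum of concave functions is not concave in general''; this is false. The pointwise infimum of any family of concave functions is concave (dually to the fact that a pointwise supremum of convex functions is convex): if each $f_\alpha$ is concave then
\[
\inf_\alpha f_\alpha(\lambda x+(1-\lambda)y)\ \ge\ \inf_\alpha\big[\lambda f_\alpha(x)+(1-\lambda)f_\alpha(y)\big]\ \ge\ \lambda\inf_\alpha f_\alpha(x)+(1-\lambda)\inf_\alpha f_\alpha(y).
\]
You may be thinking of the (true) statement that a minimum of convex functions need not be convex. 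The paper exploits precisely the concave case: since the single-stage costs are affine in $(\xi_o,\xi_r)$ and each value-iteration update is a pointwise minimum over finitely many actions of functions that are, by the induction hypothesis, jointly concave and increasing in $(\xi_o,\xi_r)$, the iterate $J^{(k+1)}$ inherits these properties, and they pass to the limit $J$. Your ``infimum over policies of affine functions'' representation is a perfectly valid alternative and has the virtue of yielding concavity and monotonicity in one stroke without tracking iterates; the paper's approach has the virtue of handling all asserted properties (in $r$, $w$, $\xi_o$, $\xi_r$) within a single induction.
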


\begin{proof}
 See Appendix~\ref{appendix:sum_power_sum_outage_no_backtracking_discounted}.
\end{proof}

\begin{thm}\label{theorem:policy_structure_sum_power_sum_outage_no_backtracking}
 At state $(r,w)$ ($A+1 \leq r \leq A+B-1$), the optimal decision is to place a relay iff 
$\min_{\gamma \in \mathcal{S}} (\gamma+\xi_o P_{out}(r,\gamma,w) ) \leq c_{th}(r)$ where $c_{th}(r)$ is a 
threshold increasing in $r$. In this case if the decision is to place a relay, the optimal power to be selected is given by 
$\argmin_{\gamma \in \mathcal{S}} \bigg(\gamma+\xi_o P_{out}(r,\gamma,w)\bigg)$. At state $(A+B,w)$, the 
optimal power to be selected is $\argmin_{\gamma \in \mathcal{S}} \bigg(\gamma+\xi_o P_{out}(A+B,\gamma,w)\bigg)$.
\end{thm}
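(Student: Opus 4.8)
The plan is to read the threshold structure off the Bellman equation (\ref{eqn:bellman_equation_sum_power_sum_outage_no_backtracking}) directly, the key observation being that the cost of \emph{not} placing a relay is independent of the current shadowing $w$. Write $g(r,w):=\min_{\gamma\in\mathcal S}\big(\gamma+\xi_o P_{out}(r,\gamma,w)\big)$ for the least single-hop cost over a link of length $r$ with shadowing $w$, and put
\[
C(r):=\theta\,\mathbb E_W\, g(r+1,W)+(1-\theta)\,\mathbb E_W\, J(r+1,W),\qquad A+1\le r\le A+B-1 .
\]
Then (\ref{eqn:bellman_equation_sum_power_sum_outage_no_backtracking}) says $J(r,w)=\min\{\,g(r,w)+\xi_r+J(\mathbf 0),\ C(r)\,\}$, the first term being the cost of placing a relay now (after which the process regenerates at $\mathbf 0$) and the second the cost of walking on. Since $C(r)$ does not depend on $w$, placing is optimal precisely when $g(r,w)+\xi_r+J(\mathbf 0)\le C(r)$, i.e. when $g(r,w)\le c_{th}(r)$ with $c_{th}(r):=C(r)-\xi_r-J(\mathbf 0)$ (ties resolved in favour of placing). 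This is exactly the asserted rule. Moreover the only way $\gamma$ enters the ``place'' branch is through $\gamma+\xi_o P_{out}(r,\gamma,w)$, so when a relay is placed the optimal power is $\argmin_{\gamma\in\mathcal S}\big(\gamma+\xi_o P_{out}(r,\gamma,w)\big)$; and at $r=A+B$, where the only feasible action is to place, the same $\argmin$ gives the power.

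It remains to show $c_{th}(r)$ is increasing on $\{A+1,\dots,A+B-1\}$, equivalently that $C(r)$ is. Two monotonicities suffice. First, $P_{out}(r,\gamma,w)$ is increasing in $r$ for each fixed $\gamma,w$ (channel model), so $g(r,w)$, being a pointwise minimum over $\gamma$ of functions increasing in $r$, is increasing in $r$; hence $\mathbb E_W\, g(r+1,W)$ is increasing in $r$. Second, by Lemma~\ref{lemma:value_function_properties_sum_power_sum_outage_no_backtracking}, $J(r,w)$ is increasing in $r$, so $\mathbb E_W\, J(r+1,W)$ is increasing in $r$. Therefore, for $A+1\le r\le A+B-2$,
\[
c_{th}(r+1)-c_{th}(r)=\theta\,\mathbb E_W\big(g(r+2,W)-g(r+1,W)\big)+(1-\theta)\,\mathbb E_W\big(J(r+2,W)-J(r+1,W)\big)\ge 0 ,
\]
which proves the monotonicity.

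The only ingredient that is not elementary is the monotonicity of $J(\cdot,w)$ in $r$, which is Lemma~\ref{lemma:value_function_properties_sum_power_sum_outage_no_backtracking} (established in the appendix, e.g.\ by induction along the value iteration (\ref{eqn:value_iteration_sum_power_sum_outage_no_backtracking}) using Lemma~\ref{lemma:value_iteration_sum_power_sum_outage_no_backtracking}); once that is in hand the theorem follows by the bookkeeping above. So the ``hard part'' has really been absorbed into the preceding lemma, and the point of the argument here is simply to notice the $w$-free structure of the continuation cost that yields the threshold and the separable form of the placement cost that yields the power rule.
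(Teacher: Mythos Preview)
Your proof is correct and follows essentially the same route as the paper's: read the threshold off the Bellman equation by comparing the placement branch $g(r,w)+\xi_r+J(\mathbf 0)$ against the $w$-free continuation cost, define $c_{th}(r)$ as their difference, and get monotonicity of $c_{th}(r)$ from the monotonicity of $P_{out}(\cdot,\gamma,w)$ and of $J(\cdot,w)$ in $r$ (the latter via Lemma~\ref{lemma:value_function_properties_sum_power_sum_outage_no_backtracking}). The only addition in the paper is an explicit citation of Proposition~3.1.3 of \cite{bertsekas07dynamic-programming-optimal-control-2} to justify that the minimizer in the Bellman equation indeed yields an optimal stationary policy; you take this as read.
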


\begin{proof}
 See Appendix~\ref{appendix:sum_power_sum_outage_no_backtracking_discounted}.
\end{proof}

{\em Remark:} $c_{th}(r)$ captures the effect of the tradeoff that if we place relays far apart, the cost due to outage 
increases, but the cost of placing the relays decreases. $c_{th}(r)$ is increasing in $r$ because 
$P_{out}(r,\gamma,w)$ is increasing in $r$ for any $\gamma,w$.

Note that the threshold $c_{th}(r)$ does not depend on $w$, due to the fact that 
shadowing is i.i.d across links.\footnote{Though the length of the line is assumed to be geometrically 
distributed, similar approach as in this paper can be used to analyze the case where the length of the line is constant and known. 
The only difference will be that the optimal policy will be nonstationary.}

\subsection{Computation of the Optimal Policy}\label{subsec:policy_computation_sum_power_sum_outage_no_backtracking}

Let us write $V(r):=\mathbb{E}_W J \left(r, W \right)=\sum_{w \in \mathcal{W}} p_W(w) J \left(r, w \right)$, i.e., 
for all $r \in \{A+1,A+2,\cdots,A+B\}$, and  
$V(\mathbf{0}):=J(\mathbf{0})$. Also, for each stage $k \geq 0$ of the value iteration 
(\ref{eqn:value_iteration_sum_power_sum_outage_no_backtracking}), 
define $V^{(k)}(r):=\mathbb{E}_W J^{(k)}\left(r, W \right)$ and 
$V^{(k)}(\mathbf{0}):=J^{(k)}(\mathbf{0})$.

Observe that from the value iteration (\ref{eqn:value_iteration_sum_power_sum_outage_no_backtracking}), we obtain 
for all  $(A+1) \leq r \leq (A+B-1)$:

\footnotesize
\begin{eqnarray}
 V^{(k+1)}(r)&=&\sum_{w \in  \mathcal{W}} p_W(w) \min \bigg\{ \min_{\gamma \in \mathcal{S}} \bigg( \gamma + \nonumber\\
&& \xi_o P_{out}(r,\gamma,w) +\xi_r \bigg) + V^{(k)}(\mathbf{0}), \nonumber\\
&& \theta \mathbb{E}_W \min_{\gamma \in \mathcal{S}} (\gamma+ \xi_o  P_{out} (r+1,\gamma,W)) \nonumber\\
&& + (1-\theta) V^{(k)}(r+1)  \bigg\},  \nonumber\\
V^{(k+1)}(A+B)&=& \sum_{w \in \mathcal{W}} p_W(w) \min_{\gamma \in \mathcal{S}} \bigg(\gamma + \nonumber\\
&& +\xi_o P_{out}(A+B,\gamma,w) +\xi_r \bigg) + V^{(k)}(\mathbf{0}) \nonumber\\
V^{(k+1)}(\mathbf{0})&=& \sum_{k=1}^{A+1} (1-\theta)^{k-1} \theta \mathbb{E}_W \min_{\gamma \in \mathcal{S}} \bigg(\gamma+ \xi_o P_{out} (k,\gamma,W)\bigg)\nonumber\\
&& +(1-\theta)^{A+1}V^{(k)}(A+1)\label{eqn:function_iteration_sum_power_sum_outage_no_backtracking}
\end{eqnarray}
\normalsize
with $V^{(0)}(r):=0$ for all $A+1 \leq r \leq A+B$ and $V^{(0)}(\mathbf{0}):=0$. 

Since $J^{(k)}(r,w) \uparrow J(r,w)$ for each $r$, $w$ and 
$J^{(k)}(\mathbf{0}) \uparrow J(\mathbf{0})$ as $k \uparrow \infty$, we can argue that 
$V^{(k)}(r)\uparrow \mathbb{E}_W J(r, W)$ for all 
$r$ (by Monotone Convergence Theorem) and  
$V^{(k)}(\mathbf{0})\uparrow J(\mathbf{0})$. Thus, 
$V^{(k)}(r)\uparrow V(r)$ and $V^{(k)}(\mathbf{0}) \uparrow V(\mathbf{0})$. 
 Hence, by the function iteration (\ref{eqn:function_iteration_sum_power_sum_outage_no_backtracking}), we obtain $V(\mathbf{0})$ and 
$V(r)$ for all $r \geq 1$. Then, from (\ref{eqn:c_th_r_expression}), 
we can compute $c_{th}(r)$. Thus, for this iteration, we need not keep track of the cost-to-go values 
$J^{(k)}(r, w)$ for each state $(r,w)$, at each stage $k$; we simply need to keep track of $V^{(k)}(\mathbf{0})$ and 
$V^{(k)}(r)$ for each $r$.

\begin{figure*}[t!]
\footnotesize
 \begin{eqnarray}
 J(r,w,\gamma_{max})
&=&\min \bigg\{ \min_{\gamma \in \mathcal{S}} \bigg( \xi_o P_{out}(r,\gamma,w)+\xi_r 
+ J(\mathbf{0};\max\{\gamma,\gamma_{max}\}) \bigg), 
 \theta \mathbb{E}_W \min_{\gamma \in \mathcal{S}} \bigg(\max\{\gamma,\gamma_{max}\} + \xi_o  P_{out} (r+1,\gamma,W)\bigg) \nonumber\\
&& + (1-\theta)\mathbb{E}_W J(r+1,W,\gamma_{max})  \bigg\}, \, \forall (A+1) \leq r \leq (A+B-1) \nonumber\\
J(A+B, w, \gamma_{max})
&=&\min_{\gamma \in \mathcal{S}} \bigg(\xi_o P_{out}(A+B,\gamma,w)+\xi_r 
 + J(\mathbf{0};\max\{\gamma,\gamma_{max}\})\bigg) \nonumber\\
 J(\mathbf{0};\gamma_{max})
&=& \sum_{k=1}^{A+1}(1-\theta)^{k-1}\theta \mathbb{E}_W \min_{\gamma \in \mathcal{S}} \bigg(\max\{\gamma,\gamma_{max}\}+ 
 \xi_o P_{out} (k,\gamma,W)\bigg) +(1-\theta)^{A+1}\mathbb{E}_W J(A+1,W,\gamma_{max}) \label{eqn:bellman_equation_max_power_sum_outage_no_backtracking}
\end{eqnarray}
\normalsize
\hrule
\end{figure*}

\section{Impromptu Deployment for Geometrically Distributed Length without Backtracking: Max Power and 
Sum Outage Objective}\label{sec:max_power_sum_outage_no_backtracking_discounted}

\begin{figure*}[t!]
\footnotesize
 \begin{eqnarray}
 J^{(k+1)}(r,w,\gamma_{max})
&=&\min \bigg\{ \min_{\gamma \in \mathcal{S}} \bigg( \xi_o P_{out}(r,\gamma,w)+\xi_r 
+ J^{(k)}(\mathbf{0};\max\{\gamma,\gamma_{max}\}) \bigg), 
 \theta \mathbb{E}_W \min_{\gamma \in \mathcal{S}} \bigg(\max\{\gamma,\gamma_{max}\}  \nonumber\\
&& + \xi_o  P_{out} (r+1,\gamma,W)\bigg) + (1-\theta)\mathbb{E}_W J^{(k)}(r+1,W,\gamma_{max})  \bigg\}, \, \forall (A+1) \leq r \leq (A+B-1) \nonumber\\
J^{(k+1)}(A+B, w, \gamma_{max})
&=&\min_{\gamma \in \mathcal{S}} \bigg(\xi_o P_{out}(A+B,\gamma,w)+\xi_r 
 + J^{(k)}(\mathbf{0};\max\{\gamma,\gamma_{max}\})\bigg) \nonumber\\
J^{(k+1)}(\mathbf{0};\gamma_{max})
&=& \sum_{k=1}^{A+1}(1-\theta)^{k-1}\theta \mathbb{E}_W \min_{\gamma \in \mathcal{S}} \bigg(\max\{\gamma,\gamma_{max}\}+ 
 \xi_o P_{out} (k,\gamma,W)\bigg) +(1-\theta)^{A+1}\mathbb{E}_W J^{(k)}(A+1,W,\gamma_{max})\label{eqn:value_iteration_max_power_sum_outage_no_backtracking}
\end{eqnarray}
\normalsize
\hrule
\end{figure*}

\begin{figure*}[t!]
\small
 \begin{eqnarray}
 V^{(k+1)}(r,\gamma_{max})
&=& \sum_{w \in \mathcal{W}} p_{W}(w) \min \bigg\{ \min_{\gamma \in \mathcal{S}} \bigg( \xi_o P_{out}(r,\gamma,w)+\xi_r 
+ V^{(k)}(\mathbf{0};\max\{\gamma,\gamma_{max}\}) \bigg), 
 \theta \mathbb{E}_W \min_{\gamma \in \mathcal{S}} \bigg(\max\{\gamma,\gamma_{max}\}  \nonumber\\
&& + \xi_o  P_{out} (r+1,\gamma,W)\bigg) + (1-\theta)V^{(k)}(r+1,\gamma_{max})  \bigg\}, \, \forall (A+1) \leq r \leq (A+B-1) \nonumber\\
V^{(k+1)}(A+B, \gamma_{max})
&=& \sum_{w \in \mathcal{W}} p_{W}(w) \min_{\gamma \in \mathcal{S}} \bigg(\xi_o P_{out}(A+B,\gamma,w)+\xi_r 
 + V^{(k)}(\mathbf{0};\max\{\gamma,\gamma_{max}\})\bigg) \nonumber\\
 V^{(k+1)}(\mathbf{0};\gamma_{max})
&=& \sum_{k=1}^{A+1}(1-\theta)^{k-1}\theta \mathbb{E}_W \min_{\gamma \in \mathcal{S}} \bigg(\max\{\gamma,\gamma_{max}\}+ 
 \xi_o P_{out} (k,\gamma,W)\bigg) +(1-\theta)^{A+1}V^{(k)}(A+1,\gamma_{max}) \label{eqn:function_iteration_max_power_sum_outage_no_backtracking}
\end{eqnarray}
\hrule
\end{figure*}

\subsection{Problem Formulation}\label{subsec:problem_formulation_max_power_sum_outage_no_backtracking}

Here we seek to solve the following problem without backtracking, for a line having geometrically distributed length:

\begin{equation}
 \min_{\pi \in \Pi} \mathbb{E}_{\pi} \bigg( \max_{i \in \{1,2,\cdots,N+1\}} \Gamma^{(i,i-1)}+ 
\xi_o \sum_{i=1}^{N+1}P_{out}^{(i,i-1)} +\xi_r N \bigg)
\label{eqn:max_power_discounted_no_backtracking}
\end{equation}

We formulate (\ref{eqn:max_power_discounted_no_backtracking}) as an MDP 
with $(r,w, \gamma_{max})$ as a typical state, where $\gamma_{max}$ is the maximum transmit power 
used by already deployed nodes. 
At state $(r,w, \gamma_{max}), (A+1) \leq r \leq (A+B-1), w \in \mathcal{W}$, 
the action is either to 
place a relay and select some transmit power $\gamma \in \mathcal{S}$, or not to place. When $r=A+B$, we must place a relay. 
The state of the system at 
a point where a relay has just been placed and the maximum power used in all previous links is $\gamma_{\max}$, 
is denoted by $(\mathbf{0};\gamma_{max})$. The state at the sink is $(\mathbf{0};\gamma_{max})$ with $\gamma_{max}=0$. 
Hence, in our current problem formulation, $\gamma_{max}$ can take values from the set $\{0\} \cup \mathcal{S}$. 
At state $(\mathbf{0};\gamma_{max})$, the only possible action is to move to the next step. 
When the source is placed, the process terminates.

\subsection{Bellman Equation}\label{subsec:bellman_equation_max_power_sum_outage_no_backtracking}

Unlike problem (\ref{eqn:sum_power_discounted_no_backtracking}), here the cost of the maximum power 
over all links is incurred when the source is placed. However, the outage and relay costs are incurred whenever a node is placed. 

The optimal value function $J(\cdot)$
satisfies the Bellman equation given by (\ref{eqn:bellman_equation_max_power_sum_outage_no_backtracking}). 
This equation  can be understood as follows. 
If the current state is $(r,w,\gamma_{max}), (A+1) \leq r \leq (A+B-1)$ and the line has not ended yet, 
we can either place a relay 
and use some $\gamma$ power in it, or we may not place. If we place and use power $\gamma$, 
a cost $(\xi_o P_{out}(r,\gamma,w)+\xi_r )$ is incurred at the current step and the 
state becomes $(\mathbf{0};\max\{\gamma,\gamma_{max}\})$. If we do not 
place a relay, the line will end with probability $\theta$ in the next step, in which case a 
cost $\mathbb{E}_W \min_{\gamma \in \mathcal{S}} \bigg(\max\{\gamma,\gamma_{max}\} + \xi_o  P_{out} (r+1,\gamma,W)\bigg)$ will be 
incurred. If the line does not end in the next step, the next state will be a random state $(r+1,W, \gamma_{max})$ 
and a mean cost of 
$\mathbb{E}_W J(r+1,W,\gamma_{max})$ will be incurred. At state $(A+B,w,\gamma_{max})$ 
the only possible decision is to place a relay; hence the 
expression follows. 
At state $(\mathbf{0};\gamma_{max})$, the deployment agent explores at least upto the $(A+1)$-st step. If the line ends at a distance 
of $k$-th step ($1 \leq k \leq A+1$) (with probability $(1-\theta)^{k-1}\theta$), 
a cost $\mathbb{E}_W \min_{\gamma \in \mathcal{S}} \bigg(\max\{\gamma,\gamma_{max}\}+ 
 \xi_o P_{out} (k,\gamma,W)\bigg)$ is incurred. If the line does not end in 
$(A+1)$ steps (with probability $(1-\theta)^{A+1}$), the next state will be a random state $(A+1,W,\gamma_{max})$.

\subsection{Value Iteration}\label{subsec:value_iteration_max_power_sum_outage_no_backtracking}

Starting with $J^{(0)}(r,w,\gamma_{max})=0$ and $J^{(0)}(\mathbf{0};\gamma_{max})=0$ for all $r,w,\gamma_{max}$, 
the value iteration for problem (\ref{eqn:max_power_discounted_no_backtracking}) is given by 
(\ref{eqn:value_iteration_max_power_sum_outage_no_backtracking}).

\begin{lem}\label{lemma:value_iteration_max_power_sum_outage_no_backtracking}
 The value iteration (\ref{eqn:value_iteration_max_power_sum_outage_no_backtracking}) provides a nondecreasing
sequence of iterates that converges to the optimal value function, i.e., 
$J^{(k)}(r, w, \gamma_{max}) \uparrow J(r, w, \gamma_{max})$ 
and $J^{(k)}(\mathbf{0};\gamma_{max}) \uparrow J(\mathbf{0};\gamma_{max})$.
\end{lem}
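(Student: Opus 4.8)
The plan is to follow essentially verbatim the argument used for Lemma~\ref{lemma:value_iteration_sum_power_sum_outage_no_backtracking}, since the MDP here differs from the one in Section~\ref{sec:sum_power_sum_outage_no_backtracking_discounted} only in that the state carries an extra finite-valued component $\gamma_{max}\in\{0\}\cup\mathcal{S}$ and the max-power cost is collected once at the terminating step rather than hop-by-hop. First I would record that this is a total-cost MDP with a finite state space $\{A+1,\dots,A+B\}\times\mathcal{W}\times(\{0\}\cup\mathcal{S})$ together with the regeneration states $(\mathbf{0};\gamma_{max})$, a finite action space (place with some $\gamma\in\mathcal{S}$, or do not place), and single-stage costs $\xi_o P_{out}(\cdot)\ge 0$, $\xi_r\ge 0$, and, at termination, $\max\{\gamma,\gamma_{max}\}\ge 0$ — all nonnegative. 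Hence Assumption~P of Chapter~3 of \cite{bertsekas07dynamic-programming-optimal-control-2} holds, so value iteration started from $J^{(0)}\equiv 0$ converges monotonically from below to the optimal value function (Proposition~3.1.1 and the associated convergence results there), which yields both assertions at once.

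For a self-contained version I would argue by induction. Let $T$ denote the operator defined by the right-hand side of (\ref{eqn:value_iteration_max_power_sum_outage_no_backtracking}), so that $J^{(k+1)}=TJ^{(k)}$ and, by (\ref{eqn:bellman_equation_max_power_sum_outage_no_backtracking}), $J=TJ$. Two properties of $T$ are immediate: (i) \emph{monotonicity} — if $\tilde J\le\hat J$ componentwise then $T\tilde J\le T\hat J$, because $T$ is built from minima over actions, nonnegative-weighted sums, and the expectation $\mathbb{E}_W$ applied to its argument, each of which preserves the componentwise order; and (ii) $TU\ge 0$ whenever $U\ge 0$, since every term on the right-hand side of (\ref{eqn:value_iteration_max_power_sum_outage_no_backtracking}) is then a sum of nonnegative quantities. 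From (ii), $J^{(1)}=T\mathbf{0}\ge\mathbf{0}=J^{(0)}$; inductively, $J^{(k)}\ge J^{(k-1)}$ combined with monotonicity gives $J^{(k+1)}=TJ^{(k)}\ge TJ^{(k-1)}=J^{(k)}$, so $\{J^{(k)}\}$ is nondecreasing. Likewise $J^{(k)}\le J$ for all $k$: $J^{(0)}=\mathbf{0}\le J$ since the optimal cost is nonnegative, and $J^{(k)}\le J\Rightarrow J^{(k+1)}=TJ^{(k)}\le TJ=J$. Being nondecreasing and bounded above by $J$, the sequence converges pointwise to some $J^{(\infty)}\le J$.

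It then remains to identify $J^{(\infty)}=J$. Because the state and action spaces are finite, passing to the limit in $J^{(k+1)}=TJ^{(k)}$ (using monotone convergence inside the finitely many expectations $\mathbb{E}_W$) gives $J^{(\infty)}=TJ^{(\infty)}$, so $J^{(\infty)}$ solves the Bellman equation (\ref{eqn:bellman_equation_max_power_sum_outage_no_backtracking}); invoking the optimality characterization for the positive-cost model — value iteration from $\mathbf{0}$ converges to the least nonnegative fixed point of $T$, which is $J$ (Proposition~3.1.1 of \cite{bertsekas07dynamic-programming-optimal-control-2}) — we conclude $J^{(\infty)}=J$. The only place this proof departs from that of Lemma~\ref{lemma:value_iteration_sum_power_sum_outage_no_backtracking}, and the one point needing a word of care, is checking that the lifted state space including $\gamma_{max}$ is genuinely finite and that the once-incurred max-power cost still fits the nonnegative-cost framework; both are immediate since $\gamma_{max}$ ranges over the finite set $\{0\}\cup\mathcal{S}$ and $\max\{\gamma,\gamma_{max}\}\ge 0$. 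Beyond that, the argument is a routine transcription of the one in Appendix~\ref{appendix:sum_power_sum_outage_no_backtracking_discounted}, so I do not expect any genuine obstacle.
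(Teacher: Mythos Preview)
Your proposal is correct and follows essentially the same approach as the paper: the paper's proof simply states that the argument is identical to that of Lemma~\ref{lemma:value_iteration_sum_power_sum_outage_no_backtracking}, which in turn just observes that we have a nonnegative-cost (Assumption~P) total-cost MDP with finite state and action spaces and invokes the value-iteration convergence results from Chapter~3 of \cite{bertsekas07dynamic-programming-optimal-control-2}. Your self-contained monotonicity-plus-bounded-above induction is more detailed than anything the paper supplies, but it is exactly the content of those propositions; the only minor discrepancy is that the paper cites Propositions~3.1.5 and~3.1.6 rather than~3.1.1.
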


\begin{proof}
 Proof follows along the same line of arguments as in Lemma~\ref{lemma:value_iteration_sum_power_sum_outage_no_backtracking}.
\end{proof}

\subsection{Policy Structure}\label{subsec:policy_structure_max_power_sum_outage_no_backtracking}

\begin{lem}\label{lemma:value_function_properties_max_power_sum_outage_no_backtracking}
 $J(r,w,\gamma_{max})$ is increasing in $r$, $\gamma_{max}$, $\xi_o$ and $\xi_r$, decreasing in $w$, and jointly concave in 
$\xi_o$ and $\xi_r$. $J(\mathbf{0};\gamma_{max})$ is increasing and jointly concave in $\xi_o$ and $\xi_r$, and 
increasing in $\gamma_{max}$.
\end{lem}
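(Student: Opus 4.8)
The plan is to prove all the asserted properties first for the finite-stage iterates $J^{(k)}(\cdot)$ generated by the value iteration~(\ref{eqn:value_iteration_max_power_sum_outage_no_backtracking}), by induction on $k$, and then to pass to the limit using Lemma~\ref{lemma:value_iteration_max_power_sum_outage_no_backtracking}, which gives $J^{(k)}\uparrow J$ pointwise; since a pointwise limit of monotone (respectively, concave) functions is monotone (respectively, concave), every property carried by all the $J^{(k)}$ is inherited by $J$. For the base case, $J^{(0)}\equiv 0$ trivially enjoys all the properties. The induction hypothesis will assert \emph{simultaneously} all the listed properties for $J^{(k)}(r,w,\gamma_{max})$ and $J^{(k)}(\mathbf{0};\gamma_{max})$, which is necessary because the monotonicity of $J^{(k+1)}$ in $r$ uses that of $J^{(k)}$ in $r$, the monotonicity in $\gamma_{max}$ uses that of $J^{(k)}(\mathbf{0};\cdot)$ and of $J^{(k)}(r+1,W,\cdot)$, the concavity uses the concavity of $J^{(k)}(\mathbf{0};\cdot)$, and so on.

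For the inductive step I would inspect the right-hand side of~(\ref{eqn:value_iteration_max_power_sum_outage_no_backtracking}) term by term. Monotonicity in $r$: recall from Section~\ref{subsection:channel_model} that $P_{out}(r,\gamma,w)$ is increasing in $r$, so the ``place'' branch is increasing in $r$ (the term $J^{(k)}(\mathbf{0};\max\{\gamma,\gamma_{max}\})$ does not involve $r$), and the ``not place'' branch is increasing in $r$ because $P_{out}(r+1,\gamma,W)$ is increasing and, by hypothesis, $J^{(k)}(\cdot,W,\gamma_{max})$ is increasing in its first argument; a minimum of increasing functions is increasing. At the boundary, $J^{(k+1)}(A+B-1,w,\gamma_{max})$ is bounded above by its own ``place'' branch, which is in turn at most the ``place'' branch at $r=A+B$ (again because $P_{out}$ is increasing in $r$), i.e.\ at most $J^{(k+1)}(A+B,w,\gamma_{max})$. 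Monotonicity in $\gamma_{max}$ follows since $\max\{\gamma,\gamma_{max}\}$ is increasing in $\gamma_{max}$ and $J^{(k)}(\mathbf{0};\cdot)$, $J^{(k)}(r+1,W,\cdot)$ are increasing in $\gamma_{max}$ by hypothesis. The ``decreasing in $w$'' claim holds because $P_{out}(r,\gamma,w)$ is decreasing in $w$ while the ``not place'' branch is an expectation over $W$ and $J^{(k)}(\mathbf{0};\cdot)$ does not depend on $w$, so the min of a decreasing function and a $w$-independent one is decreasing. Monotonicity in $\xi_o$ and $\xi_r$ is immediate: $\xi_o$ enters only with the nonnegative coefficient $P_{out}$, $\xi_r$ with coefficient $1$, and $J^{(k)}$ is increasing in both; minima and expectations preserve monotonicity. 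The $(\mathbf{0};\gamma_{max})$ equation is treated the same way.

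The main work, and the only delicate point I anticipate, is joint concavity in $(\xi_o,\xi_r)$. For each fixed $\gamma$, the summand $\xi_o P_{out}(r,\gamma,w)+\xi_r+J^{(k)}(\mathbf{0};\max\{\gamma,\gamma_{max}\})$ is affine in $(\xi_o,\xi_r)$ plus a term concave in $(\xi_o,\xi_r)$ by hypothesis, hence concave; taking $\min_{\gamma\in\mathcal{S}}$ preserves concavity. Similarly, $\min_{\gamma\in\mathcal{S}}\big(\max\{\gamma,\gamma_{max}\}+\xi_o P_{out}(r+1,\gamma,W)\big)$ is a minimum of functions affine in $(\xi_o,\xi_r)$, hence concave, and $\mathbb{E}_W$ of it remains concave; adding $(1-\theta)\mathbb{E}_W J^{(k)}(r+1,W,\gamma_{max})$, concave by hypothesis, keeps concavity, and the outer $\min\{\cdot,\cdot\}$ of the two concave branches is concave, so $J^{(k+1)}(r,w,\gamma_{max})$ is jointly concave in $(\xi_o,\xi_r)$. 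For $J^{(k+1)}(\mathbf{0};\gamma_{max})$, each term $\mathbb{E}_W\min_{\gamma}\big(\max\{\gamma,\gamma_{max}\}+\xi_o P_{out}(k,\gamma,W)\big)$ is concave (minimum of affine functions, then expectation), and $(1-\theta)^{A+1}\mathbb{E}_W J^{(k+1)}(A+1,W,\gamma_{max})$ is concave by what was just established, so the sum is concave.

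Finally, letting $k\to\infty$ and invoking Lemma~\ref{lemma:value_iteration_max_power_sum_outage_no_backtracking} together with the stability of monotonicity and concavity under pointwise limits yields all the claimed properties for $J(r,w,\gamma_{max})$ and $J(\mathbf{0};\gamma_{max})$, completing the proof.
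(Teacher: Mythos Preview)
Your approach is essentially the same as the paper's: induction on the value-iteration stage $k$, using that $J^{(0)}\equiv 0$ has all the properties, that the one-step update preserves them (since the right-hand side is built from affine single-stage costs, the induction hypothesis, and operations---$\min$, expectation, sum---that preserve monotonicity and concavity), and then passing to the limit via $J^{(k)}\uparrow J$. Your write-up is considerably more explicit than the paper's terse sketch, which is fine. One small slip: in the last step you write $(1-\theta)^{A+1}\mathbb{E}_W J^{(k+1)}(A+1,W,\gamma_{max})$ and justify it ``by what was just established,'' but the value iteration~(\ref{eqn:value_iteration_max_power_sum_outage_no_backtracking}) actually has $J^{(k)}$ there, so the correct justification is simply the induction hypothesis; this does not affect the argument.
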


\begin{proof}
 See Appendix~\ref{appendix:max_power_sum_outage_no_backtracking_discounted}.
\end{proof}

\begin{thm}\label{theorem:policy_structure_max_power_sum_outage_no_backtracking}
 At state $(r,w,\gamma_{max})$ ($A+1 \leq r \leq A+B-1$), the optimal decision is to place a relay iff 
$\min_{\gamma \in \mathcal{S}} \bigg( \xi_o P_{out}(r,\gamma,w)
+ J(\mathbf{0};\max\{\gamma,\gamma_{max}\}) \bigg) \leq c_{th}(r,\gamma_{max})$ where $c_{th}(r,\gamma_{max})$ 
is a threshold function increasing in $r$ and $\gamma_{max}$. 
A relay must be placed at $r=A+B$. If the decision is to place a relay, then the optimal transmit power 
for the new relay is given by $\argmin_{\gamma \in \mathcal{S}} \bigg( \xi_o P_{out}(r,\gamma,w)+\xi_r 
+ J(\mathbf{0};\max\{\gamma,\gamma_{max}\}) \bigg)$.
\end{thm}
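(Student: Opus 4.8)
The plan is to read the two-action structure directly off the Bellman equation (\ref{eqn:bellman_equation_max_power_sum_outage_no_backtracking}) and then obtain the monotonicity of the threshold from the value-function properties already established in Lemma~\ref{lemma:value_function_properties_max_power_sum_outage_no_backtracking}.

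First, for $A+1 \le r \le A+B-1$, observe that in the ``place'' branch of (\ref{eqn:bellman_equation_max_power_sum_outage_no_backtracking}) the relay cost $\xi_r$ does not depend on the chosen power, so the cost of placing equals $\xi_r + \min_{\gamma\in\mathcal{S}}\bigl(\xi_o P_{out}(r,\gamma,w) + J(\mathbf{0};\max\{\gamma,\gamma_{max}\})\bigr)$, while the cost of not placing equals $\theta\,\mathbb{E}_W\min_{\gamma\in\mathcal{S}}\bigl(\max\{\gamma,\gamma_{max}\} + \xi_o P_{out}(r+1,\gamma,W)\bigr) + (1-\theta)\mathbb{E}_W J(r+1,W,\gamma_{max})$. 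Hence the optimal action is to place iff $\min_{\gamma\in\mathcal{S}}\bigl(\xi_o P_{out}(r,\gamma,w) + J(\mathbf{0};\max\{\gamma,\gamma_{max}\})\bigr) \le c_{th}(r,\gamma_{max})$, where I define $c_{th}(r,\gamma_{max})$ to be the not-place cost above minus $\xi_r$. Note that $c_{th}$ depends only on $(r,\gamma_{max})$: on the not-place side the current shadowing $w$ has already been averaged out, which is the same i.i.d.\ shadowing point noted in the remark after Theorem~\ref{theorem:policy_structure_sum_power_sum_outage_no_backtracking}. This is the claimed threshold form. If the decision is to place, the optimal power is, by definition of the ``place'' branch, $\argmin_{\gamma\in\mathcal{S}}\bigl(\xi_o P_{out}(r,\gamma,w)+\xi_r+J(\mathbf{0};\max\{\gamma,\gamma_{max}\})\bigr)$, the additive constant $\xi_r$ being irrelevant to the argmin; the $r=A+B$ case is identical since the problem formulation permits only the ``place'' action there, which also settles the ``a relay must be placed at $r=A+B$'' claim.

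It then remains to check that $c_{th}(r,\gamma_{max})$ is increasing in $r$ and in $\gamma_{max}$, which I would do term by term on the two summands of $c_{th}$. For monotonicity in $r$: $P_{out}(r+1,\gamma,W)$ is increasing in $r$ for each $\gamma$, so each term inside the inner minimum on the not-place side is increasing in $r$, hence so is the minimum and its expectation; and $J(r+1,W,\gamma_{max})$ is increasing in $r$ by Lemma~\ref{lemma:value_function_properties_max_power_sum_outage_no_backtracking}. For monotonicity in $\gamma_{max}$: for each fixed $\gamma$, $\max\{\gamma,\gamma_{max}\}$ is nondecreasing in $\gamma_{max}$, so the inner minimum on the not-place side is nondecreasing in $\gamma_{max}$; and $J(r+1,W,\gamma_{max})$ is increasing in $\gamma_{max}$, again by Lemma~\ref{lemma:value_function_properties_max_power_sum_outage_no_backtracking}. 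Since $\theta$ and $1-\theta$ are nonnegative, the sum of the two monotone terms is monotone, so $c_{th}$ has the asserted monotonicity.

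The argument is essentially bookkeeping on top of Lemma~\ref{lemma:value_function_properties_max_power_sum_outage_no_backtracking}; the only point requiring a little care — and the closest thing to an obstacle — is justifying that the threshold is genuinely independent of $w$. This holds because when the agent does not place at the current step, the current shadowing realization $w$ never enters the future cost; only the fixed distribution of future shadowing does. Everything else follows directly from the Bellman equation and the established monotonicities.
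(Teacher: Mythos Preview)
Your proposal is correct and follows essentially the same approach as the paper: compare the ``place'' and ``not place'' branches of the Bellman equation (\ref{eqn:bellman_equation_max_power_sum_outage_no_backtracking}), define $c_{th}(r,\gamma_{max})$ as the not-place cost minus $\xi_r$, and deduce its monotonicity in $r$ and $\gamma_{max}$ from Lemma~\ref{lemma:value_function_properties_max_power_sum_outage_no_backtracking} and the monotonicity of $P_{out}$. Your term-by-term verification of the monotonicity is somewhat more explicit than the paper's, but the argument is the same.
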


\begin{proof}
 See Appendix~\ref{appendix:max_power_sum_outage_no_backtracking_discounted}.
\end{proof}

\subsection{Computation of the Optimal Policy}\label{subsec:policy_computation_max_power_sum_outage_no_backtracking}

Let us write 
$V(r,\gamma_{max}):=\mathbb{E}_W J \left(r, W,\gamma_{max} \right)=\sum_{w \in \mathcal{W}} p_W(w) J \left(r, w, \gamma_{max} \right)$, 
for all $r \in \{A+1,A+2,\cdots,A+B\}$ and all $\gamma_{max} \in \{0\} \cup \mathcal{S}$, 
$V(\mathbf{0};\gamma_{max}):=J(\mathbf{0};\gamma_{max})$. Also, for each stage $k \geq 0$ of the value iteration 
(\ref{eqn:value_iteration_max_power_sum_outage_no_backtracking}), 
define $V^{(k)}(r,\gamma_{max}):=\mathbb{E}_W J^{(k)} \left(r, W,\gamma_{max} \right)$, 
and $V^{(k)}(\mathbf{0};\gamma_{max}):=J^{(k)}(\mathbf{0};\gamma_{max})$. 

Observe that from the value iteration (\ref{eqn:value_iteration_max_power_sum_outage_no_backtracking}), we obtain 
(\ref{eqn:function_iteration_max_power_sum_outage_no_backtracking}). Using similar arguments as in 
Section~\ref{subsec:policy_computation_sum_power_sum_outage_no_backtracking}, we can conclude that 
$V^{(k)}(\cdot) \uparrow V(\cdot)$ in (\ref{eqn:function_iteration_max_power_sum_outage_no_backtracking}). Then for each $r,\gamma$, 
the value of $c_{th}(r,\gamma)$ can be computed from the function $V(\cdot)$, using the Bellman equation 
(\ref{eqn:bellman_equation_max_power_sum_outage_no_backtracking}).

\section{Impromptu Deployment for Geometrically Distributed Length with Backtracking: Sum Power and 
Sum Outage Objective}\label{sec:sum_power_sum_outage_with_backtracking_discounted}

\subsection{Problem Formulation}\label{subsec:problem_formulation_sum_power_sum_outage_with_backtracking_discounted}

Consider the deployment procedure as in Section~\ref{subsection:deployment_process_notation}, with the objective 
(\ref{eqn:sum_power_discounted_no_backtracking}), under the scenario where the length of the line 
is geometrically distributed with parameter $\theta$. We formulate this problem as an MDP with state space 
$\mathcal{W}^B \cup \{z;\mathbf{0}\}_{0 \leq z \leq B-1 }$. The deployment agent starts walking 
from the previous node location, explores the next $(A+B)$ steps and measures 
$\underline{w}=(w_{A+1},w_{A+2},\cdots,w_{A+B})$ which belongs to $\mathcal{W}^B$. 
The state $(z;\mathbf{0})$ means that a relay has already been placed at the current position and the residual length of the line 
from the current location is $(z+L_1)$ where $L_1 \sim Geometric(\theta)$. At state $\underline{w}$ an action 
$(u,\gamma)$ is taken, where $u \in \{A+1,A+2,\cdots,A+B\}$ and $\gamma \in \mathcal{S}$. 
At state $(z;\mathbf{0})$ the action is to explore next $(A+B)$ steps, out of which $B$ steps will involve measurements. Note that 
the link qualities obtained from these new measurements will be independent from the previous measurements, 
since here new links (transmitter-receiver pairs) are being measured.  
The state $(z;\mathbf{0})$ is needed for the following reason: suppose that at some state $\underline{w}$ the 
optimal decision is to place the next relay $u$ steps away from the previous relay, where $A+1 \leq u \leq A+B$. 
After placing this relay, the residual length of the line becomes 
$(A+B-u+L_1)$ where $L_1 \sim Geometric(\theta)$; the problem does not restart after the placement 
of a relay as it did in Section~\ref{sec:sum_power_sum_outage_no_backtracking_discounted}. 
When the line ends, the process terminates.

\subsection{Bellman Equation}\label{subsec:bellman_equation_sum_power_sum_outage_with_backtracking_discounted}

Following the same arguments as in Section~\ref{subsec:bellman_equation_sum_power_sum_outage_no_backtracking}, we can 
argue that the optimal expected cost-to-go function $J(\cdot)$ satisfies the following Bellman equation: 

\begin{eqnarray}
J(\underline{w})&=&\min_{u \in \{A+1,\cdots,A+B\},\gamma \in \mathcal{S}} \bigg\{ \xi_r+\gamma+ \nonumber\\
&& \xi_o P_{out}(u,\gamma,w_u)+J(A+B-u;\mathbf{0}) \bigg\}\nonumber\\
J(z;\mathbf{0})&=&\sum_{k=1}^{A+B-z} (1-\theta)^{k-1} \theta \mathbb{E}_{W_{z+k}}  \nonumber\\
&& \min_{\gamma \in \mathcal{S}} \bigg( \gamma+ \xi_o P_{out}(z+k,\gamma,W_{z+k}) \bigg) \nonumber\\
&& +(1-\theta)^{A+B-z} \sum_{\underline{w}} g(\underline{w}) J(\underline{w})
\label{eqn:bellman_equation_sum_power_sum_outage_with_backtracking}
\end{eqnarray}

When the state is $\underline{w}$, if the action $(u,\gamma)$ is taken then a cost of $\xi_r+\gamma+\xi_o P_{out}(u,\gamma,w_u)$ 
is incurred in the current step and the next state becomes $(A+B-u;\mathbf{0})$, resulting in an additional cost 
$J(A+B-u;\mathbf{0})$. If the state is $(z;\mathbf{0})$, the source can appear in the $(z+k)$-th step ($1 \leq k \leq A+B-z$) 
from the current location with probability $(1-\theta)^{k-1} \theta$ 
(since the residual length of the line is $z$ plus
a geometrically distributed random variable), in which case 
a mean cost of $\mathbb{E}_{W_{z+k}} \min_{\gamma \in \mathcal{S}} (\gamma+ \xi_o P_{out}(z+k,\gamma,W_{z+k}))$ is incurred 
in the last hop. 
If the line does not end in next $(A+B)$ steps (which has probability $(1-\theta)^{A+B-z}$), the next state becomes 
$\underline{w}$ with probability $g(\underline{w}):=\Pi_{r=A+1}^{A+B} p_{W_r}(w_r)$ (since shadowing is i.i.d across links). 
Note that the optimal expected cost-to-go at the sink node is $J(0;\mathbf{0})$.

\subsection{Value Iteration}\label{subsec:value_iteration_sum_power_sum_outage_with_backtracking_discounted}

The value iteration for this problem is given by:

 \begin{eqnarray}
J^{(k+1)}(\underline{w})&=&\min_{u \in \{A+1,\cdots,A+B\},\gamma \in \mathcal{S}} \bigg\{ \xi_r+\gamma+ \nonumber\\
&& \xi_o P_{out}(u,\gamma,w_u)+J^{(k)}(A+B-u;\mathbf{0}) \bigg\}\nonumber\\
J^{(k+1)}(z;\mathbf{0})&=&\sum_{k=1}^{A+B-z} (1-\theta)^{k-1} \theta \mathbb{E}_{W_{z+k}}  \nonumber\\
&& \min_{\gamma \in \mathcal{S}} \bigg( \gamma+ \xi_o P_{out}(z+k,\gamma,W_{z+k}) \bigg) \nonumber\\
&& +(1-\theta)^{A+B-z} \sum_{\underline{w}} g(\underline{w}) J^{(k)}(\underline{w})
\label{eqn:value_iteration_sum_power_sum_outage_with_backtracking}
\end{eqnarray}
with $J^{(0)}(\cdot):=0$ for all states.

\begin{lem}\label{lemma:properties_J_vs_xi_sum_power_sum_outage_with_backtracking}
 Each of $J(\underline{w})$ and $\{ J(z;\mathbf{0}) \}_{0 \leq z \leq B-1}$ is increasing and jointly concave  in $\xi_r$, $\xi_o$. 
\end{lem}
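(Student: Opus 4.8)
The plan is to establish the two properties (monotonicity and joint concavity in $(\xi_r,\xi_o)$) simultaneously by induction on the value iteration~(\ref{eqn:value_iteration_sum_power_sum_outage_with_backtracking}), exploiting the fact, established in Lemma~\ref{lemma:properties_J_vs_xi_sum_power_sum_outage_with_backtracking}'s companion convergence result (the analogue of Lemma~\ref{lemma:value_iteration_sum_power_sum_outage_no_backtracking}, which holds here by the same nonnegative-cost argument), that $J^{(k)}(\cdot)\uparrow J(\cdot)$ pointwise. Since pointwise limits of increasing functions are increasing, and pointwise limits of jointly concave functions are jointly concave, it suffices to show that each iterate $J^{(k)}(\underline{w})$ and $J^{(k)}(z;\mathbf{0})$ is increasing and jointly concave in $(\xi_r,\xi_o)$, and then pass to the limit.

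For the base case, $J^{(0)}(\cdot)\equiv 0$ is trivially increasing and (jointly) concave. For the inductive step, assume $J^{(k)}$ has both properties. First consider $J^{(k+1)}(\underline{w})$: for each fixed $(u,\gamma)$ the bracketed expression $\xi_r+\gamma+\xi_o P_{out}(u,\gamma,w_u)+J^{(k)}(A+B-u;\mathbf{0})$ is affine in $(\xi_r,\xi_o)$ plus the jointly concave term $J^{(k)}(A+B-u;\mathbf{0})$ (note $A+B-u\in\{0,\dots,B-1\}$, so the inductive hypothesis applies), hence jointly concave; it is also increasing in each of $\xi_r,\xi_o$ since the coefficients $1$ and $P_{out}(u,\gamma,w_u)$ are nonnegative and $J^{(k)}(A+B-u;\mathbf{0})$ is increasing. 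Taking the minimum over the finite set of pairs $(u,\gamma)$ preserves both concavity (pointwise minimum of concave functions is concave) and monotonicity. Next, for $J^{(k+1)}(z;\mathbf{0})$: the first sum is a nonnegative linear combination (over $k$) of terms $\mathbb{E}_{W_{z+k}}\min_{\gamma}(\gamma+\xi_o P_{out}(z+k,\gamma,W_{z+k}))$; each inner minimand is affine and increasing in $\xi_o$, the minimum over a finite set is concave and increasing in $\xi_o$, expectation preserves this, and a nonnegative weighted sum preserves it — and all these terms are constant (hence trivially concave/increasing) in $\xi_r$. The second term $(1-\theta)^{A+B-z}\sum_{\underline{w}}g(\underline{w})J^{(k)}(\underline{w})$ is a convex combination of jointly concave increasing functions (by the inductive hypothesis on $J^{(k)}(\underline{w})$), scaled by the nonnegative constant $(1-\theta)^{A+B-z}$, hence jointly concave and increasing. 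A sum of jointly concave increasing functions is jointly concave increasing, completing the inductive step.

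Finally, letting $k\to\infty$ and invoking the monotone convergence $J^{(k)}\uparrow J$ transfers both properties to the limit: an increasing limit of increasing functions is increasing, and a pointwise limit of jointly concave functions is jointly concave. This yields the claim for $J(\underline{w})$ and for each $J(z;\mathbf{0})$, $0\le z\le B-1$.

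\textbf{Main obstacle.} The only genuinely delicate point is bookkeeping the coupling introduced by the state $(z;\mathbf{0})$: unlike the no-backtracking case, the process does not regenerate after a placement, so $J^{(k+1)}(\underline{w})$ depends on $J^{(k)}(A+B-u;\mathbf{0})$ for several values of $u$, and $J^{(k+1)}(z;\mathbf{0})$ depends on all of $\{J^{(k)}(\underline{w})\}$. One must therefore carry \emph{both} families of value functions through the induction in tandem and verify that the mutual dependence only ever goes through nonnegative-coefficient affine combinations, finite minima, and expectations — each of which preserves joint concavity and coordinatewise monotonicity. Once that is checked, the argument is routine; no convexity of $P_{out}$ in its arguments is needed, since $(\xi_r,\xi_o)$ enter only linearly inside each minimand.
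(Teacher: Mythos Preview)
Your proposal is correct and follows essentially the same approach as the paper: the paper's proof simply says that the argument proceeds ``from the convergence of value iterates to the optimal value function, along the same lines as in Lemma~\ref{lemma:value_function_properties_sum_power_sum_outage_no_backtracking},'' i.e., exactly the induction-on-iterates-plus-limit argument you have spelled out. Your version is a faithful (and more detailed) execution of that sketch, including the correct handling of the mutual dependence between the $J^{(k)}(\underline{w})$ and $J^{(k)}(z;\mathbf{0})$ families.
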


\begin{proof}
 The proof follows from the convergence of value iterates to the optimal value function, along the same 
lines as in Lemma~\ref{lemma:value_function_properties_sum_power_sum_outage_no_backtracking}.
\end{proof}

\begin{lem}\label{lemma:properties_J_vs_w_sum_power_sum_outage_with_backtracking}
 $J(\underline{w})$ is decreasing in each component of $\underline{w}$.
\end{lem}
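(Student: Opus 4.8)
The plan is to run an induction along the value iteration (\ref{eqn:value_iteration_sum_power_sum_outage_with_backtracking}) and then pass to the limit, using that $J^{(k)} \uparrow J$ (which holds by the same reasoning as in Lemma~\ref{lemma:value_iteration_sum_power_sum_outage_no_backtracking}, the single-stage costs being nonnegative). The only structural input needed is the fact, recorded in Section~\ref{subsection:channel_model}, that $w \mapsto P_{out}(u,\gamma,w)$ is decreasing for every fixed $u$ and $\gamma$. Since $\xi_o \geq 0$, this makes, for each fixed pair $(u,\gamma)$, the quantity $\xi_r + \gamma + \xi_o P_{out}(u,\gamma,w_u) + J^{(k)}(A+B-u;\mathbf{0})$ non-increasing in the coordinate $w_u$ and constant in the remaining coordinates of $\underline{w}$ --- hence non-increasing in every coordinate of $\underline{w}$. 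It is also worth noting at the outset that $J^{(k)}(z;\mathbf{0})$ is just a number for each $z$: the term $\sum_{\underline{w}} g(\underline{w}) J^{(k)}(\underline{w})$ in (\ref{eqn:value_iteration_sum_power_sum_outage_with_backtracking}) is an \emph{average} over measurement vectors and therefore carries no dependence on the current realization.

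For the induction itself, the base case $J^{(0)} \equiv 0$ is trivially non-increasing in each $w_i$. Assuming $J^{(k)}(\underline{w})$ is non-increasing in each component, the first line of (\ref{eqn:value_iteration_sum_power_sum_outage_with_backtracking}) writes $J^{(k+1)}(\underline{w})$ as the pointwise minimum, over the finite index set $\{A+1,\dots,A+B\} \times \mathcal{S}$, of the functions just described, each of which is non-increasing in every coordinate of $\underline{w}$; a pointwise minimum of non-increasing functions is non-increasing, so $J^{(k+1)}(\underline{w})$ is non-increasing in each $w_i$, while $J^{(k+1)}(z;\mathbf{0})$ is again a constant. This closes the induction, so $J^{(k)}(\underline{w})$ is non-increasing in each component for all $k$.

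Finally, letting $k \to \infty$ and using $J^{(k)}(\underline{w}) \uparrow J(\underline{w})$, the (increasing) pointwise limit of functions that are non-increasing in each coordinate is itself non-increasing in each coordinate, which gives the claim. Equivalently, one can skip value iteration and argue directly that the right-hand-side operator of the Bellman equation (\ref{eqn:bellman_equation_sum_power_sum_outage_with_backtracking}) maps the cone of functions non-increasing in each $w_i$ into itself and that $J$ is its fixed point. There is no real obstacle in either route; the one point to keep straight is that $\underline{w}$ enters the recursion for $J(\underline{w})$ only through the single term $\xi_o P_{out}(u,\gamma,w_u)$, so monotonicity of $P_{out}$ in $w$ is exactly what is needed.
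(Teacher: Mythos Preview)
Your proof is correct, but it deploys more machinery than the situation requires. The paper's argument is the one-line direct route you mention at the very end: since the Bellman equation~(\ref{eqn:bellman_equation_sum_power_sum_outage_with_backtracking}) expresses $J(\underline{w})$ as a minimum over $(u,\gamma)$ of $\xi_r+\gamma+\xi_o P_{out}(u,\gamma,w_u)+J(A+B-u;\mathbf{0})$, and the last summand is simply a number (no $\underline{w}$-dependence), monotonicity of $P_{out}$ in $w$ immediately gives the claim.

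The point worth noticing is that your induction hypothesis on $J^{(k)}(\underline{w})$ is never actually used in your induction step: the recursion for $J^{(k+1)}(\underline{w})$ involves only $J^{(k)}(A+B-u;\mathbf{0})$, which is a constant, not $J^{(k)}(\underline{w}')$ for any measurement vector. So the value-iteration-plus-limit scaffolding, while valid, is doing no real work here; it is the appropriate tool for Lemmas~\ref{lemma:value_function_properties_sum_power_sum_outage_no_backtracking} and~\ref{lemma:properties_J_vs_xi_sum_power_sum_outage_with_backtracking}, where the quantity of interest genuinely feeds back through the iteration, but for this lemma the fixed-point equation already isolates the $\underline{w}$-dependence in a single monotone term.
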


\begin{proof}
 Note that for each $(u,\gamma)$, $P_{out}(u,\gamma,w_u)$ is decreasing in $w_u$. Hence, the result follows 
from the first equation in (\ref{eqn:bellman_equation_sum_power_sum_outage_with_backtracking}).
\end{proof}

\begin{lem}\label{lemma:properties_J_vs_z_sum_power_sum_outage_with_backtracking}
 $J(z;\mathbf{0})$ is increasing in $z$.
\end{lem}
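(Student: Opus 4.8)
The plan is to reduce the monotonicity of $J(\cdot;\mathbf{0})$ to a one‑step comparison and then to a single scalar inequality about the ``regeneration value.'' Write $c(m):=\mathbb{E}_{W}\min_{\gamma\in\mathcal{S}}\big(\gamma+\xi_o P_{out}(m,\gamma,W)\big)$ and $D:=\sum_{\underline{w}}g(\underline{w})J(\underline{w})$, so that the second line of (\ref{eqn:bellman_equation_sum_power_sum_outage_with_backtracking}) reads $J(z;\mathbf{0})=\sum_{k=1}^{A+B-z}(1-\theta)^{k-1}\theta\,c(z+k)+(1-\theta)^{A+B-z}D$. First I would record two elementary facts: (i) $c$ is nondecreasing, since $P_{out}(m,\gamma,w)$ is nondecreasing in $m$ for each fixed $(\gamma,w)$, the pointwise minimum over $\gamma$ preserves this, and the law of $W$ does not depend on $m$; and (ii) the displayed expression exhibits $J(z;\mathbf{0})$ as a convex combination, with strictly positive weights summing to $1$, of $c(z+1),\dots,c(A+B)$ and $D$.

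Next I would split off the $k=1$ term of that sum and reindex the remaining geometric series to obtain the renewal‑type identity $J(z;\mathbf{0})=\theta\,c(z+1)+(1-\theta)J(z+1;\mathbf{0})$, valid for $0\le z\le B-2$ (so that both $z$ and $z+1$ index genuine states). This is just memorylessness of the geometric length: conditioned on the line not ending at the next step, the residual from the current relay is distributed exactly as in state $(z+1;\mathbf{0})$. Consequently $J(z+1;\mathbf{0})-J(z;\mathbf{0})=\theta\big(J(z+1;\mathbf{0})-c(z+1)\big)$, so the lemma is equivalent to $J(z';\mathbf{0})\ge c(z')$ for $1\le z'\le B-1$. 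Using fact (ii) together with $c$ nondecreasing (each $c(z'+k)\ge c(z')$), this in turn reduces to the single inequality $D\ge c(B-1)$; equivalently, by a short downward induction on $z'$ through the recursion, it suffices to verify $J(B-1;\mathbf{0})\ge c(B-1)$.

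The inequality $D\ge c(B-1)$ is the main obstacle, and I would attack it self‑referentially. At any state $\underline{w}$ a relay is forced, so by the first line of (\ref{eqn:bellman_equation_sum_power_sum_outage_with_backtracking}), $J(\underline{w})=\min_{A+1\le u\le A+B}\big\{\xi_r+\widehat c(u,w_u)+J(A+B-u;\mathbf{0})\big\}$ with $\widehat c(u,w):=\min_{\gamma\in\mathcal{S}}(\gamma+\xi_o P_{out}(u,\gamma,w))$. Setting $\alpha:=\min_{0\le z'\le B-1}\big(J(z';\mathbf{0})-c(z')\big)$, writing $J(A+B-u;\mathbf{0})\ge c(A+B-u)+\alpha$, and taking expectation over $\underline W$ gives $D\ge\xi_r+\alpha+\beta$ with $\beta:=\mathbb{E}_{\underline W}\min_{u}\big(\widehat c(u,W_u)+c(A+B-u)\big)$; combining this with $J(z';\mathbf{0})-c(z')\ge(1-\theta)^{A+B-z'}\big(D-c(z')\big)$ (the other terms of the convex combination being nonnegative), evaluated at the minimizing $z'$, yields an inequality of the form $\alpha(1-r)\ge r\big(\xi_r+\beta-c(z'^{*})\big)$ with $r\in(0,1)$, which forces $\alpha\ge0$ provided $\xi_r+\beta\ge c(B-1)$. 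The delicate part is exactly this last bound: the crude estimate $\widehat c(u,w)\ge P_1$ is too lossy (it discards precisely the large‑$\xi_o$ mass that makes $D$ large), so one must retain the coupling between the first‑hop length $u$ and the residual cost $J(A+B-u;\mathbf{0})$ — a short first hop forces a longer remaining line — which is what keeps $\beta$ above $c(B-1)-\xi_r$. An alternative route is a sample‑path coupling of two deployments launched from $(z;\mathbf{0})$ and $(z+1;\mathbf{0})$ with line lengths coupled so the second source lies exactly one step beyond the first; there the subtlety is realigning the backtracking windows and the (link‑dependent, spatially independent) shadowings when the source falls inside the exploration window of one run but not the other.
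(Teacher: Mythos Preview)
Your reduction is clean and correct: the memoryless identity $J(z;\mathbf{0})=\theta\,c(z+1)+(1-\theta)J(z+1;\mathbf{0})$ does hold, and monotonicity is indeed equivalent to $J(z';\mathbf{0})\ge c(z')$ for $1\le z'\le B-1$, which by downward induction collapses to $J(B-1;\mathbf{0})\ge c(B-1)$. The gap is that you never establish this last inequality. Your self‑referential bootstrap leads to $\alpha(1-r)\ge r(\xi_r+\beta-c(z'^*))$ and then to the requirement $\xi_r+\beta\ge c(B-1)$, but $\beta=\mathbb{E}_{\underline W}\min_u\big(\widehat c(u,W_u)+c(A+B-u)\big)$ has a minimum \emph{inside} the expectation, and any lower bound that does not already use monotonicity of $J(\cdot;\mathbf{0})$ forces you to decouple $u$ from the residual term and lose exactly the mass you need (you say this yourself). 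In particular, for large $\xi_o$ the target $c(B-1)$ scales with $\xi_o$, while the only $\xi_o$‑free piece on the left is $\xi_r$; the argument as written does not show that the $\xi_o$‑dependent part of $\beta$ keeps pace. So the analytic route stalls at precisely the point that carries the content of the lemma.

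The paper proves the lemma by the sample‑path coupling you list as an ``alternative.'' One couples $L_1=L_2=l$ and all link shadowings, applies the $(z+1)$‑optimal policy $\mu_{z+1}^*$ to both instances, and uses that the relay placements under $\mu_{z+1}^*$ depend only on the shadowing measurements (not on where the line will end); hence the two runs lay down identical relays up to the last relay $y$ of the $(z+1)$‑instance. Then either $y\le z+l-1$, in which case both runs finish with a single hop from $y$ and the $(z+1)$‑hop is one step longer (costlier in expectation by monotonicity of $P_{out}$ in distance), or $y=z+l$, in which case the $(z+1)$‑instance carries one extra nonnegative hop cost (and an extra relay cost) beyond the $(z)$‑instance. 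The ``window realignment'' you flag is exactly this second case, and it is benign: the $(z)$‑instance simply terminates where the $(z+1)$‑instance still has one hop to go. Finishing with $J_{\mu_{z+1}^*}(z;\mathbf{0})\ge J_{\mu_z^*}(z;\mathbf{0})$ by optimality gives the lemma. Your reduction would make a nice preamble, but the coupling is what actually closes the argument.
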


\begin{proof}
See Appendix~\ref{appendix:sum_power_sum_outage_with_backtracking_discounted}.
\end{proof}

\subsection{Policy Structure}\label{subsec:policy_structure_sum_power_sum_outage_with_backtracking_discounted}
   
\begin{thm}\label{thm:policy_structure_sum_power_sum_outage_with_backtracking}
 The optimal action at state $\underline{w}$ is the pair $(u,\gamma)$ achieving the minimum in 
(\ref{eqn:bellman_equation_sum_power_sum_outage_with_backtracking}). The minimum is always achieved since we have finite action space.
\end{thm}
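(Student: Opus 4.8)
The plan is to appeal to the standard theory of non-negative total-cost Markov decision processes, exactly as was invoked in Section~\ref{sec:sum_power_sum_outage_no_backtracking_discounted}. First I would note that all single-stage costs here — the relay cost $\xi_r$, the transmit power $\gamma$, and the outage penalty $\xi_o P_{out}(u,\gamma,w_u)$ — are non-negative, so Assumption~P of Chapter~$3$ of \cite{bertsekas07dynamic-programming-optimal-control-2} holds, we may restrict attention to stationary deterministic Markov policies, and the optimal value function $J(\cdot)$ is the relevant solution of the Bellman equation (\ref{eqn:bellman_equation_sum_power_sum_outage_with_backtracking}). That $J(\cdot)$ is indeed this solution is guaranteed by the convergence of the value iteration (\ref{eqn:value_iteration_sum_power_sum_outage_with_backtracking}) to it, which is already used in Lemma~\ref{lemma:properties_J_vs_xi_sum_power_sum_outage_with_backtracking}.

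Next I would invoke the sufficiency half of the Bellman optimality principle (Proposition~$3.1.1$ of \cite{bertsekas07dynamic-programming-optimal-control-2}): a stationary policy is optimal if, in every state, the action it prescribes attains the minimum on the right-hand side of the Bellman equation. At a state $(z;\mathbf{0})$ there is only the single feasible action — walk forward and take the next $B$ measurements — so there is nothing to choose. At a measurement state $\underline{w}$, the feasible actions are the pairs $(u,\gamma)$ with $u\in\{A+1,\dots,A+B\}$ and $\gamma\in\mathcal{S}$; since both sets are finite, the minimum in the first line of (\ref{eqn:bellman_equation_sum_power_sum_outage_with_backtracking}) is attained by at least one such pair, and selecting it gives an optimal action. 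This establishes the claim.

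There is essentially no analytical obstacle; the only point needing a word of care is that the finite-state, finite-action structure together with non-negativity of the costs (and, as already used in the earlier sections, the fact that the line terminates with probability one, so the process is proper) is what simultaneously licenses the existence of an optimal stationary deterministic policy and the validity of extracting it from the Bellman minimizer. Unlike Theorem~\ref{theorem:policy_structure_sum_power_sum_outage_no_backtracking}, no monotone threshold structure is asserted for the optimal $(u,\gamma)$ map, so no further argument via concavity or monotonicity of $J$ is needed here — the statement is purely that the Bellman minimizer is optimal and that this minimization is over a finite set.
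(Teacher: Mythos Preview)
Your proposal is correct. The paper actually states this theorem without any proof, treating it as immediate from the Bellman equation and the finiteness of the action space; your argument supplies exactly the standard MDP justification the paper relies on implicitly, and it mirrors the reasoning the paper does spell out for the analogous Theorem~\ref{theorem:policy_structure_sum_power_sum_outage_no_backtracking}. One small correction: the sufficiency result you describe (``a stationary policy that attains the Bellman minimum in every state is optimal'') is Proposition~$3.1.3$ of \cite{bertsekas07dynamic-programming-optimal-control-2}, not Proposition~$3.1.1$; the paper cites $3.1.3$ for this purpose in the proof of Theorem~\ref{theorem:policy_structure_sum_power_sum_outage_no_backtracking}, while $3.1.1$ is cited only for the fact that $J$ satisfies the Bellman equation.
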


\subsection{Policy Computation}\label{subsec:policy_computation_sum_power_sum_outage_with_backtracking_discounted}

Note that in the $k$-th iteration of the value iteration obtained from the Bellman equation 
(\ref{eqn:bellman_equation_sum_power_sum_outage_with_backtracking}), we need to update $J^{(k)}(\underline{w})$ 
for $|\mathcal{W}|^{B}$ possible values of the state $\underline{w}$, which could be computationally very much expensive for 
large values of $|\mathcal{W}|$. Let us define the sequence $\{V^{(k)}\}_{k \geq 0}$ by 
$V^{(0)}=0$, $V^{(k)} =\sum_{\underline{w} \in \mathcal{W}^B} g(\underline{w}) J^{(k)}(\underline{w})$. Now consider the following 
iteration (with $J^{(0)}(z;\mathbf{0}):=0$ for all $0 \leq z \leq B-1$) obtained from 
(\ref{eqn:value_iteration_sum_power_sum_outage_with_backtracking}):

\begin{eqnarray}
V^{(k+1)}&=& \sum_{\underline{w}}g(\underline{w}) \min_{u \in \{A+1,\cdots,A+B\},\gamma \in \mathcal{S}} \bigg\{ \xi_r+\gamma+ \nonumber\\
&& \xi_o P_{out}(u,\gamma,w_u)+J^{(k)}(A+B-u;\mathbf{0}) \bigg\}\nonumber\\
J^{(k+1)}(z;\mathbf{0})&=&\sum_{k=1}^{A+B-z} (1-\theta)^{k-1} \theta \mathbb{E}_{W_{z+k}}  \nonumber\\
&& \min_{\gamma \in \mathcal{S}} \bigg( \gamma+ \xi_o P_{out}(z+k,\gamma,W_{z+k}) \bigg) \nonumber\\
&& +(1-\theta)^{A+B-z}  V^{(k)}, 0 \leq z \leq (B-1) \nonumber\\
\label{eqn:function_iteration_sum_power_sum_outage_with_backtracking}
\end{eqnarray}

By Monotone Convergence Theorem, $V^{(k)} \uparrow \sum_{\underline{w}} g(\underline{w}) J(\underline{w})$. Hence, 
we can just use the function iteration (\ref{eqn:function_iteration_sum_power_sum_outage_with_backtracking}) to compute the optimal 
value function, from which the policy can be computed. The advantage of this function iteration is that we need not update 
$J^{(k)}(\cdot)$ for each state.

\section{Impromptu Deployment for Geometrically Distributed Length with Backtracking: Max Power and 
Sum Outage Objective}\label{sec:max_power_sum_outage_with_backtracking_discounted}

\subsection{Problem Formulation}\label{subsec:problem_formulation_max_power_sum_outage_with_backtracking_discounted}

In this section, we seek to develop optimal placement policy with backtracking 
for the problem (\ref{eqn:max_power_discounted_no_backtracking}). We formulate this problem as an MDP with state space 
$\bigg(\mathcal{W}^B \cup \{z;\mathbf{0}\}_{0 \leq z \leq B-1 } \bigg) \times (\mathcal{S} \cup \{0\})$. 
The state $(z;\mathbf{0}; \gamma_{max})$ ($\gamma_{max} \in \mathcal{S} \cup \{0\}$) 
means that a relay has already been placed at 
the current position, the residual length of the line 
from the current location is $(z+L_1)$ where $L_1 \sim Geometric(\theta)$, and the maximum transmit power 
used so far by the previous nodes is $\gamma_{max}$. At state $(\underline{w}; \gamma_{max})$ an action 
$(u,\gamma)$ is taken, where $u \in \{A+1,A+2,\cdots,A+B\}$ and $\gamma \in \mathcal{S}$. 
At state $(z;\mathbf{0}; \gamma_{max})$ the action is to explore next $(A+B)$ steps.

\subsection{Bellman Equation}\label{subsec:bellman_equation_max_power_sum_outage_with_backtracking_discounted}

The optimal expected cost-to-go function $J(\cdot)$ satisfies the following Bellman equation: 

\footnotesize
\begin{eqnarray}
J(\underline{w}; \gamma_{max})&=&\min_{u \in \{A+1,\cdots,A+B\},\gamma \in \mathcal{S}} \bigg\{ \xi_o P_{out}(u,\gamma,w_u) \nonumber\\
&& + \xi_r +J(A+B-u;\mathbf{0}; \max\{\gamma,\gamma_{max}\}) \bigg\}\nonumber\\
J(z;\mathbf{0};\gamma_{max})&=&\sum_{k=1}^{A+B-z} (1-\theta)^{k-1} \theta \mathbb{E}_{W_{z+k}}\min_{\gamma \in \mathcal{S}} \bigg( \max\{\gamma,\gamma_{max}\}  \nonumber\\
&& + \xi_o P_{out}(z+k,\gamma,W_{z+k}) \bigg) \nonumber\\
&& +(1-\theta)^{A+B-z} \sum_{\underline{w}} g(\underline{w}) J(\underline{w};\gamma_{max})
\label{eqn:bellman_equation_max_power_sum_outage_with_backtracking}
\end{eqnarray}
\normalsize
When the state is $(\underline{w};\gamma_{max})$, if the action 
$(u,\gamma)$ is taken then a cost of $\xi_o P_{out}(u,\gamma,w_u)+\xi_r$ 
is incurred in the current step and the next state becomes $(A+B-u;\mathbf{0};\gamma_{max})$, resulting in an additional cost 
$J(A+B-u;\mathbf{0};\gamma_{max})$. If the state is $(z;\mathbf{0};\gamma_{max})$, 
the source can appear in the $(z+k)$-th step ($1 \leq k \leq A+B-z$) 
from the current location with probability $(1-\theta)^{k-1} \theta$ 
(since the residual length of the line is $z$ plus a geometrically distributed random variable), 
resulting in  
a mean cost of $\mathbb{E}_{W_{z+k}} \min_{\gamma \in \mathcal{S}} (\max \{\gamma,\gamma_{max}\}+ \xi_o P_{out}(z+k,\gamma,W_{z+k}))$, 
which is a combination of the max power used in the network and the outage probability of the last hop. 
If the line does not end in next $(A+B)$ steps (which has probability $(1-\theta)^{A+B-z}$), the next state becomes 
$(\underline{w};\gamma_{max})$ with probability 
$g(\underline{w}):=\Pi_{r=A+1}^{A+B} p_{W_r}(w_r)$ (since shadowing is i.i.d across links).

\subsection{Value Iteration}\label{subsec:value_iteration_max_power_sum_outage_with_backtracking_discounted}

The value iteration for this problem is given by:

\footnotesize
\begin{eqnarray}
&&J^{(k+1)}(\underline{w}; \gamma_{max}) = \min_{u \in \{A+1,\cdots,A+B\},\gamma \in \mathcal{S}} \bigg\{ \xi_o P_{out}(u,\gamma,w_u) \nonumber\\
&& + \xi_r +J^{(k)}(A+B-u;\mathbf{0}; \max\{\gamma,\gamma_{max}\}) \bigg\}\nonumber\\
&& J^{(k+1)}(z;\mathbf{0};\gamma_{max})=\sum_{k=1}^{A+B-z} (1-\theta)^{k-1} \theta \mathbb{E}_{W_{z+k}} \nonumber\\
&& \min_{\gamma \in \mathcal{S}} \bigg( \max\{\gamma,\gamma_{max}\} + \xi_o P_{out}(z+k,\gamma,W_{z+k}) \bigg) \nonumber\\
&& +(1-\theta)^{A+B-z} \sum_{\underline{w}} g(\underline{w}) J^{(k)}(\underline{w};\gamma_{max})
\label{eqn:value_iteration_max_power_sum_outage_with_backtracking}
\end{eqnarray}
\normalsize
with $J^{(0)}(\cdot):=0$ for all states.

\begin{lem}\label{lemma:properties_J_vs_xi_max_power_sum_outage_with_backtracking}
 Each of $J(\underline{w};\gamma_{max})$ and $\{ J(z;\mathbf{0};\gamma_{max}) \}_{0 \leq z \leq B-1}$ 
is increasing and jointly concave  in $\xi_r$, $\xi_o$, and increasing in $\gamma_{max}$. 
\end{lem}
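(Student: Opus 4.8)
The plan is to follow exactly the pattern used for Lemma~\ref{lemma:value_function_properties_sum_power_sum_outage_no_backtracking} and Lemma~\ref{lemma:properties_J_vs_xi_sum_power_sum_outage_with_backtracking}: first prove the three claimed properties for the value-iteration iterates $J^{(k)}(\cdot)$ by induction on $k$, and then pass to the limit. As in Lemma~\ref{lemma:value_iteration_max_power_sum_outage_no_backtracking}, the iteration (\ref{eqn:value_iteration_max_power_sum_outage_with_backtracking}) started from $J^{(0)}\equiv 0$ is a nondecreasing sequence converging pointwise to the optimal value function $J$ (the standard argument for a nonnegative single-stage-cost total-cost MDP, which I would simply invoke). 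So it suffices to show: for every $\underline{w}$, every $z\in\{0,\dots,B-1\}$ and every $\gamma_{max}\in\mathcal{S}\cup\{0\}$, the maps $(\xi_r,\xi_o)\mapsto J^{(k)}(\underline{w};\gamma_{max})$ and $(\xi_r,\xi_o)\mapsto J^{(k)}(z;\mathbf{0};\gamma_{max})$ are nondecreasing and jointly concave, and both quantities are nondecreasing in $\gamma_{max}$.

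The base case $k=0$ is immediate since $J^{(0)}\equiv 0$. For the inductive step I would go through the right-hand side of (\ref{eqn:value_iteration_max_power_sum_outage_with_backtracking}) term by term. Monotonicity in $(\xi_r,\xi_o)$: $\xi_r$ appears with coefficient $+1$, $\xi_o$ multiplies the nonnegative quantity $P_{out}(\cdot)$, and the $J^{(k)}$ terms are nondecreasing by the induction hypothesis; these properties survive the finite minimizations over $(u,\gamma)$ and over $\gamma$, the expectations $\mathbb{E}_{W_{z+k}}$, and the nonnegative combinations $\sum_{\underline{w}}g(\underline{w})(\cdot)$ and $\sum_{k}(1-\theta)^{k-1}\theta(\cdot)$. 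Monotonicity in $\gamma_{max}$: $\gamma_{max}$ enters only through $\max\{\gamma,\gamma_{max}\}$, which is nondecreasing in $\gamma_{max}$, and through $J^{(k)}(A+B-u;\mathbf{0};\max\{\gamma,\gamma_{max}\})$ and $J^{(k)}(\underline{w};\gamma_{max})$, which are nondecreasing in $\gamma_{max}$ by hypothesis; again $\min$, $\mathbb{E}_W$ and nonnegative combinations preserve this (note $A+B-u$ ranges over $\{0,\dots,B-1\}$ since $A+1\le u\le A+B$, so the hypothesis applies).

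Joint concavity is the part to set up carefully. In the first equation, fix an action $(u,\gamma)$; then $\xi_o P_{out}(u,\gamma,w_u)+\xi_r+J^{(k)}(A+B-u;\mathbf{0};\max\{\gamma,\gamma_{max}\})$ is an affine function of $(\xi_r,\xi_o)$ plus a function that is concave in $(\xi_r,\xi_o)$ by the induction hypothesis, hence concave; therefore $J^{(k+1)}(\underline{w};\gamma_{max})$, being a pointwise minimum of finitely many concave functions, is concave. In the second equation, for fixed $\gamma$ the expression $\max\{\gamma,\gamma_{max}\}+\xi_o P_{out}(z+k,\gamma,W_{z+k})$ is affine in $(\xi_r,\xi_o)$ (the $\max\{\gamma,\gamma_{max}\}$ term is a constant with respect to $(\xi_r,\xi_o)$), so the inner $\min_{\gamma}$ is concave, $\mathbb{E}_{W_{z+k}}$ preserves concavity, and $\sum_{\underline{w}}g(\underline{w})J^{(k)}(\underline{w};\gamma_{max})$ is a convex combination of functions concave by the induction hypothesis, hence concave; the whole right-hand side is then a nonnegative combination of concave functions, hence concave. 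Finally I would pass to the limit $k\to\infty$: a pointwise limit of nondecreasing functions is nondecreasing and a pointwise limit of concave functions is concave, so $J(\underline{w};\gamma_{max})$ and $J(z;\mathbf{0};\gamma_{max})$ inherit all three properties.

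The main obstacle is purely bookkeeping in the concavity argument: one must keep straight that $\gamma_{max}$ and $\max\{\gamma,\gamma_{max}\}$, although only piecewise-linear in $\gamma_{max}$, are \emph{constants} in the variables $(\xi_r,\xi_o)$ in which concavity is asserted, so they contribute only to the affine part; and one must check that concavity is not destroyed by the $\min$ and expectation operations (it is not, because the minimizations are over finite sets and expectations are nonnegative mixtures). Everything else is a mechanical repetition of the argument already used for Lemma~\ref{lemma:properties_J_vs_xi_sum_power_sum_outage_with_backtracking}, now carrying the extra coordinate $\gamma_{max}$ and the extra monotonicity claim in it.
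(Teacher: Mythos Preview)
Your proposal is correct and follows exactly the approach the paper takes: the paper's proof is a one-line reference to the argument of Lemma~\ref{lemma:value_function_properties_max_power_sum_outage_no_backtracking}, which is precisely the induction on the value iterates $J^{(k)}$ (exploiting that single-stage costs are affine/increasing in $(\xi_r,\xi_o)$ and increasing in $\gamma_{max}$, and that $\min$, expectation, and nonnegative combinations preserve monotonicity and concavity) followed by passage to the limit. Your write-up simply spells out that argument in full detail for the backtracking Bellman recursion.
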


\begin{proof}
 The proof follows from the convergence of value iterates to the optimal value function, along the same 
lines as in Lemma~\ref{lemma:value_function_properties_max_power_sum_outage_no_backtracking}.
\end{proof}

\begin{lem}\label{lemma:properties_J_vs_w_max_power_sum_outage_with_backtracking}
 $J(\underline{w};\gamma_{max})$ is decreasing in each component of $\underline{w}$.
\end{lem}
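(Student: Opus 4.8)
The plan is to read the statement directly off the first equation of the Bellman equation~(\ref{eqn:bellman_equation_max_power_sum_outage_with_backtracking}), mirroring the proof of Lemma~\ref{lemma:properties_J_vs_w_sum_power_sum_outage_with_backtracking}. Fix a coordinate index $j\in\{A+1,\ldots,A+B\}$ and take two shadowing vectors $\underline{w},\underline{w}'$ that coincide in every coordinate except the $j$-th, with $w_j'\ge w_j$. For each admissible action $(u,\gamma)$, $u\in\{A+1,\ldots,A+B\}$, $\gamma\in\mathcal{S}$, the bracketed quantity
\[
\xi_o P_{out}(u,\gamma,w_u)+\xi_r+J\!\left(A+B-u;\mathbf{0};\max\{\gamma,\gamma_{max}\}\right)
\]
depends on $\underline{w}$ only through the single scalar $P_{out}(u,\gamma,w_u)$: the cost-to-go term involves a state of the form $(z;\mathbf{0};\gamma_{max})$, which carries no shadowing information, and in the second equation of~(\ref{eqn:bellman_equation_max_power_sum_outage_with_backtracking}) the dependence on $\underline{w}$ has been averaged out, so $J(z;\mathbf{0};\gamma_{max})$ is a constant in $\underline{w}$. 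Hence the bracketed quantity is unchanged when $u\ne j$, and when $u=j$ it changes from $\xi_o P_{out}(j,\gamma,w_j)$ to $\xi_o P_{out}(j,\gamma,w_j')\le \xi_o P_{out}(j,\gamma,w_j)$, because $P_{out}(r,\gamma,\cdot)$ is decreasing in $w$ (Section~\ref{subsection:channel_model}).

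Thus, for every $(u,\gamma)$, the bracketed quantity evaluated at $\underline{w}'$ is no larger than its value at $\underline{w}$. Taking the minimum over the finite action set preserves this inequality, so $J(\underline{w}';\gamma_{max})\le J(\underline{w};\gamma_{max})$; since $j$ was arbitrary, $J(\underline{w};\gamma_{max})$ is decreasing in each component of $\underline{w}$.

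If one prefers not to argue from the fixed-point equation, the same observation yields an immediate induction on the value iteration~(\ref{eqn:value_iteration_max_power_sum_outage_with_backtracking}): the initializer $J^{(0)}\equiv 0$ is trivially monotone, and if $J^{(k)}(\underline{w};\gamma_{max})$ is decreasing in each component, then so is $J^{(k+1)}(\underline{w};\gamma_{max})$ by the displayed argument (the $(z;\mathbf{0};\gamma_{max})$-update plays no role here); passing to the limit via the monotone convergence $J^{(k)}\uparrow J$ of the value iterates gives the claim. I do not expect any real obstacle: the only point worth stating explicitly is that $\underline{w}$ enters the right-hand side of the first Bellman equation solely through $P_{out}(u,\gamma,w_u)$, everything else being constant in $\underline{w}$.
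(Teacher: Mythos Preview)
Your proposal is correct and follows essentially the same approach as the paper: both read the monotonicity directly off the first equation in~(\ref{eqn:bellman_equation_max_power_sum_outage_with_backtracking}), using that $P_{out}(u,\gamma,w_u)$ is decreasing in $w_u$ while the remaining terms do not depend on $\underline{w}$. Your write-up is more explicit (spelling out the coordinate-wise comparison and offering the value-iteration variant), but the underlying idea is identical.
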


\begin{proof}
 Note that for each $(u,\gamma)$, $P_{out}(u,\gamma,w_u)$ is decreasing in $w_u$. Hence, the result follows 
from (\ref{eqn:bellman_equation_max_power_sum_outage_with_backtracking}).
\end{proof}

\begin{lem}\label{lemma:properties_J_vs_z_max_power_sum_outage_with_backtracking}
 $J(z;\mathbf{0};\gamma_{max})$ is increasing in $z$.
\end{lem}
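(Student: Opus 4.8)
The plan is to convert the statement into a one‑step comparison through the Bellman equation, and then to settle the residual inequality by coupling the two problems on the length of the line.

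First, I would extract a one‑step recursion from the $(z;\mathbf 0;\gamma_{max})$–equation of \eqref{eqn:bellman_equation_max_power_sum_outage_with_backtracking}: peel off the $k=1$ term of the finite sum and shift $k\mapsto k-1$ in the remaining terms. Because the restart term $\sum_{\underline w}g(\underline w)J(\underline w;\gamma_{max})$ does not depend on $z$, what is left reassembles precisely into $(1-\theta)\,J(z+1;\mathbf 0;\gamma_{max})$, so that, for $0\le z\le B-2$,
\[
J(z;\mathbf 0;\gamma_{max})\;=\;\theta\, h(z+1;\gamma_{max})\;+\;(1-\theta)\,J(z+1;\mathbf 0;\gamma_{max}),
\]
where $h(m;\gamma_{max}):=\mathbb E_W\min_{\gamma\in\mathcal S}\bigl(\max\{\gamma,\gamma_{max}\}+\xi_o P_{out}(m,\gamma,W)\bigr)$. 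Since $P_{out}(m,\gamma,w)$ is increasing in $m$, the map $m\mapsto h(m;\gamma_{max})$ is nondecreasing. Rearranging gives
\[
J(z+1;\mathbf 0;\gamma_{max})-J(z;\mathbf 0;\gamma_{max})\;=\;\theta\bigl(J(z+1;\mathbf 0;\gamma_{max})-h(z+1;\gamma_{max})\bigr),
\]
so the claim reduces to showing $J(z+1;\mathbf 0;\gamma_{max})\ge h(z+1;\gamma_{max})$ for $0\le z\le B-2$. This can be trimmed further by a downward induction on $z$: for $0\le z\le B-3$ the recursion holds at both $z$ and $z+1$, and subtracting the two instances yields $J(z+1;\mathbf 0;\gamma_{max})-J(z;\mathbf 0;\gamma_{max})=\theta\bigl(h(z+2;\gamma_{max})-h(z+1;\gamma_{max})\bigr)+(1-\theta)\bigl(J(z+2;\mathbf 0;\gamma_{max})-J(z+1;\mathbf 0;\gamma_{max})\bigr)$, whose first summand is nonnegative by monotonicity of $h$. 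Inducting downward from $z=B-2$, it then suffices to verify the base case $J(B-1;\mathbf 0;\gamma_{max})\ge h(B-1;\gamma_{max})$, equivalently $J(B-2;\mathbf 0;\gamma_{max})\le J(B-1;\mathbf 0;\gamma_{max})$.

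Finally, the base case — indeed all of $J(z;\mathbf 0;\gamma_{max})\le J(z+1;\mathbf 0;\gamma_{max})$ — would be obtained by a coupling. Run the residual problem from $(z;\mathbf 0;\gamma_{max})$ and from $(z+1;\mathbf 0;\gamma_{max})$ with the \emph{same} geometric residual $L_1$, so that the second line is the first with the source pushed one step farther, and share the shadowing of every relay‑to‑relay link between the two copies. Starting from an optimal policy for the $(z+1)$–copy, mimic it in the $z$–copy, measurement for measurement and node for node; the two runs agree until the source is reached, one step earlier in the $z$–copy. There the $z$–copy either (a) connects to the source directly, its final hop being exactly one step shorter than the corresponding $(z+1)$–copy hop, so its expected outage is no larger (by $P_{out}$ increasing in distance) and all other hops coincide; or (b) meets the source exactly at the site where the $(z+1)$–copy had placed its last relay, in which case the $z$–copy drops that relay and its outgoing length‑one link, which only decreases the relay count, the summed outage and the running maximum power (using monotonicity of $J$ in $\gamma_{max}$, Lemma~\ref{lemma:properties_J_vs_xi_max_power_sum_outage_with_backtracking}, to absorb the power update). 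Either way the mimicking $z$–policy has expected cost no larger than the optimal $(z+1)$–policy, so $J(z;\mathbf 0;\gamma_{max})\le J(z+1;\mathbf 0;\gamma_{max})$; combined with the recursion this also recovers $J(B-1;\mathbf 0;\gamma_{max})\ge h(B-1;\gamma_{max})$. This mirrors the argument for the sum‑power analogue, Lemma~\ref{lemma:properties_J_vs_z_sum_power_sum_outage_with_backtracking}, with $\gamma$ replaced everywhere by $\max\{\gamma,\gamma_{max}\}$.

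The hard part is this last step. A naive one‑hop lower bound $J(z';\mathbf 0;\gamma_{max})\ge h(z';\gamma_{max})$ is not available, because the first hop out of a regeneration point is placed adaptively among the measured locations and hence its shadowing is biased favorably; the coupling circumvents this, but one must check that the mimicking policy stays feasible in the $z$–copy (reused relay positions still lying in the admissible exploration windows) and must treat the boundary cases where the source lands at a window edge or at a relay site, which is where the detailed bookkeeping lives.
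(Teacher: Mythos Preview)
Your proposal is correct, and the coupling you describe at the end is exactly the paper's approach: the paper simply says ``by similar arguments as in the proof of Lemma~\ref{lemma:properties_J_vs_z_sum_power_sum_outage_with_backtracking}'' (the sum-power analogue), i.e., it runs the two instances with the same geometric residual and shared relay-to-relay shadowing, applies the optimal $(z+1)$-policy to the $z$-instance, and compares the last hop. Your case analysis (a)/(b) is the right adaptation to the max-power cost, and your appeal to monotonicity of $J$ in $\gamma_{max}$ (Lemma~\ref{lemma:properties_J_vs_xi_max_power_sum_outage_with_backtracking}) to handle the dropped power update in case~(b) is the natural extra ingredient.

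The difference is that your first two steps --- extracting the one-step recursion $J(z;\mathbf 0;\gamma_{max})=\theta\,h(z+1;\gamma_{max})+(1-\theta)J(z+1;\mathbf 0;\gamma_{max})$ and then reducing by downward induction to the single base case $J(B-1;\mathbf 0;\gamma_{max})\ge h(B-1;\gamma_{max})$ --- are redundant scaffolding. As you yourself note, the coupling already delivers $J(z;\mathbf 0;\gamma_{max})\le J(z+1;\mathbf 0;\gamma_{max})$ for every $z$, so there is no need to first isolate a base case; the paper goes straight to the coupling. Your recursion is correct and gives a clean reformulation (and the identity $J(z+1)-J(z)=\theta\bigl(J(z+1)-h(z+1)\bigr)$ is a nice by-product), but it does not shorten the argument, since you cannot verify $J(B-1;\mathbf 0;\gamma_{max})\ge h(B-1;\gamma_{max})$ without the coupling anyway --- precisely because, as you correctly identify, the first measured hop has favorably biased shadowing and the naive one-hop bound fails.
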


\begin{proof}
It is easy to show that $J(z+1;\mathbf{0};\gamma_{max}) \geq J(z;\mathbf{0}; \gamma_{max})$, by similar arguments as in 
the proof of Lemma~\ref{lemma:properties_J_vs_z_max_power_sum_outage_with_backtracking}. 
\end{proof}

\subsection{Policy Structure}\label{subsec:policy_structure_max_power_sum_outage_with_backtracking_discounted}
   
\begin{thm}\label{thm:policy_structure_max_power_sum_outage_with_backtracking}
 The optimal action at state $(\underline{w};\gamma_{max})$ is the pair $(u,\gamma)$ achieving the minimum in 
(\ref{eqn:bellman_equation_max_power_sum_outage_with_backtracking}). 
The minimum is always achieved since we have finite action space.
\end{thm}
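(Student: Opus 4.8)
The plan is to invoke the general theory of total-cost Markov decision processes with nonnegative single-stage costs, exactly as was done for the no-backtracking formulations of Sections~\ref{sec:sum_power_sum_outage_no_backtracking_discounted} and \ref{sec:max_power_sum_outage_no_backtracking_discounted}. First I would note that the MDP set up in Section~\ref{subsec:problem_formulation_max_power_sum_outage_with_backtracking_discounted}, with state space $\big(\mathcal{W}^B \cup \{z;\mathbf{0}\}_{0\le z\le B-1}\big)\times(\mathcal{S}\cup\{0\})$, the action sets described there, and per-stage costs of the form $\xi_o P_{out}(u,\gamma,w_u)+\xi_r$ together with the terminal last-hop cost $\max\{\gamma,\gamma_{max}\}+\xi_o P_{out}(\cdot)$, has all single-stage costs nonnegative, so Assumption~P of Chapter~$3$ of \cite{bertsekas07dynamic-programming-optimal-control-2} is satisfied. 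Hence one may restrict attention to stationary deterministic Markov policies, the optimal cost-to-go $J(\cdot)$ solves the Bellman equation~(\ref{eqn:bellman_equation_max_power_sum_outage_with_backtracking}), and any stationary policy that at every state selects an action attaining the minimum on the right-hand side of~(\ref{eqn:bellman_equation_max_power_sum_outage_with_backtracking}) is optimal. That is precisely the assertion of the theorem.

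Second, I would check that the prescribed policy is well-defined, i.e.\ that the minimum is attained. At a state $(\underline{w};\gamma_{max})$ the minimization runs over $(u,\gamma)\in\{A+1,\dots,A+B\}\times\mathcal{S}$, which is a finite set, so the minimum is attained; at a state $(z;\mathbf{0};\gamma_{max})$ the only available action is to explore the next $(A+B)$ steps, so there is no choice to make. By Lemma~\ref{lemma:properties_J_vs_xi_max_power_sum_outage_with_backtracking} (and the value-iteration convergence on which it rests) the terms $J(A+B-u;\mathbf{0};\max\{\gamma,\gamma_{max}\})$ appearing inside the bracket are finite, so the quantity being minimized is a finite real-valued function on a finite set and its set of minimizers is nonempty.

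Since the statement claims nothing beyond ``the greedy action with respect to $J$ is optimal and the argmin is nonempty,'' I do not expect a genuine obstacle here: unlike Theorems~\ref{theorem:policy_structure_sum_power_sum_outage_no_backtracking} and \ref{theorem:policy_structure_max_power_sum_outage_no_backtracking}, we are not asserting a monotone-threshold characterization in the backtracking case, so the only work is to record that the standard dynamic-programming machinery applies. The one mild point worth stating carefully is that the value iteration~(\ref{eqn:value_iteration_max_power_sum_outage_with_backtracking}) does converge monotonically to the unique optimal value function in the relevant class; this is obtained exactly as in Lemma~\ref{lemma:value_iteration_sum_power_sum_outage_no_backtracking} (a monotone-convergence argument for the nonnegative-cost MDP) and is already used implicitly in the proof of Lemma~\ref{lemma:properties_J_vs_xi_max_power_sum_outage_with_backtracking}. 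With that in place, the theorem follows immediately from Proposition~$3.1.1$ of \cite{bertsekas07dynamic-programming-optimal-control-2}.
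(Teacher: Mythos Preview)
Your proposal is correct and in fact more detailed than the paper: the paper gives no proof at all for this theorem (nor for the parallel Theorem~\ref{thm:policy_structure_sum_power_sum_outage_with_backtracking}), treating the statement as immediate from the Bellman equation and the finiteness of the action space. The justification you spell out---nonnegative single-stage costs, Assumption~P of \cite{bertsekas07dynamic-programming-optimal-control-2}, and the fact that a stationary policy selecting a minimizer in the Bellman equation is optimal---is exactly the machinery invoked earlier in the paper (see the proof of Theorem~\ref{theorem:policy_structure_sum_power_sum_outage_no_backtracking}, which cites Proposition~$3.1.3$ rather than $3.1.1$ for the ``minimizer is optimal'' step), so your argument is fully aligned with the paper's approach.
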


\subsection{Policy Computation}\label{subsec:policy_computation_max_power_sum_outage_with_backtracking_discounted}

Note that in the $k$-th iteration of the value iteration obtained from the Bellman equation 
(\ref{eqn:bellman_equation_max_power_sum_outage_with_backtracking}), we need to update $J^{(k)}(\underline{w};\gamma_{max})$ 
for $|\mathcal{W}|^{B}|(\mathcal{S}|+1)$ possible values of the state $(\underline{w};\gamma_{max})$, 
which could be computationally very much expensive for 
large values of $|\mathcal{W}|$. Let us define the sequence of functions $\{V^{(k)}(\gamma_{max})\}_{k \geq 0}$ by 
$V^{(0)}(\gamma_{max})=0$, $V^{(k)}(\gamma_{max}) =\sum_{\underline{w} \in \mathcal{W}^B} g(\underline{w}) J^{(k)}(\underline{w};\gamma_{max})$. 
Now consider the following 
iteration (with $J^{(0)}(z;\mathbf{0};\gamma_{max}):=0$ for all $0 \leq z \leq B-1$) obtained from 
(\ref{eqn:value_iteration_max_power_sum_outage_with_backtracking}):

\begin{eqnarray}
&&V^{(k+1)}(\gamma_{max})\nonumber\\
&& =\sum_{\underline{w}} g(\underline{w}) \min_{u \in \{A+1,\cdots,A+B\},\gamma \in \mathcal{S}} \bigg\{ \xi_o P_{out}(u,\gamma,w_u) \nonumber\\
&& + \xi_r +J^{(k)}(A+B-u;\mathbf{0}; \max\{\gamma,\gamma_{max}\}) \bigg\}\nonumber\\
&& J^{(k+1)}(z;\mathbf{0};\gamma_{max})=\sum_{k=1}^{A+B-z} (1-\theta)^{k-1} \theta \mathbb{E}_{W_{z+k}} \nonumber\\
&& \min_{\gamma \in \mathcal{S}} \bigg( \max\{\gamma,\gamma_{max}\} + \xi_o P_{out}(z+k,\gamma,W_{z+k}) \bigg) \nonumber\\
&& +(1-\theta)^{A+B-z} V^{(k)}(\gamma_{max})
\label{eqn:function_iteration_max_power_sum_outage_with_backtracking}
\end{eqnarray}

By Monotone Convergence Theorem, 
$V^{(k)} (\gamma_{max}) \uparrow \sum_{\underline{w}} g(\underline{w}) J(\underline{w};\gamma_{max})$. Hence, 
we can just use the function iteration (\ref{eqn:function_iteration_max_power_sum_outage_with_backtracking}) 
to compute the optimal 
value function, from which the policy can be computed. The advantage of this function iteration is that we need node update 
$J^{(k)}(\cdot)$ for each state.

\subsection{Comparison of the Optimal Expected Costs of Problem~(\ref{eqn:sum_power_discounted_no_backtracking}) and 
Problem~(\ref{eqn:cost_function_max_power_sum_outage})}\label{subsubsec:comparison_costs_max_power_vs_sum_power}

\begin{thm}\label{thm:comparison_costs_max_power_vs_sum_power}
 Under the same class of policies, the optimal expected cost for Problem~(\ref{eqn:sum_power_discounted_no_backtracking}) is always 
greater than or equal to that of Problem~(\ref{eqn:max_power_discounted_no_backtracking}).
\end{thm}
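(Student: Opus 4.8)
The plan is to prove the inequality sample-path-by-sample-path and then pass to the optima. First I would observe that the two objective functionals in (\ref{eqn:sum_power_discounted_no_backtracking}) and (\ref{eqn:max_power_discounted_no_backtracking}) share the terms $\xi_o \sum_{i=1}^{N+1} P_{out}^{(i,i-1)}$ and $\xi_r N$ verbatim; they differ \emph{only} in that the first uses $\sum_{i=1}^{N+1}\Gamma^{(i,i-1)}$ where the second uses $\max_{i \in \{1,\dots,N+1\}}\Gamma^{(i,i-1)}$.

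Next I would make precise the phrase ``under the same class of policies.'' A deployment policy is a rule that, given the history of measurements taken and of placements/powers chosen so far, prescribes the next action; any such rule is feasible for \emph{both} formulations. In particular the extra state coordinate $\gamma_{max}$ of the max-power MDP is a deterministic function of that history, so carrying it along neither enlarges nor shrinks the set of implementable rules. Fixing any such policy $\pi$ and running it on the common underlying randomness (the geometrically distributed length $L$ together with the i.i.d.\ shadowing on the links) yields a single random outcome $(N, \{\Gamma^{(i,i-1)}\}, \{P_{out}^{(i,i-1)}\})$, which is the \emph{same} object whether we subsequently score it with the sum-power cost or with the max-power cost.

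The key step is then the elementary observation that, since every transmit power satisfies $\Gamma^{(i,i-1)} \ge 0$, on every sample path $\max_{i}\Gamma^{(i,i-1)} \le \sum_{i=1}^{N+1}\Gamma^{(i,i-1)}$; adding the common terms $\xi_o \sum_i P_{out}^{(i,i-1)} + \xi_r N$ to both sides and taking $\mathbb{E}_\pi$ shows that the expected sum-power cost of $\pi$ is at least its expected max-power cost. To finish, I would take $\pi^\star$ optimal for (\ref{eqn:sum_power_discounted_no_backtracking}): the optimal sum-power cost equals $\mathbb{E}_{\pi^\star}$ of the sum-power objective, which is $\ge \mathbb{E}_{\pi^\star}$ of the max-power objective, which is in turn $\ge \min_{\pi \in \Pi}$ of the max-power objective, i.e.\ the optimal max-power cost. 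The same chain of reasoning applies verbatim to the backtracking formulations, only the description of a ``policy'' changes.

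I do not anticipate a real obstacle here. The single point needing care is the bookkeeping that the two MDP descriptions are just two state representations of one and the same controlled deployment process, so that a policy optimal for one problem is admissible for the other; granting that, the theorem reduces to ``$\max \le \mathrm{sum}$ for nonnegative numbers'' followed by monotonicity of the infimum over policies.
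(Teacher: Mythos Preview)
Your proposal is correct and follows essentially the same approach as the paper's own proof: fix a policy, couple the two problems on the same realization of $L$ and the shadowing, use the pathwise inequality $\max_i \Gamma^{(i,i-1)} \le \sum_i \Gamma^{(i,i-1)}$ for nonnegative powers, take expectations, and then pass to the infimum over $\Pi$. Your write-up is in fact a bit more careful than the paper's in spelling out why the two MDP state descriptions induce the same class of implementable policies.
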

\begin{proof}
 Let $\Pi$ be a class of policies, and let $\pi \in \Pi$ be a specific policy. Consider any realization of $L$ and  
any realization of shadowing 
in all potential links; under policy $\pi$, relays will be placed at some locations and the relays and the source will use 
some transmit power levels. But, for any such deployed network the sum power is 
always greater than or equal to the max power, and hence we can write 
$J_{\pi}^{sum} \geq J_{\pi}^{max}$, where $J_{\pi}^{sum}$ and $J_{\pi}^{max}$ are the expected costs under policy $\pi$ of 
the problems (\ref{eqn:sum_power_discounted_no_backtracking}) and (\ref{eqn:max_power_discounted_no_backtracking}) respectively.  
Hence, $\inf_{\pi \in \Pi} J_{\pi}^{sum} \geq \inf_{\pi \in \Pi} J_{\pi}^{max}$, which completes the proof.
\end{proof}

\section{Average Cost Per Step: With and Without Backtracking}\label{sec:backtracking_average_cost}

Consider the deployment process as described in Section~\ref{subsection:deployment_process_notation}. 
After making the measurements $(w_{A+1},w_{A+2},\cdots,w_{A+B})$, the deployment agent 
chooses one integer $u$ from the set $\{A+1,A+2,\cdots,A+B\}$ 
and places the relay $u$ steps away from the last relay and also decides at what transmit 
power $\gamma \in \mathcal{S}$ the new relay should operate. 
The objective is to minimize the long-run expected average cost per step.

\subsection{Problem Formulation}\label{subsec:smdp-formulation}

We formulate our problem as a Semi-Markov Decision Process (SMDP) with state space $\mathcal{W}^B$ and action space 
$\{A+1,A+2,\cdots,A+B\} \times \mathcal{S}$. 
After placing a relay, the deployment agent measures $\underline{w}:=(w_{A+1},w_{A+2},\cdots,w_{A+B})$ 
which is the state in our SMDP. At state $\underline{w}$, if the action $(u,\gamma)$ is taken (where $A+1 \leq u \leq A+B$ 
and $\gamma \in \mathcal{S}$), the cost 
$c(\underline{w},u,\gamma):=(\gamma+\xi_o P_{out}(u,\gamma,w_u)+\xi_r)$ is incurred and the next 
state becomes $\underline{w}^{'}:=(w_{A+1}^{'},w_{A+2}^{'},\cdots,w_{A+B}^{'})$ with probability 
$g(\underline{w}^{'}):=\Pi_{r=A+1}^{A+B}p_{W_{r}}(w_{r}^{'})$ (since shadowing is i.i.d across links). 
A deterministic Markov policy $\pi$ is a sequence of mappings $\{\mu_k\}_{k \geq 1}$ 
from the state space to the action space, and it is called a stationary policy if $\mu_k=\mu$ for all $k \geq 1$. 
Let us denote, by the vector-valued random variable $\underline{W}(k)$, the state at the $k$-th decision instant, and by 
$\mu_k(\underline{W}(k))$ the action at the $k$-th decision instant. 
For a deterministic Markov policy $\{\mu_k\}_{k \geq 1}$, let us define the functions 
$\mu_k^{(1)}:\mathcal{W}^B \rightarrow \{A+1,A+2,\cdots,A+B\}$ 
and $\mu_k^{(2)}:\mathcal{W}^B \rightarrow \mathcal{S}$ as follows: if $\mu_k(\underline{w})=(u,\gamma)$, then 
$\mu_k^{(1)}(\underline{w})=u$ and $\mu_k^{(2)}(\underline{w})=\gamma$. 

Our problem is to minimize the long-run average cost per step 
(see equation (5.33) of \cite{bertsekas07dynamic-programming-optimal-control-2} for definition) as follows:

\footnotesize
\begin{equation}
 \inf_{\pi \in \Pi} \, \,  \limsup_{n \rightarrow \infty} \frac{\sum_{k=1}^n \mathbb{E}_{\mu_k} 
c \bigg(\underline{W}(k),\mu_k^{(1)}(\underline{W}(k)),\mu_k^{(2)}(\underline{W}(k))\bigg)}{\sum_{k=1}^n \mathbb{E}_{\pi} \mu_k^{(1)}(\underline{W}(k))} \label{eqn:smdp-problem}
\end{equation}
\normalsize

where $\Pi$ denotes the set of all deterministic, Markov policies, $\pi=\{\mu_i\}_{i \geq 1}$ is a 
specific deterministic, Markov policy and $c(\cdot,\cdot,\cdot)$ is the cost incurred 
when we place a relay (as explained earlier in this section). 
Note that, under any policy, the state evolution process is a positive recurrent Discrete Time Markov 
Chain (DTMC) (under i.i.d shadowing assumption, $\underline{W}(k)$ will be 
i.i.d across $k, k \geq 1$). Also, the state and action spaces are finite. 
Hence, it is sufficient to work with stationary deterministic policies (see \cite{tijms03stochastic-models}).
 
Under our current scenario, the average cost per 
step exists (in fact, the limit exists) and is same for all states, i.e. for all $\underline{w} \in \mathcal{W}^B$.
Let us denote the optimal average cost per step by $\lambda^{*}$.

\subsection{Policy Structure}\label{subsec:smdp-policy-structure}

\begin{thm}\label{thm:policy_structure_smdp_backtracking}
 The optimal action at state $\underline{w}$ in the problem (\ref{eqn:smdp-problem}) is given by: 

\begin{footnotesize}
\begin{eqnarray}
 \mu^*(\underline{w})= \argmin_{u \in \{A+1,\cdots,A+B\}, \gamma \in \mathcal{S}} \bigg( \gamma + \xi_o P_{out}(u,\gamma,w_u) + \xi_r -\lambda^{*}u \bigg) \label{eqn:smdp-optimal-policy}
\end{eqnarray}
\end{footnotesize}
\end{thm}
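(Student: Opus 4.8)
The plan is to exploit the special structure of this SMDP: because shadowing is i.i.d.\ across links, the state $\underline{W}(k+1)$ reached after a decision has law $g(\cdot)$ regardless of the current state $\underline{w}$ and of the action $(u,\gamma)$ taken. Consequently, under any stationary deterministic policy $\mu$ the sequence $\{\underline{W}(k)\}_{k\ge 1}$ is i.i.d.\ with law $g$, so the per-decision cost $c(\underline{W}(k),\mu^{(1)}(\underline{W}(k)),\mu^{(2)}(\underline{W}(k)))$ and the per-decision step count $\mu^{(1)}(\underline{W}(k))$ are i.i.d.\ with finite means, and by the renewal--reward theorem (equivalently, the strong law of large numbers) the $\limsup$ in (\ref{eqn:smdp-problem}) is in fact a limit, equal to
\[
\frac{\mathbb{E}_{g}\, c\big(\underline{W},\mu^{(1)}(\underline{W}),\mu^{(2)}(\underline{W})\big)}{\mathbb{E}_{g}\,\mu^{(1)}(\underline{W})}.
\]
Since, as noted in Section~\ref{subsec:smdp-formulation}, it suffices to optimize over stationary deterministic policies, problem (\ref{eqn:smdp-problem}) reduces to minimizing this ratio over all maps $\mu:\mathcal{W}^B\to\{A+1,\dots,A+B\}\times\mathcal{S}$.

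Next I would invoke the standard (Dinkelbach) characterization of the optimum of such a fractional program: since $\mathbb{E}_{g}\,\mu^{(1)}(\underline{W})\ge A+1>0$ for every $\mu$, the optimal value $\lambda^{*}$ is the unique real number for which
\[
\min_{\mu}\;\mathbb{E}_{g}\Big(c\big(\underline{W},\mu^{(1)}(\underline{W}),\mu^{(2)}(\underline{W})\big)-\lambda^{*}\,\mu^{(1)}(\underline{W})\Big)=0,
\]
and every minimizer of the left-hand side at $\lambda=\lambda^{*}$ is an optimal policy. Existence of $\lambda^{*}$ and constancy of the average cost across states are already asserted in Section~\ref{subsec:smdp-formulation}; alternatively one reads $\lambda^{*}$ off the average-cost SMDP optimality equation of \cite{bertsekas07dynamic-programming-optimal-control-2}, in which the holding time of a transition out of $\underline{w}$ under action $(u,\gamma)$ is exactly $u$ and the transition kernel is $g(\cdot)$.

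The key observation is that $\mathbb{E}_{g}\big(c(\underline{W},\mu^{(1)}(\underline{W}),\mu^{(2)}(\underline{W}))-\lambda^{*}\mu^{(1)}(\underline{W})\big)$ is an average over $\underline{w}\sim g$ of a quantity depending on $\mu$ only through its value $\mu(\underline{w})$ at that same $\underline{w}$; hence the minimization over $\mu$ decouples state by state, and an optimal policy is obtained by choosing, for each $\underline{w}$,
\[
\mu^{*}(\underline{w})\in\argmin_{u\in\{A+1,\dots,A+B\},\,\gamma\in\mathcal{S}}\big(c(\underline{w},u,\gamma)-\lambda^{*}u\big).
\]
Substituting $c(\underline{w},u,\gamma)=\gamma+\xi_{o}P_{out}(u,\gamma,w_{u})+\xi_{r}$ yields exactly (\ref{eqn:smdp-optimal-policy}), and the minimum is attained because the action space is finite.

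The only genuinely delicate point is the passage from the $\limsup$ over \emph{all} policies in (\ref{eqn:smdp-problem}) to the ratio-of-expectations form over stationary deterministic policies. Given the facts already established in Section~\ref{subsec:smdp-formulation} — positive recurrence, finiteness of the state and action spaces, existence of a state-independent average cost, and sufficiency of stationary deterministic policies, via \cite{bertsekas07dynamic-programming-optimal-control-2,tijms03stochastic-models} — this collapses to the renewal--reward computation above, so everything else is the routine pointwise minimization just described.
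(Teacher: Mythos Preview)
Your argument is correct. Both your proof and the paper's rest on the same structural observation---that the post-decision state has law $g(\cdot)$ independently of the current state and action---but you deploy it differently. The paper simply writes down the SMDP average-cost optimality equation (from \cite{tijms03stochastic-models}) and observes that the continuation term $\sum_{\underline{w}'}g(\underline{w}')v^{*}(\underline{w}')$ is a constant not involving $(u,\gamma)$, so the minimizer in the Bellman equation is exactly (\ref{eqn:smdp-optimal-policy}). You instead bypass the optimality equation: you reduce the problem to minimizing a ratio of expectations via renewal--reward, then use a Dinkelbach-type characterization of the optimal ratio to decouple the minimization state by state. Your route is somewhat more self-contained (it does not lean on the existence/uniqueness theory for the SMDP optimality equation), whereas the paper's is shorter because it cites a standard result and reads off the answer in one line. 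Either way the crux is that the ``future'' contributes the same additive constant to every action, and both proofs isolate this fact cleanly.
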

where $\lambda^*$ is the optimal average cost per step in (\ref{eqn:smdp-problem}).

\begin{proof}
 The optimality equation for the SMDP is given by 
(see \cite{tijms03stochastic-models}, Equation~7.2.2):
\begin{eqnarray}
 v^*(\underline{w}) &=& \min_{u \in \{A+1,\cdots,A+B\}, \gamma \in \mathcal{S}}\bigg\{ \gamma+  \xi_o P_{out}(u,\gamma,w_u)+ \xi_r \nonumber\\
&& -\lambda^{*}u+\sum_{\underline{w}^{'} \in \mathcal{W}^B}g(\underline{w}^{'}) v^*(\underline{w}^{'}) \bigg\}\nonumber\\ \nonumber\\
&& v^*(\underline{w}^{''})=0 \text{ for some } \underline{w}^{''} \in \mathcal{W}^{B}\label{eqn:optimality-smdp}
\end{eqnarray}
where $v^*(\underline{w})$ is the optimal differential cost corresponding to state $\underline{w}$. 
$v^*(\underline{w}^{''})=0 \text{ for some } \underline{w}^{''} \in \mathcal{W}^{B}$ is required to ensure that the 
system of equations in (\ref{eqn:optimality-smdp}) has a unique solution. The structure of the optimal policy is obvious 
from (\ref{eqn:optimality-smdp}), since 
$\sum_{\underline{w}^{'} \in \mathcal{W}^B } g(\underline{w}^{'}) v^*(\underline{w}^{'})$ does not depend on $(u,\gamma)$.
\end{proof}

{\em Remark:} If we take an action $(u,\gamma)$, a cost $( \gamma +  \xi_o P_{out}(u,\gamma,w_u)+ \xi_r)$ will be incurred. 
On the other hand, if 
we incur a cost of $\lambda^*$ over each one of those $u$ steps, the total cost incurred will be $\lambda^*u$. 
The policy selects the placement point that minimizes the difference between these two costs. 
Note that due to the choice of the steps at which measurements are made, the shadowing is 
independent over the steps. This results in each placement point being  a regeneration point in the placement process.

\begin{thm}\label{thm:smdp-cost-vs-xi}
 The optimal average cost $\lambda^{*}$ is jointly concave and increasing in $\xi_r$ and $\xi_o$.
\end{thm}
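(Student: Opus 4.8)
The plan is to exploit the renewal structure of the placement process together with the affine dependence of the single-stage cost on $(\xi_o,\xi_r)$. Recall from Section~\ref{subsec:smdp-formulation} that, under the i.i.d.\ shadowing assumption, the state sequence $\{\underline{W}(k)\}_{k\ge 1}$ is i.i.d.\ irrespective of the policy, that it suffices to restrict attention to stationary deterministic policies, and that the long-run average cost per step exists. For a stationary deterministic policy $\mu=(\mu^{(1)},\mu^{(2)})$ every placement is a regeneration epoch, so by the renewal--reward theorem the average cost per step under $\mu$ is
\[
\lambda(\mu)=\frac{\mathbb{E}_{\underline{W}}\, c\big(\underline{W},\mu^{(1)}(\underline{W}),\mu^{(2)}(\underline{W})\big)}{\mathbb{E}_{\underline{W}}\,\mu^{(1)}(\underline{W})},
\]
and, since the state and action spaces are finite, $\lambda^{*}=\min_{\mu}\lambda(\mu)$ over the finite set of stationary deterministic policies.

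Next I would fix $\mu$ and examine the dependence of $\lambda(\mu)$ on $(\xi_o,\xi_r)$. The single-stage cost $c(\underline{w},u,\gamma)=\gamma+\xi_o P_{out}(u,\gamma,w_u)+\xi_r$ is affine in $(\xi_o,\xi_r)$ for every fixed $(\underline{w},u,\gamma)$, so its expectation over $\underline{W}$ with $\mu$ held fixed is again affine in $(\xi_o,\xi_r)$; moreover the coefficient of $\xi_o$ equals $\mathbb{E}_{\underline{W}}\,P_{out}(\mu^{(1)}(\underline{W}),\mu^{(2)}(\underline{W}),W_{\mu^{(1)}(\underline{W})})\ge 0$ and the coefficient of $\xi_r$ equals $1>0$. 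The denominator $\mathbb{E}_{\underline{W}}\,\mu^{(1)}(\underline{W})$ is a strictly positive constant that does not depend on $(\xi_o,\xi_r)$. Hence $\lambda(\mu)$ is an affine function of $(\xi_o,\xi_r)$ with nonnegative coefficients, and is therefore jointly concave and nondecreasing in each of $\xi_o$ and $\xi_r$.

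Finally, $\lambda^{*}=\min_{\mu}\lambda(\mu)$ is a pointwise minimum of finitely many affine functions, hence jointly concave in $(\xi_o,\xi_r)$; being a pointwise minimum of finitely many functions each nondecreasing in $\xi_o$ and in $\xi_r$, it is itself nondecreasing in $\xi_o$ and in $\xi_r$. This establishes the claim.

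The only non-routine ingredient is the reduction from arbitrary policies to the finitely many stationary deterministic ones and the validity of the renewal--reward expression for $\lambda(\mu)$; both were already justified in Section~\ref{subsec:smdp-formulation} (using that the state process is policy-independent and that the state and action spaces are finite, cf.\ \cite{tijms03stochastic-models}), so no additional argument is needed and I do not expect any genuine obstacle beyond bookkeeping.
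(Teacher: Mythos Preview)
Your proposal is correct and follows essentially the same approach as the paper: express $\lambda^{*}$ via the renewal--reward theorem as a minimum over stationary deterministic policies of ratios whose numerators are affine and nondecreasing in $(\xi_o,\xi_r)$ and whose denominators are independent of $(\xi_o,\xi_r)$, then conclude by taking the pointwise minimum of affine nondecreasing functions. The paper's argument is slightly terser but identical in substance.
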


\begin{proof}
See Appendix~\ref{appendix:backtracking_average_cost}.
\end{proof}

%
%

\subsection{Policy Computation}\label{subsec:smdp-policy-computation}
We adapt a policy iteration from \cite{tijms03stochastic-models} based algorithm 
to calculate $\lambda^{*}$. The algorithm generates a sequence of stationary  policies 
$\{\mu_k\}_{k \geq 1}$ (note that the notation $\mu_k$ was used for a different purpose in 
Section~\ref{subsec:smdp-formulation}; in this subsection each $\mu_k$ is a stationary, deterministic, Markov policy), 
such that for any $k \geq 1$, 
$\mu_k(\underline{w}):\mathcal{W}^B \rightarrow \{A+1,A+2,\cdots,A+B\} \times \mathcal{S}$ maps a state into some action. 
Define the sequence $ \{\mu^{(1)}_k, \mu^{(2)}_k \}_{k \geq 1}$ of functions as 
follows: if $\mu_k(\underline{w})=(u,\gamma)$, then 
$\mu^{(1)}_k (\underline{w})=u$ and $\mu^{(2)}_k (\underline{w})=\gamma$.

\vspace{5mm}
{\bf Policy Iteration based Algorithm:}

{Step~$0$ (Initialization):} Start with an initial stationary deterministic policy $\mu_1$. 

{Step~$1$ (Policy Evaluation):} Calculate the average cost $\lambda_{k}$ corresponding to the policy $\mu_{k}$, for $k \geq 1$. 
This can be done by applying the Renewal Reward Theorem as follows: 

\footnotesize
\begin{equation}
 \lambda_{k}=\frac{  \xi_r+\sum_{\underline{w}} g(\underline{w}) \bigg(\mu^{(2)}_k(\underline{w})+ \xi_o P_{out}(\mu^{(1)}_k(\underline{w}),\mu^{(2)}_k(\underline{w}),w_{\mu^{(1)}_k(\underline{w})})\bigg) } 
{ \sum_{\underline{w}} g(\underline{w}) \mu^{(1)}_k(\underline{w})  ×} \label{eqn:smdp-policy-evaluation}
\end{equation}
\normalsize

{Step~$2$ (Policy Improvement):} Find a new policy $\mu_{k+1}$ by solving the following:

\footnotesize
\begin{eqnarray}
 \mu_{k+1}(\underline{w})=\argmin_{(u,\gamma) } \bigg( \gamma+P_{out}(u,\gamma,w_u)+\xi_r 
-\lambda_{k}u \bigg)\label{eqn:smdp-policy-improvement}
\end{eqnarray}
\normalsize

If $\mu_{k}$ and $\mu_{k+1}$ are the same policy (i.e., if $\lambda_{k-1}=\lambda_k$), then stop and declare $\mu^{*}=\mu_{k}$, $\lambda^{*}=\lambda_{k}$. Otherwise, go to 
Step~$1$.
\qed

{\em Remark:} It was shown in \cite{tijms03stochastic-models} that this policy iteration will converge in a finite number of iterations, for 
finite state and action spaces as in our current problem. The convergence requires that under any stationary policy, the 
state evolves as an irreducible Markov chain, which is satisfied in our current problem.

{\em Computational Complexity:} The state space has cardinality $|\mathcal{W}|^B$, and hence $O(|\mathcal{W}|^B)$ 
addition operations are required to compute $\lambda_{k}$ from (\ref{eqn:smdp-policy-evaluation}). 
However, careful manipulation leads 
to a drastic reduction in this computational requirement, as we will see next.

Note that in (\ref{eqn:smdp-policy-improvement}), if the minimum is achieved by more than one pair of $(u, \gamma)$, 
then any one of them 
can be considered to be the optimal action. Let us use the convention that among all minimizers the pair $(u,\gamma)$ 
with minimum $u$ will be considered 
as the optimal action, and if there are more than one such minimizing pair with same values of $u$, then the pair 
with smallest value of $\gamma$ will be considered. 
We recall that $\mathcal{S}=\{P_1,P_2,\cdots,P_M \}$.  
Let us denote, under policy $\mu_{k+1}$, 
the probability that the optimal control is $(u,\gamma)$ and 
the shadowing is $w$ at the $u$-th location, by $b_k(u,\gamma, w)$. 
Then,

\footnotesize
\begin{eqnarray}
 b_k(u,\gamma,w)&=& \Pi_{r=A+1}^{u-1} \mathbb{P} \bigg(\min_{\gamma^{'} \in \mathcal{S}} (\gamma^{'}+ \xi_o P_{out}(r,\gamma^{'},W_r))-\lambda_k r \nonumber\\
&& >\gamma+\xi_o P_{out}(u,\gamma,w)-\lambda_k u \bigg) \times p_W(w) \nonumber\\
&& \times \Pi_{r=u+1}^{A+B} \mathbb{P} \bigg(\min_{\gamma^{'} \in \mathcal{S}} (\gamma^{'}+ \xi_o P_{out}(r,\gamma^{'},W_r))-\lambda_k r \nonumber\\ 
&& \geq \gamma+\xi_o P_{out}(u,\gamma,w)-\lambda_k u\bigg) \nonumber\\
&& \times \mathbb{I}\bigg\{\gamma=\argmin\{P_1,P_2,\cdots,P_M\}: \nonumber\\
&& \gamma+P_{out}(u,\gamma,w) \nonumber\\
&& =\min_{\gamma^{'}}(\gamma^{'}+\xi_o P_{out}(u,\gamma^{'},w))\bigg\}  \label{eqn:smdp-probability-calculation}
\end{eqnarray}
\normalsize
Now, we can write,

\footnotesize
\begin{eqnarray}
&& \sum_{\underline{w}} g(\underline{w}) \bigg(\mu^{(2)}_k(\underline{w})+\xi_o P_{out}(\mu^{(1)}_k(\underline{w}),\mu^{(2)}_k(\underline{w}),w_{\mu^{(1)}_k(\underline{w})})\bigg) \nonumber\\
&=& \sum_{u=A+1}^{A+B}\sum_{j=1}^{M} \sum_{w \in \mathcal{W}} b_{k-1}(u,P_j,w) \bigg(P_j+\xi_o P_{out}(u,P_j,w)\bigg) \nonumber\\
\label{eqn:smdp-numerator-simpler}
\end{eqnarray}
\normalsize
and 

\footnotesize
\begin{eqnarray}
 \sum_{\underline{w}} g(\underline{w}) \mu^{(1)}_k(\underline{w})
&=& \sum_{u=A+1}^{A+B}\sum_{j=1}^{M} \sum_{w \in \mathcal{W}} b_{k-1}(u,P_j,w) u  \nonumber\\
&=& \sum_{u=A+1}^{A+B} u \sum_{j=1}^{M}\sum_{w \in \mathcal{W}} b_{k-1}(u,P_j,w) \label{eqn:smdp-denominator-simpler}
\end{eqnarray}
\normalsize

Now, for each $(u,\gamma,w)$, $b_{k-1}(u,\gamma,w)$ (in (\ref{eqn:smdp-probability-calculation})) can be computed in 
$O(BM|\mathcal{W}|)$ operations. 
Hence, total number of operations required to compute $b_{k-1}(u,\gamma,w)$ for all $u,\gamma,w$ is $O(B^2 M^2 |\mathcal{W}|^2)$. 
Now, only $O(BM|\mathcal{W}|)$ operations are required in (\ref{eqn:smdp-numerator-simpler}) and (\ref{eqn:smdp-denominator-simpler}). 
Hence, the number of computations required in each iteration is $O(B^2 M^2 |\mathcal{W}|^2)$.

Note that, the policy improvement step is not explicitly required in the policy iteration. This is because in the 
policy evaluation step, $\lambda_k$ is sufficient to compute $b_k(u, \gamma, w)$ for all $u,\gamma,w$ 
and thereby to compute $\lambda_{k+1}$. 
Hence, we need not store the policy in each iteration.

\subsection{No Backtracking}\label{subsec:average-cost-no-backtracking}

When there is no backtracking (i.e., the deployment agent decides at each 
step whether to place a relay or not), the state and action spaces are same as discussed in 
Section~\ref{subsec:problem_formulation_sum_power_sum_outage_no_backtracking}. 
In this section, we are interested in the minimum average cost per step problem, assuming 
that the line has infinite length. The single-stage cost is the same as in 
Section~\ref{sec:sum_power_sum_outage_no_backtracking_discounted}.

Note that the problem 
(\ref{eqn:sum_power_discounted_no_backtracking}) can be considered as an infinite 
horizon discounted cost problem with discount factor 
$(1-\theta)$. Hence, keeping in mind that we have finite state and action spaces, we observe that  
for the discount factor sufficiently close to $1$, i.e., for $\theta$ sufficiently close to $0$, 
the optimal policy for problem (\ref{eqn:sum_power_discounted_no_backtracking}) is optimal for the 
problem (\ref{eqn:smdp-problem}) 
(see \cite{bertsekas07dynamic-programming-optimal-control-2}, Proposition~4.1.7). In particular, the optimal average 
cost per step with no backtracking, $\lambda'$, is given by $\lambda'=\lim_{\theta \rightarrow 0}\theta J_{\theta}(\mathbf{0})$ 
 (see \cite{bertsekas07dynamic-programming-optimal-control-2}, Section~4.1.1), 
where $J_{\theta}(\mathbf{0})$ is the optimal cost for problem (\ref{eqn:sum_power_discounted_no_backtracking}) with backtracking 
with the probability of the line ending in the next step is $\theta$.

\begin{thm}\label{theorem:comparison-backtracking-no-backtracking}
 $\lambda' \geq \lambda^*$.
\end{thm}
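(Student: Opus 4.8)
The plan is to show that any no-backtracking placement policy can be reproduced, sample path by sample path, by a backtracking policy with the same statistical behaviour. Since $\lambda^{*}$ is the infimum of the average cost per step over the (strictly larger) class of backtracking policies and $\lambda'$ is the corresponding infimum over no-backtracking policies, this immediately gives $\lambda^{*}\le\lambda'$.

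First I would fix an optimal stationary deterministic Markov policy $\pi$ for the no-backtracking average-cost-per-step problem; such a policy exists because the state and action spaces are finite, and its average cost is $\lambda'$ (as recalled just before the theorem). On the infinite line, under $\pi$ the placement instants are renewal epochs: after each placement the agent skips $A$ steps, examines locations $A+1,\dots,A+B$ whose shadowing to the new relay is fresh and i.i.d., and is forced to place at or before location $A+B$. Thus in one cycle $\pi$ places the next relay at some location $U\in\{A+1,\dots,A+B\}$ from the previous relay, using some power $\Gamma\in\mathcal{S}$, and the pair $(U,\Gamma)$ is a deterministic (measurable) function $f$ of the measurement vector $\underline{W}=(W_{A+1},\dots,W_{A+B})$ of that cycle ($U$ is the first location at which $\pi$ decides ``place'', or $A+B$ if there is none, and $\Gamma$ is the power it then prescribes). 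The cost of the cycle is $C=\Gamma+\xi_o P_{out}(U,\Gamma,W_U)+\xi_r$ and its length is $U$ steps, so by the Renewal--Reward Theorem $\lambda'=\mathbb{E}[C]/\mathbb{E}[U]$ with $A+1\le\mathbb{E}[U]\le A+B<\infty$.

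Next I would define the stationary deterministic backtracking policy $\tilde\mu$ for the SMDP of Section~\ref{sec:backtracking_average_cost} by $\tilde\mu(\underline{w}):=f(\underline{w})$; this is admissible because $f(\underline{w})\in\{A+1,\dots,A+B\}\times\mathcal{S}$ and because the backtracking agent, having measured all of $\underline{w}$ before acting, can evaluate $f$. Under $\tilde\mu$ the placements are again renewal epochs (i.i.d. shadowing across links and cycles), each cycle spanning $U(\underline{W})$ steps and incurring $C=\Gamma+\xi_o P_{out}(U,\Gamma,W_U)+\xi_r$ with $(U,\Gamma)$ having the same joint law as under $\pi$; hence the long-run average cost per step of $\tilde\mu$ is again $\mathbb{E}[C]/\mathbb{E}[U]=\lambda'$. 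Since $\lambda^{*}$ is the minimum average cost per step over all backtracking policies, $\lambda^{*}\le\lambda'$, which is the claim.

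The one point that needs care is the mimicking step: one must note that the sequential decisions made by $\pi$ within a cycle depend only on the measurements taken in that cycle (immediate from the Markov structure of the no-backtracking problem), so that the backtracking agent, who holds all of those measurements before acting, can realise exactly the same placement; everything else is the elementary observation that the two policies induce identically distributed renewal cycles and therefore identical renewal--reward ratios. No delicate estimate is involved — the statement is essentially ``more information cannot hurt,'' made precise through the common renewal structure. (Alternatively, the same mimicking argument applied to the discounted problems yields $J_{\theta}^{\mathrm{bt}}(0;\mathbf{0})\le J_{\theta}^{\mathrm{nb}}(\mathbf{0})$ for every $\theta$, and letting $\theta\to0$ recovers the claim; the renewal--reward version above is self-contained.)
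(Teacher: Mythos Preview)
Your proof is correct and follows essentially the same route as the paper's: take an optimal stationary no-backtracking policy, observe that its placement location and power within a cycle are a deterministic function of the shadowing measurements $w_{A+1},\dots,w_{A+B}$ (a stopping-time-type function), implement that same function as a backtracking policy, and conclude via the renewal--reward ratio that this backtracking policy attains cost $\lambda'$, whence $\lambda^{*}\le\lambda'$. The paper phrases the mimicking step through the explicit threshold structure $\{c_{th}(r)\}$ of the no-backtracking optimum, whereas you abstract it as a map $f$, but the content is identical.
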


\begin{proof}
 See Appendix~\ref{appendix:backtracking_average_cost}.
\end{proof}

\section{Numerical Work}\label{sec:numerical_work}

\subsection{Parameter Values}\label{subsec:parameter_values}

Recall the notation used in Section~\ref{sec:system_model_and_notation}. 
We consider deployment along a line with step size $\delta=6$~meters, $A=5$, $B=5$ and $\theta=0.04$ 
(mean length of the line is $25$~steps, i.e., $150$~meters). 
The set of transmit power levels $\mathcal{S}$ is taken to be $\{-25,-15,-10,-5,0\}$~dBm. 
For the channel model as in 
(\ref{eqn:channel_model}), we consider path-loss exponent $\eta=3.8$ and $c=10^{0.00054}$. Fading is assumed to be Rayleigh; 
$H \sim Exponential(1)$. Shadowing $W$ is assumed to be log-normal with $W=10^{\frac{Y}{10×}}$ with 
$Y \sim \mathcal{N}(0, \sigma^2)$ where $\sigma=7$~dB. The values of the parameters in the channel model 
were estimated from data obtained by experiments (using $2.2$~dBi antennas in the transmitter and the receiver) 
in a forest-like environment inside our campus. However, for the purpose 
of numerical computation we assume that $Y$ can take values in the interval $[-4 \sigma, 4 \sigma]$ in steps of $0.02$. Thus we 
have converted the probability density function of $Y$ into the probability mass function of a discrete-valued random variable, and 
the probability of $Y$ being outside the interval $[-4 \sigma, 4 \sigma]$ is negligible ($6.3342 \times 10^{-5}$). This 
discretization renders the state space finite for each problem. We define outage to be the event when the received signal 
power of a packet falls below $P_{rcv-min}=10^{-8.8}$~mW ($-88$~dBm). For a commercial implementation of the PHY/MAC of
IEEE~$802.15.4$ (a popular wireless sensor networking standard), $-88$~dBm received power corresponds to a $2\%$ packet
loss probability for $140$~byte packets.

\vspace{-2mm}
\subsection{Geometrically distributed distance $L$ to the source; no backtracking}\label{subsec:numerical_no_backtracking_discounted}
\vspace{-1mm}
\subsubsection{Sum-Power, Sum-Outage Objective; Policy Structure}\label{subsubsec:policy_structure_numerical_sum_power_sum_outage_no_backtracking_discounted}

\begin{figure}[!t]
\centering
\includegraphics[scale=0.24]{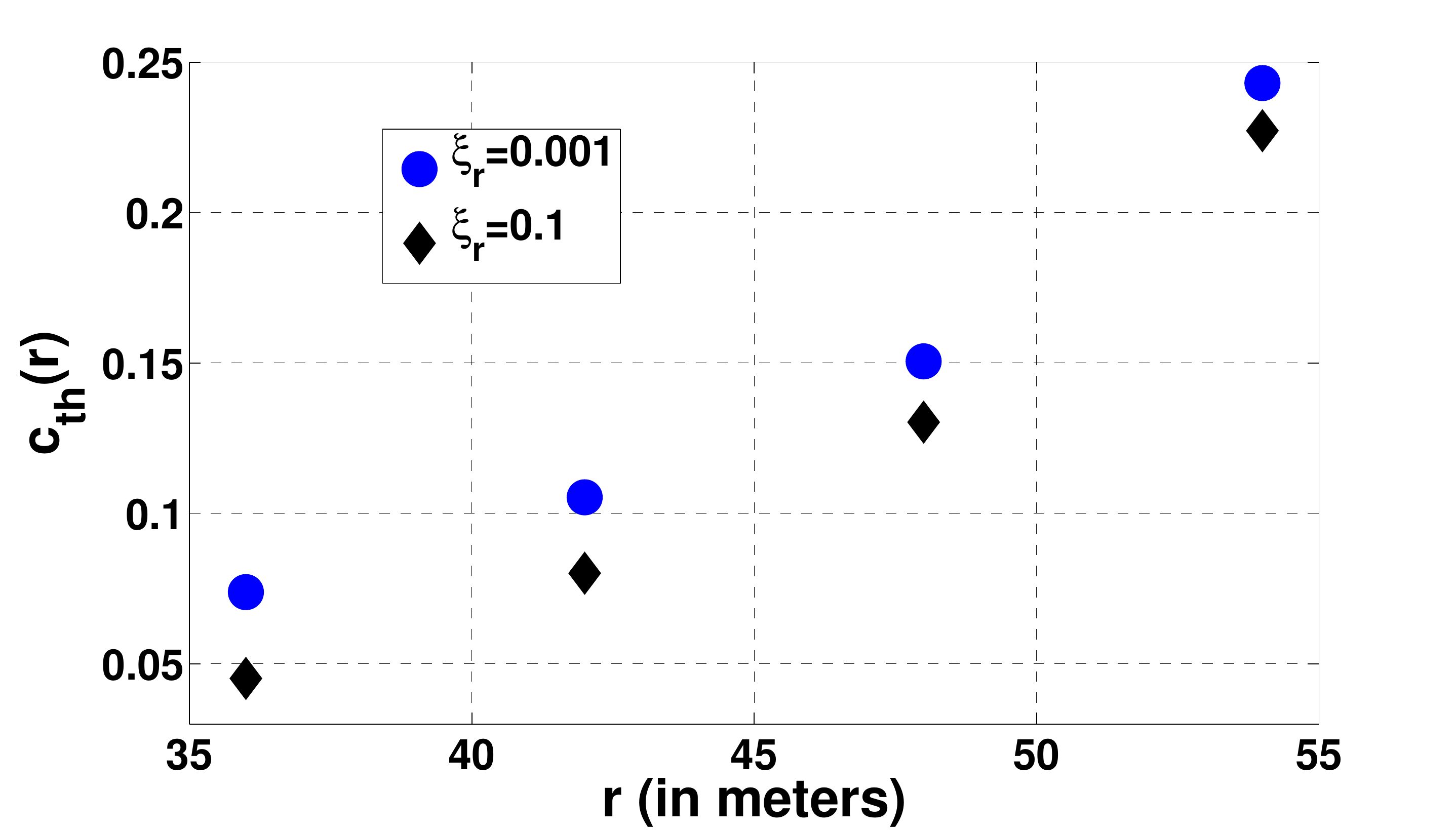}
\vspace{-1mm}
\caption{As-you-go deployment without backtracking; variation of $c_{th}(r)$ with $r$ for $\xi_o=1$ and various values of $\xi_r$.}
\label{fig:threshold_vs_distance_various_relay_cost}
\vspace{-4mm}
\end{figure}

\begin{figure}[!t]
\centering
\includegraphics[scale=0.24]{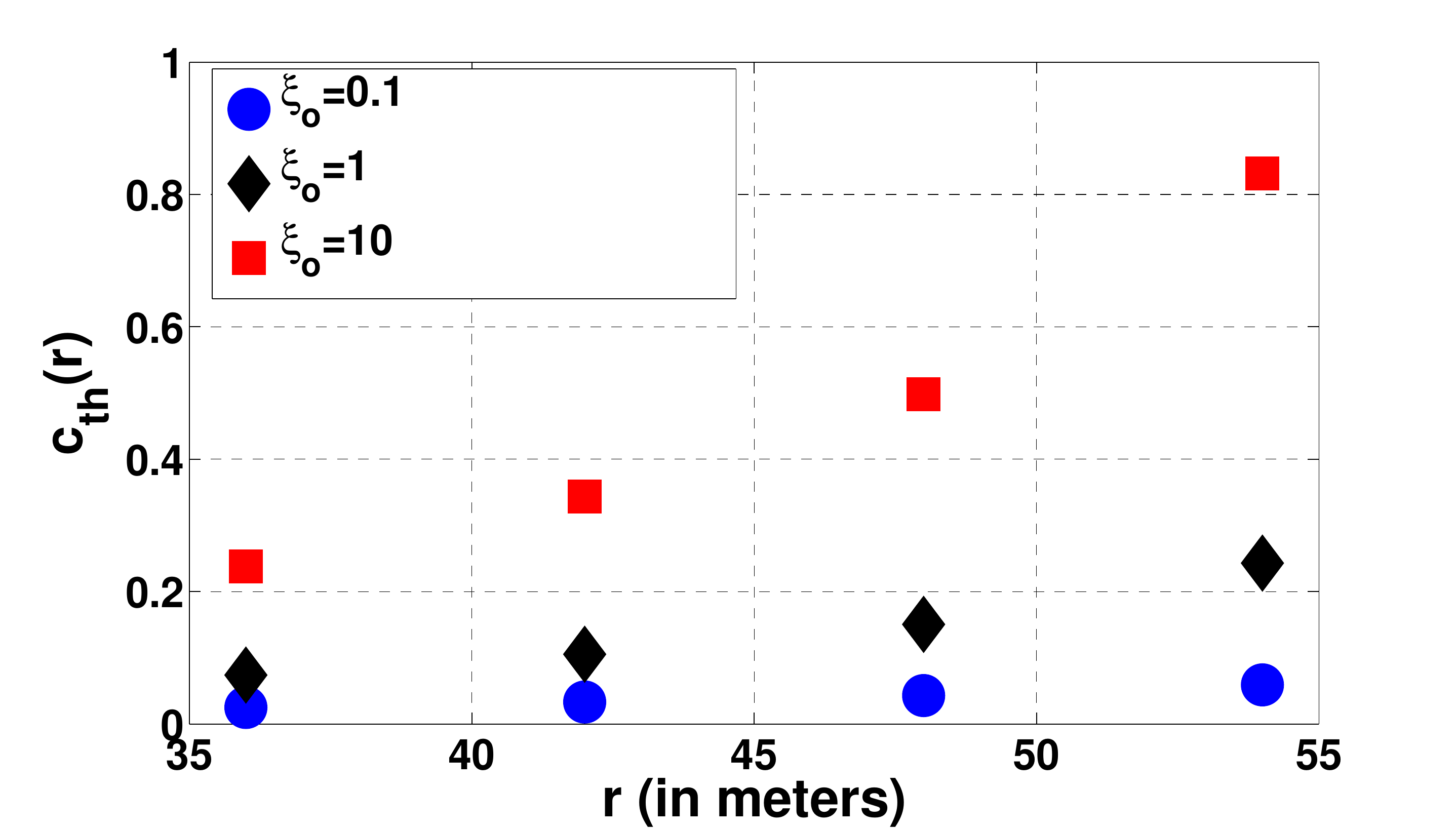}
\vspace{-2mm}
\caption{As-you-go deployment without backtracking; variation of $c_{th}(r)$ with $r$ for $\xi_r=0.001$ and various values of $\xi_o$.}
\label{fig:threshold_vs_distance_various_outage_cost}
\vspace{-5mm}
\end{figure}
    
The variation of $c_{th}(r)$ (for Problem~(\ref{eqn:sum_power_discounted_no_backtracking})) 
with the relay cost $\xi_r$ and the cost of outage $\xi_o$ has been shown 
in Figure~\ref{fig:threshold_vs_distance_various_relay_cost} and Figure~\ref{fig:threshold_vs_distance_various_outage_cost}. 
For a fixed $\xi_o$, $c_{th}(r)$ decreases with $\xi_r$; i.e., as the cost of placing a relay increases, we place relays less 
frequently. On the other hand, for a fixed $\xi_r$, $c_{th}(r)$ increases with $\xi_o$. This happens because if the cost of 
outage increases, we cannot tolerate outage and place the relays close to each other.

\subsubsection{Comparison between the total costs of the sum power and the max power problem}
\label{subsubsec:numerical_comparison_sum_power_max_power_no_backtracking_discounted}

\begin{table}[t!]
\centering
\footnotesize
\begin{tabular}{|c |c |c |c|}
\hline
 $\xi_r$ & $\xi_o$    &  Optimal cost    &   Optimal cost    \\ 
          &            &   for Sum Power    & for Max Power \\ \hline
0.001   &  0.1  & 0.0926  &   0.0472  \\ \hline
0.001   & 1   &  0.2646 &  0.1442  \\ \hline
0.001   & 10   & 0.8177  &  0.4532  \\ \hline
0.01   &  0.1  & 0.1182  & 0.0757   \\ \hline
0.01   &  1  & 0.2925  & 0.1734   \\ \hline
0.01   & 10   & 0.8457  &  0.4826  \\ \hline
\end{tabular}
\normalsize
\caption{Geometrically distributed distance to the source with $\theta=0.04$: comparison of the optimal cost 
without backtracking between problems (\ref{eqn:sum_power_discounted_no_backtracking}) and 
(\ref{eqn:max_power_discounted_no_backtracking}), for 
the parameters in Section~\ref{subsec:parameter_values}, for various values of $\xi_r$ and $\xi_o$.}
\vspace{-0.8cm}
\label{table:comparison_discounted_sum_power_vs_max_power_no_backtracking}
\end{table}

Table~\ref{table:comparison_discounted_sum_power_vs_max_power_no_backtracking} compares the optimal total costs 
without backtracking of the 
problems (\ref{eqn:sum_power_discounted_no_backtracking}) and (\ref{eqn:max_power_discounted_no_backtracking}), for 
various values of $\xi_o$ and $\xi_r$. The first problem always has higher cost  
(Theorem~\ref{thm:comparison_costs_max_power_vs_sum_power}), 
since the sum power in a network is always greater than the max power.

\vspace{-2mm}
\subsection{Geometrically distributed distance $L$ to source; with and without backtracking}\label{subsec:comparison_cost_backtracking_no_backtracking_discounted}
\vspace{-2mm}

\begin{table}[t!]
\centering
\footnotesize
\begin{tabular}{|c |c |c |c|}
\hline
 $\xi_r$ & $\xi_o$    &  Optimal cost    &   Optimal cost    \\ 
          &            &   without backtracking    & with backtracking \\ \hline
0.001   & 0.1   &  0.0926  &  0.0581  \\ \hline
0.001   & 1   &  0.2646  & 0.1502   \\ \hline
0.001   & 10 & 0.8177   &  0.4650  \\ \hline
0.01   &  0.1  & 0.1182   & 0.0806   \\ \hline
0.01   &  1  & 0.2925   &  0.1728  \\ \hline
0.01   &  10  &  0.8457  &  0.4878  \\ \hline
\end{tabular}
\normalsize
\caption{Sum power objective; geometrically distributed distance to the source; with and without backtracking; 
comparison of the optimal cost for various values of $\xi_r$ and $\xi_o$.}
\vspace{-0.8cm}
\label{table:comparison_discounted_sum_power_sum_outage_backtracking_no_backtracking}
\end{table}

The comparison between the optimal cost of as-you-go deployment with and without backtracking, for 
Problem~(\ref{eqn:sum_power_discounted_no_backtracking}), for various values of $\xi_r$ and $\xi_o$, 
and for parameter values as in Section~\ref{subsec:parameter_values}, 
are shown in Table~\ref{table:comparison_discounted_sum_power_sum_outage_backtracking_no_backtracking}. It is obvious that 
backtracking can provide significant reduction in the cost compared to no backtracking, 
due to the fact that in backtracking we choose the best relay location among many (similar arguments as in 
Theorem~\ref{theorem:comparison-backtracking-no-backtracking} works here).

\begin{table}[t!]
\centering
\footnotesize
\begin{tabular}{|c |c |c |c|c|}
\hline
 $\xi_r$ & $\xi_o$    &  mean power     &   mean hop   & Mean outage   \\ 
         &           &  per hop   &     length   &    probability   \\ 
          &           &   (in mW)  &    (in steps)  &   per link  \\  \hline     
0.001   &  0.1   &    0.0092            &   7.5965                 &   0.1157            \\ \hline
0.001   &  1   &   0.0311             &    7.6260                &    0.0251           \\ \hline
0.001   &  10        &   0.0842             &  7.5445                  &  0.0085             \\ \hline
0.01   &  0.1   &  0.0097              &  7.7576                  &  0.1160             \\ \hline
0.01   &  1        &  0.0312              &   7.6900                 &  0.0254             \\ \hline
0.01   &  10      &  0.0844              &     7.5645              &    0.0085           \\ \hline
0.1   &  0.01     &  0.0032              &    10.0000                &    0.7856           \\ \hline
0.1   & 0.1    &  0.0191              &    9.0787                &    0.1382            \\ \hline
0.1   & 1   &  0.0332              &     8.1944               &  0.0305             \\ \hline
0.1   &  10   &  0.0869              &   7.7556                 &     0.0089          \\ \hline
\end{tabular}
\normalsize
\caption{Average cost per step objective with backtracking: mean power per link, mean outage probability per 
link and the mean hop length under the optimal 
policy; various values of $\xi_r$, $\xi_o$.}
\vspace{-0.8cm}
\label{table:various_cost_components_average_cost_backtracking}
\end{table}

\vspace{-2mm}
\subsection{Average cost per step; sum power and sum outage; with and without backtracking}\label{subsec:comparison_average_cost_backtracking_no_backtracking_heuristic}
\vspace{-2mm}

\begin{table}[t!]
\centering
\footnotesize
\begin{tabular}{|c |c |c |c|c|}
\hline
 $\xi_r$ & $\xi_o$    &  $\lambda^{*}$    &   $\lambda^{'}$  & $\lambda_{h}$  \\ \hline
0.001   &  0.1   &  0.0029  & 0.0035  & 0.0029  \\ \hline
0.001   &  1  &  0.0075  & 0.0100   &  0.0075 \\ \hline
0.001   &  10  & 0.0226   & 0.0307 &  0.0228 \\ \hline
0.01   &  0.1   &  0.0040  & 0.0047  & 0.0041  \\ \hline
0.01   &  1  &  0.0087  & 0.0113  &  0.0087 \\ \hline
0.01   &  10  &  0.0238  & 0.0321  &  0.0239 \\ \hline
0.1   &  0.01  & 0.0111   & 0.0111  & 0.0111  \\ \hline
0.1   & 0.1    &  0.0146  & 0.0155  & 0.0147  \\ \hline
0.1   & 1   &  0.0200  & 0.0238  &  0.0200 \\ \hline
0.1   &  10  &  0.0355  & 0.0450  &  0.0357 \\ \hline
\end{tabular}
\normalsize
\caption{Average cost per step objective: as-you-go deployment with and without backtracking and for a heuristic; 
various values of $\xi_r$ and $\xi_o$.}
\vspace{-1.3cm}
\label{table:comparison_average_cost_backtracking_no_backtracking_heuristic}
\end{table}

$\lambda^{*}$ in 
Table~\ref{table:comparison_average_cost_backtracking_no_backtracking_heuristic} denotes the optimal average cost per step 
with backtracking, as discussed in Theorem~\ref{thm:policy_structure_smdp_backtracking}. $\lambda'$ denotes the 
optimal average cost per step without backtracking, as discussed in Section~\ref{subsec:average-cost-no-backtracking}. 
$\lambda_{h}$ is the optimal average cost per step for the following heuristic policy. Recall the notation used in 
Section~\ref{sec:backtracking_average_cost}. The heuristic policy solves the problem (at state $\underline{w}$) 
$\min_{u \in \{A+1,\cdots,A+B\}, \gamma \in \mathcal{S}} \frac{\gamma+\xi_o P_{out}(u,\gamma,w_u)+\xi_r}{u×}$ to 
select the placement location and the transmit power level to use. Note that this heuristic, unlike our earlier 
policies, does not require any channel model to make the placement decision (e.g., we need not know explicitly the 
values of $\eta$, $\sigma$ etc., as we had required earlier to compute $\lambda^{*}$). In this heuristic policy, 
the deployment agent, at each $u \in \{A+1,\cdots,A+B\}$, measures for each $\gamma \in \mathcal{S}$ the outage 
probability to the previous node (without using the model to calculate shadowing). Then he performs 
$\min_{u, \gamma} \frac{\gamma+\xi_o P_{out}(u,\gamma,w_u)+\xi_r}{u×}$ to make the 
placement decision. Thus, the heuristic policy focuses on minimizing the per-step cost 
over the new link. 

Table~\ref{table:various_cost_components_average_cost_backtracking} shows the mean power per link, 
the mean distance between two consecutive nodes, and the mean outage probability per link under the optimal policy 
with backtracking. Note that for some cases (e.g., $\xi_r=0.1$, $\xi_o=0.01$), the relay is always placed at the 
$10$-th step (step~$(A+B)$) and uses $0.0032$~mW (i.e., $-25$~dBm) power, but this renders the outage probability very high. 
However, for each of $\xi_r=0.001,0.01,0.1$, we have reasonably small outage probability for 
higher values of the outage cost ($\xi_o=1,10$). For each $\xi_r$, as $\xi_o$ increases, the outage probability decreases, the 
mean power per link increases (to reduce the outage probability) and the relays are placed closer and closer to each other.

From Table~\ref{table:comparison_average_cost_backtracking_no_backtracking_heuristic}, we find that $\lambda^{*}$ is in general 
substantially smaller than $\lambda^{'}$, except for some special cases where we always place at (or near) 
the $10$-th step and use $-25$~dBm transmit power (the optimal policy without backtracking also does the same in such cases). 
All that it says that by backtracking we can save substantial amount of cost, 
though it will require some additional walking and measurements. However, we notice that 
$\lambda_{h}$ is always equal to or very close to $\lambda^{*}$. 
{\em This shows that this model-free heuristic policy can perform as a very 
good suboptimal policy.}

\section{Conclusion}\label{sec:conclusion}
In this paper, we have developed several approaches for as-you-go deployment of wireless relay
networks assuming very light traffic, using on-line measurements, and permitting backtracking. Each problem was 
formulated as an MDP and its optimal policy structure was studied. 
Numerical results have been provided to illustrate the performance and tradeoffs, and a nice heuristic policy 
was proposed for the average cost per step problem with backtracking. This 
work can be extended in several ways: (i) We could design a more robust network 
by asking for each relay to have multiple neighbours, 
(ii) It may be noted that even though our design approach assumes the lone packet traffic model, 
the network thus obtained will be able to carry a certain amount of positive traffic. Can the design process be modified to increase 
network capacity? All these aspects are problems that we are currently pursuing.

\renewcommand{\thesubsection}{\Alph{subsection}}

\appendices

\section{Impromptu Deployment for Geometrically Distributed Length without Backtracking: Sum Power and 
Sum Outage Objective}\label{appendix:sum_power_sum_outage_no_backtracking_discounted}

\textbf{Proof of Lemma~\ref{lemma:value_iteration_sum_power_sum_outage_no_backtracking}}
Here we have an infinite horizon total cost MDP with finite state space and finite action space. The assumption P of 
Chapter $3$ in \cite{bertsekas07dynamic-programming-optimal-control-2} is satisfied since the single-stage cost is nonnegative. 
Hence, by combining Proposition $3.1.5$ and Proposition $3.1.6$ of \cite{bertsekas07dynamic-programming-optimal-control-2}, 
we obtain the result.

\textbf{Proof of Lemma~\ref{lemma:value_function_properties_sum_power_sum_outage_no_backtracking}}
Note that the function $J^{(0)}(\cdot):=0$ satisfies all the assertions. Let us assume, as our induction 
hypothesis, that $J^{(k)}(\cdot)$ satisfies 
all the assertions. Now $P_{out}(r,\gamma,w)$ is increasing in $r$ and decreasing in $w$ (by our 
channel modeling assumptions in Section~\ref{subsection:channel_model}), 
and the single stage 
costs are linear (hence concave) increasing in $\xi_r$, $\xi_o$. 
Then from the value iteration (\ref{eqn:value_iteration_sum_power_sum_outage_no_backtracking}), 
$J^{(k+1)}(r,w)$ is pointwise minimum of functions which are increasing in $r$, $\xi_o$ and $\xi_r$, 
decreasing in $w$, and jointly concave in $\xi_o$ and $\xi_r$. Similarly, $J^{(k+1)}(\mathbf{0})$ is also pointwise minimum 
of functions which are increasing and jointly concave in $\xi_r$ and $\xi_o$. 
Hence, the assertions hold for $J^{(k+1)}(\mathbf{0})$. 
Since $J^{(k)}(\cdot) \uparrow J(\cdot)$, the results follow.

\textbf{Proof of Theorem~\ref{theorem:policy_structure_sum_power_sum_outage_no_backtracking}}
Consider the Bellman equation (\ref{eqn:bellman_equation_sum_power_sum_outage_no_backtracking}). We will place a relay at state 
$(r,w)$ iff the cost of placing a relay, i.e., 
$\min_{\gamma \in \mathcal{S}} (\gamma+\xi_o P_{out}(r,\gamma,w) )+\xi_r + J(\mathbf{0})$ is less than or equal to 
the cost of not placing, i.e., $\theta \mathbb{E}_W \min_{\gamma \in \mathcal{S}} (\gamma+ \xi_o  P_{out} (r+1,\gamma,W)) + 
(1-\theta)\mathbb{E}_W J(r+1,W)$. Hence, it is obvious 
that we will place a relay at state $(r,w)$ iff 
$\min_{\gamma \in \mathcal{S}} (\gamma+\xi_o P_{out}(r,\gamma,w) ) \leq c_{th}(r)$ where the threshold $c_{th}(r)$ is given by:
\begin{eqnarray}
 c_{th}(r)&=&\theta \mathbb{E}_W \min_{\gamma \in \mathcal{S}} (\gamma+ \xi_o  P_{out} (r+1,\gamma,W)) \nonumber\\
&& + (1-\theta)\mathbb{E}_W J(r+1,W) -(\xi_r + J(\mathbf{0})) \label{eqn:c_th_r_expression}
\end{eqnarray}
By Proposition~$3.1.3$ of \cite{bertsekas07dynamic-programming-optimal-control-2}, if there
exists a stationary policy $\{\mu,\mu,\cdots\}$ such that for each state, the action chosen by the policy is the action that
achieves the minimum in the Bellman equation, then that stationary policy will be an optimal policy, i.e., 
the minimizer in Bellman equation gives the optimal action. Hence, if the decision is to place a relay at state 
$(r,w)$, then the power has to be chosen as 
$\argmin_{\gamma \in \mathcal{S}} \bigg(\gamma+\xi_o P_{out}(r,\gamma,w)\bigg)$. 

Since $P_{out}(r,\gamma,w)$ and $J(r,w)$ is increasing in $r$ for each $\gamma,w$, it is easy to see 
that $c_{th}(r)$ is increasing in $r$.

\section{Impromptu Deployment for Geometrically Distributed Length without Backtracking: Max Power and 
Sum Outage Objective}\label{appendix:max_power_sum_outage_no_backtracking_discounted}

\textbf{Proof of Lemma~\ref{lemma:value_function_properties_max_power_sum_outage_no_backtracking}}
Note that the function $J^{(0)}(\cdot):=0$ satisfies all the assertions. Let us assume, as our induction 
hypothesis, that $J^{(k)}(\cdot)$ satisfies 
all the assertions. Now $P_{out}(r,\gamma,w)$ is increasing in $r$ and decreasing in $w$ (by our 
channel modeling assumptions in Section~\ref{subsection:channel_model}), 
and the single stage 
costs are linear (hence concave) increasing in $\xi_r$, $\xi_o$ and also increasing in $\gamma_{max}$. 
Then from the value iteration (\ref{eqn:value_iteration_max_power_sum_outage_no_backtracking}), 
$J^{(k+1)}(r,w,\gamma_{max})$ is pointwise minimum of functions which are increasing in $r$, $\gamma_{max}$, $\xi_o$ and $\xi_r$, 
decreasing in $w$, and jointly concave in $\xi_o$ and $\xi_r$.  
$J^{(k+1)}(\mathbf{0};\gamma_{max})$ is the sum of pointwise minimum 
of functions which are increasing and jointly concave in $\xi_r$ and $\xi_o$. $J^{(k+1)}(\mathbf{0};\gamma_{max})$ 
is the sum of increasing functions of $\gamma_{max}$. 
Hence, the assertions hold for $J^{(k+1)}(\mathbf{0};\gamma_{max})$. 
Since $J^{(k)}(\cdot) \uparrow J(\cdot)$, the results follow.

\textbf{Proof of Theorem~\ref{theorem:policy_structure_max_power_sum_outage_no_backtracking}}
Consider the Bellman equation (\ref{eqn:bellman_equation_max_power_sum_outage_no_backtracking}). We will place a relay at state 
$(r,w,\gamma_{max})$ iff the cost of placing a relay  
$c_p:=\min_{\gamma \in \mathcal{S}} \bigg( \xi_o P_{out}(r,\gamma,w)+\xi_r 
+ J(\mathbf{0};\max\{\gamma,\gamma_{max}\}) \bigg)$, is less than or equal to 
the cost of not placing  
$c_{np}:=\theta \mathbb{E}_W \min_{\gamma \in \mathcal{S}} \bigg(\max\{\gamma,\gamma_{max}\} + \xi_o  P_{out} (r+1,\gamma,W)\bigg) 
 + (1-\theta)\mathbb{E}_W J(r+1,W,\gamma_{max}) $. This yields the condition that 
$\argmin_{\gamma \in \mathcal{S}} \bigg( \xi_o P_{out}(r,\gamma,w)+ J(\mathbf{0};\max\{\gamma,\gamma_{max}\}) \bigg) \leq c_{np}-\xi_r:=c_{th}(r,\gamma_{max})$. 
Since $c_{np}$ is increasing in $r$ and $\gamma_{max}$, $c_{th}(r,\gamma_{max})$ increases in $r,\gamma_{max}$. 
Also, the minimizer in Bellman equation gives the optimal action 
(by the same arguments as in the proof of Theorem~\ref{theorem:policy_structure_sum_power_sum_outage_no_backtracking}). 
Hence, if the decision is to place a relay at state 
$(r,w,\gamma_{max})$, then the power has to be chosen as 
$\argmin_{\gamma \in \mathcal{S}} \bigg( \xi_o P_{out}(r,\gamma,w)+\xi_r + J(\mathbf{0};\max\{\gamma,\gamma_{max}\}) \bigg)$.

\section{Impromptu Deployment for Geometrically Distributed Length with Backtracking: Sum Power and 
Sum Outage Objective}\label{appendix:sum_power_sum_outage_with_backtracking_discounted}

\begin{figure}[!t]
\centering
\includegraphics[scale=0.4]{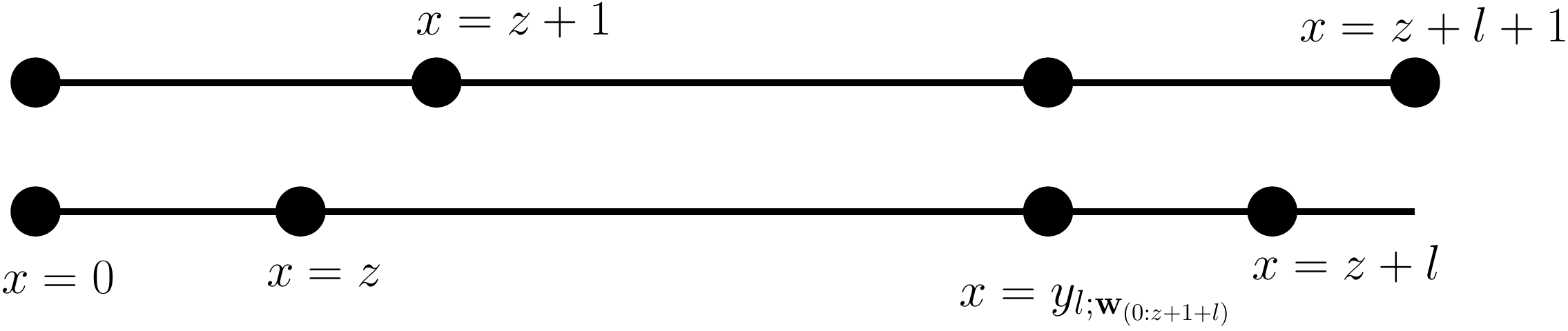}
\vspace{-2mm}
\caption{A diagram illustrating the idea behind the proof of 
Lemma~\ref{lemma:properties_J_vs_z_sum_power_sum_outage_with_backtracking}.}
\label{fig:proof_illustration}
\vspace{-5mm}
\end{figure}

\textbf{Proof of Lemma~\ref{lemma:properties_J_vs_z_sum_power_sum_outage_with_backtracking}}
We will show that $J(z+1;\mathbf{0}) \geq J(z;\mathbf{0})$. Consider two instances of the deployment process where the agent 
has placed a relay 
at his current location. The deployment over the remaining part of the line depends on two things: 
(i) the shadowing realizations in all the links 
over the rest of the line, (ii) the residual length of the line. Suppose that the deployment is being 
done along $x$-axis and that the current location of the deployment agent is $x=0$ in both instances. For the first instance 
the residual length of the line is $(z+1)+L_1$ where $L_1 \sim Geometric(\theta)$ and for the second instance 
the residual length is $(z+L_2)$ where $L_2 \sim Geometric(\theta)$. 

Let us denote the optimal policy for the first instance by $\mu_{z+1}^{*}$ and that for the 
second instance by $\mu_{z}^{*}$. We will prove that 
$J(z+1;\mathbf{0}):=J_{\mu_{z+1}^{*}}(z+1;\mathbf{0}) \geq J_{\mu_{z+1}^{*}}(z;\mathbf{0}) \geq J_{\mu_{z}^{*}}(z;\mathbf{0})=:J(z;\mathbf{0})$, 
where $J_{\mu}(s)$ is the cost-to-go under policy $\mu$ from the state $s$. 

Note that any pair of potential relay locations of the form $\{(i,j): i>j, i \in \{0,1,2,\cdots\}, j \in \{0,1,2,\cdots\}, 
|i-j| \leq (A+B)\}$ is a possible link. Let us denote the shadowing component of the path-loss over 
in link $(i,j)$ by $w_{i,j}$. Let us denote 
the collection of the random shadowing of all possible links emanating from and ending at 
$\{j,j+1,\cdots,i\}$ ($i>j$) by $\mathbf{W}_{(j:i)}$, and that 
of all possible links emanating from the segment $\{j,j+1,\cdots,i\}$ and ending at $\{j_1,j_1+1,\cdots,i_1\}$ 
by $\mathbf{W}_{(j:i);(j_1:i_1)}$. Let us also denote the realizations of these collection of random variables by 
$\mathbf{w}_{(j:i)}$ and $\mathbf{w}_{(j:i);(j_1:i_1)}$ respectively. Now, for $L_1=l$ and the realization of the shadowing 
$\mathbf{w}_{(0:z+1+l)}$ of all possible links, let us denote the location of the last placed relay in the 
first instance (excluding the source at $x=z+1+l$) under policy $\mu_{z+1}^{*}$ by $x=y_{l;\mathbf{w}_{(0:z+1+l)}}$, 
and the cost incurred over the links solely in the 
locations $\{j,j+1,\cdots,i\}$ by $c(l;\mathbf{w}_{(0:z+1+l)};j:i)$.

The idea behind the proof is as follows. Consider Figure~\ref{fig:proof_illustration} which depicts the two instances of the 
problem. Consider the case where we fix $L_1=l$, $L_2=l$ and the shadowing of all possible links between $x=0$ and $x=z+1+l$ 
are also fixed and they are the same for both instances. Note that the location of the last placed relay 
(before the source) for the first instance under the policy $\mu_{z+1}^{*}$ is 
denoted by $y_{l;\mathbf{w}_{(0:z+1+l)}}$. If $y_{l;\mathbf{w}_{(0:z+1+l)}}=z+l$, then the source 
placed at $x=z+l$ in the second instance will use 
the same transmit power as used by the relay placed at $x=z+l$ in the first instance; in fact, in the 
region between $x=0$ and $x=z+l$ we will have the same placement locations, power and outage costs 
in both instances. But then there will be an extra link in the first instance 
from $x=z+l+1$ to $x=z+l$, and hence the first instance will have more cost. On the other hand, if $y_{l;\mathbf{w}_{(0:z+1+l)}}<z+l$, then 
in the region between $x=0$ and $x=y_{l;\mathbf{w}_{(0:z+1+l)}}$ we will have the same power and outage cost and same placement 
locations in both instances. But the last link in the first instance 
has length $(z+1+l-y_{l;\mathbf{w}_{(0:z+1+l)}})$, and that in the second instance has length $(z+l-y_{l;\mathbf{w}_{(0:z+1+l)}})$. If we now take expectation of the costs of these 
two links in two different cases over the shadowing in all possible links between $x=y_{l;\mathbf{w}_{(0:z+1+l)}}$ and $x=z+1+l$, 
then the link in the first instance 
will have higher expected cost since it is longer. This will happen for every possible values of $l$ and $y_{l;\mathbf{w}_{(0:z+1+l)}}$.

Now we will formally prove this lemma. By total probability theorem, we can write,

\footnotesize
\begin{eqnarray*}
&& J_{\mu_{z+1}^{*}}(z+1;\mathbf{0})\nonumber\\
&=& \sum_{l=1}^{\infty}(1-\theta)^{l-1}\theta \sum_{\mathbf{w}_{(0:z+1+l)}} p_{\mathbf{W}_{(0:z+1+l)}}(\mathbf{w}_{(0:z+1+l)}) \nonumber\\
&& \times \bigg(c(l;\mathbf{w}_{(0:z+1+l)};0:y_{l;\mathbf{w}_{(0:z+1+l)}}) + \nonumber\\
&& c(l;\mathbf{w}_{(0:z+1+l)};y_{l;\mathbf{w}_{(0:z+1+l)}}:z+1+l )\bigg) \nonumber\\
&=& \sum_{l=1}^{\infty}(1-\theta)^{l-1}\theta \sum_{\mathbf{w}_{(0:z+1+l)}} p_{\mathbf{W}_{(0:z+1+l)}}(\mathbf{w}_{(0:z+1+l)}) \nonumber\\
&& \times  \sum_{k=1}^{z+l} \mathbb{I}(y_{l;\mathbf{w}_{(0:z+1+l)}}=k) \bigg(c(l;\mathbf{w}_{(0:z+1+l)};0:k) + \nonumber\\
&& c(l;\mathbf{w}_{(0:z+1+l)};k:z+1+l )\bigg) \nonumber\\
&=& \sum_{l=1}^{\infty}(1-\theta)^{l-1}\theta \sum_{\mathbf{w}_{(0:z+1+l)}} p_{\mathbf{W}_{(0:z+1+l)}}(\mathbf{w}_{(0:z+1+l)}) \nonumber\\
&& \times  \sum_{k=1}^{z+l-1} \mathbb{I}(y_{l;\mathbf{w}_{(0:z+1+l)}}=k) \bigg(c(l;\mathbf{w}_{(0:z+1+l)};0:k) + \nonumber\\
\end{eqnarray*}
\begin{eqnarray}
&& c(l;\mathbf{w}_{(0:z+1+l)};k:z+1+l )\bigg) \nonumber\\
&&+ \sum_{l=1}^{\infty}(1-\theta)^{l-1}\theta \sum_{\mathbf{w}_{(0:z+1+l)}} p_{\mathbf{W}_{(0:z+1+l)}}(\mathbf{w}_{(0:z+1+l)}) \nonumber\\
&& \times   \mathbb{I}(y_{l;\mathbf{w}_{(0:z+1+l)}}=z+l) \bigg(c(l;\mathbf{w}_{(0:z+1+l)};0:z+l) + \nonumber\\
&& c(l;\mathbf{w}_{(0:z+1+l)};z+l:z+1+l )\bigg) 
\end{eqnarray}
\normalsize
On the other hand, 

\footnotesize
\begin{eqnarray}
 && J_{\mu_{z+1}^{*}}(z;\mathbf{0})\nonumber\\
&=& \sum_{l=1}^{\infty}(1-\theta)^{l-1}\theta \sum_{\mathbf{w}_{(0:z+1+l)}} p_{\mathbf{W}_{(0:z+1+l)}}(\mathbf{w}_{(0:z+1+l)}) \nonumber\\
&& \times  \sum_{k=1}^{z+l-1} \mathbb{I}(y_{l;\mathbf{w}_{(0:z+1+l)}}=k) \bigg(c(l;\mathbf{w}_{(0:z+1+l)};0:k) + \nonumber\\
&& c(l;\mathbf{w}_{(0:z+1+l)};k:z+l )\bigg) \nonumber\\
&&+ \sum_{l=1}^{\infty}(1-\theta)^{l-1}\theta \sum_{\mathbf{w}_{(0:z+1+l)}} p_{\mathbf{W}_{(0:z+1+l)}}(\mathbf{w}_{(0:z+1+l)}) \nonumber\\
&& \times   \mathbb{I}(y_{l;\mathbf{w}_{(0:z+1+l)}}=z+l) \bigg(c(l;\mathbf{w}_{(0:z+1+l)};0:z+l) \bigg)
\end{eqnarray}
\normalsize

Now, note that, the deployment upto the last relay does not depend on the shadowing 
in the links emanating from the locations $x > y_{l;\mathbf{w}_{(0:z+1+l)}}$. Hence, for any realization of the shadowing in all 
potential links, $c(l;\mathbf{w}_{(0:z+1+l)};0:k)=c(l;\mathbf{w}_{0:z+l};0:k)$ for all $k \leq z+l-1$. 
Note that $\mathbb{I}(\mathcal{E})$ 
is the indicator of the event $\mathcal{E}$; its vale is equal to $1$ if the event $\mathcal{E}$ occurs, or $0$ otherwise.

Hence, we obtain,

\footnotesize
\begin{eqnarray}
&& J_{\mu_{z+1}^{*}}(z+1;\mathbf{0})-J_{\mu_{z+1}^{*}}(z;\mathbf{0}) \nonumber\\
&=& \sum_{l=1}^{\infty}(1-\theta)^{l-1}\theta \sum_{\mathbf{w}_{(0:z+1+l)}} p_{\mathbf{W}_{(0:z+1+l)}}(\mathbf{w}_{(0:z+1+l)}) \nonumber\\
&& \times  \sum_{k=1}^{z+l-1} \mathbb{I}(y_{l;\mathbf{w}_{(0:z+1+l)}}=k) \bigg( c(l;\mathbf{w}_{(0:z+1+l)};k:z+1+l )  \nonumber\\
&&  -c(l;\mathbf{w}_{(0:z+1+l)};k:z+l ) \bigg) \nonumber\\
&&+ \sum_{l=1}^{\infty}(1-\theta)^{l-1}\theta \sum_{\mathbf{w}_{(0:z+1+l)}} p_{\mathbf{W}_{(0:z+1+l)}}(\mathbf{w}_{(0:z+1+l)}) \nonumber\\
&& \times   \mathbb{I}(y_{l;\mathbf{w}_{(0:z+1+l)}}=z+l)  c(l;\mathbf{w}_{(0:z+1+l)};z+l:z+1+l ) \nonumber\\
&& \label{eqn:term_2}
\end{eqnarray}
\normalsize

Now,

\footnotesize 
\begin{eqnarray*}
&&  \sum_{\mathbf{w}_{(0:z+1+l)}} p_{\mathbf{W}_{(0:z+1+l)}}(\mathbf{w}_{(0:z+1+l)}) \mathbb{I}(y_{l;\mathbf{w}_{(0:z+1+l)}}=k) \nonumber\\
&& \bigg( c(l;\mathbf{w}_{(0:z+1+l)};k:z+1+l ) -c(l;\mathbf{w}_{(0:z+1+l)};k:z+l ) \bigg)  \nonumber\\
\end{eqnarray*}
\begin{eqnarray}
&=& \sum_{\mathbf{w}_{(0:z+1+l)}} p_{\mathbf{W}_{(0:z+1+l)}}(\mathbf{w}_{(0:z+1+l)}) \mathbb{I}(y_{l;\mathbf{w}_{(0:z+1+l)}}=k) \nonumber\\
&& \bigg( \min_{\gamma \in \mathcal{S}} (\gamma+ \xi_o P_{out}(z+1+l-k,\gamma,w_{z+1+l,z+1+l-k}) \nonumber\\
&& - \min_{\gamma \in \mathcal{S}} (\gamma+ \xi_o P_{out}(z+l-k,\gamma,w_{z+l,z+l-k}))  \bigg) \label{eqn:term_1}
\end{eqnarray}
\normalsize

Since $P_{out}(r,\gamma,w)$ is increasing in $r$ for each $\gamma,w$, we must have the expression in 
(\ref{eqn:term_1}) greater than or equal to $0$. Also $c(l;\mathbf{w}_{(0:z+1+l)};z+l:z+1+l )$ in 
(\ref{eqn:term_2}) is always nonnegative. Hence, $J_{\mu_{z+1}^{*}}(z+1;\mathbf{0}) \geq J_{\mu_{z+1}^{*}}(z;\mathbf{0})$. Now, 
since $J_{\mu_{z+1}^{*}}(z;\mathbf{0}) \geq J_{\mu_{z}^{*}}(z;\mathbf{0})$, the result follows.

\section{Average Cost Per Step: With and Without Backtracking}\label{appendix:backtracking_average_cost}

\textbf{Proof of Theorem~\ref{thm:smdp-cost-vs-xi}}
Recall the definition of the functions $\mu^{(1)}$ and $\mu^{(2)}$. Now, 
$\frac{\xi_r+\sum_{\underline{w}} g(\underline{w}) \bigg(\mu^{(2)}(\underline{w})+ \xi_o P_{out}(\mu^{(1)}(\underline{w}),\mu^{(2)}(\underline{w}),w_{\mu^{(1)}(\underline{w})})\bigg)} 
{ \sum_{\underline{w}} g(\underline{w}) \mu^{(1)}(\underline{w}) ×}$ is the average cost of a specific  
stationary deterministic policy $\mu$ (by the Renewal Reward Theorem, 
since the placement process regenerates at each placement point). Hence, 

\footnotesize
 \begin{eqnarray*}
\lambda^{*} = \inf_{\mu}\frac{\xi_r+\sum_{\underline{w}} g(\underline{w}) \bigg(\mu^{(2)}(\underline{w})+ \xi_o P_{out}(\mu^{(1)}(\underline{w}),\mu^{(2)}(\underline{w}),w_{\mu^{(1)}(\underline{w})})\bigg)} 
{ \sum_{\underline{w}} g(\underline{w}) \mu^{(1)}(\underline{w}) ×} 
\end{eqnarray*}
\normalsize

For each policy $(\mu^{(1)},\mu^{(2)})$, the numerator is linear, increasing in $\xi_r$ and $\xi_o$ 
and the denominator is independent of $\xi_r$ and $\xi_o$. The proof follows immediately 
since the pointwise infimum of increasing, linear functions of $\xi_r$ and $\xi_o$ is increasing and 
jointly concave in $\xi_r$ and $\xi_o$.

\textbf{Proof of Theorem~\ref{theorem:comparison-backtracking-no-backtracking}}
Note that for the average cost problem with no backtracking, there exists an optimal threshold policy 
(similar to Theorem~\ref{theorem:policy_structure_sum_power_sum_outage_no_backtracking}), since the optimal policy for problem 
(\ref{eqn:sum_power_discounted_no_backtracking}) achieves $\lambda'$ average cost per step for $\theta$ sufficiently close to $0$. 
So, let one such optimal policy be given by the set of thresholds $\{c_{th}(r)\}_{A+1 \leq r \leq A+B-1}$. 

Now, let us consider the average cost minimization problem with backtracking. Consider the policy where we 
first measure $w_{A+1},w_{A+2},\cdots,w_{A+B}$ and decide 
to place a relay $u$ steps away from the previous relay (where $A+1 \leq u \leq A+B-1$) if 
$\min_{\gamma \in \mathcal{S}} (\gamma+\xi_o P_{out}(r,\gamma,w_r)) > c_{th}(r)$ for all $r \leq (u-1)$ and 
$\min_{\gamma \in \mathcal{S}} (\gamma+\xi_o P_{out}(u,\gamma,w_u)) \leq c_{th}(u)$. 
We must place if we reach at a distance $(A+B)$ from the previous relay. 
But this is a particular policy for the problem where we gather 
$w_{A+1},w_{A+2},\cdots,w_{A+B}$ and then decide where to place the relay, and clearly the average cost per step 
for this policy is $\lambda'$ which cannot be less than the optimal average cost $\lambda^*$.

\bibliographystyle{unsrt}
\bibliography{arpan-techreport}

\begin{thebibliography}{10}

\bibitem{chattopadhyay-etal13measurement-based-impromptu-placement_wiopt}
A.~Chattopadhyay, M.~Coupechoux, and A.~Kumar.
\newblock Measurement based impromptu deployment of a multi-hop wireless relay
  network.
\newblock In {\em Proc. of the 11th Intl. Symposium on Modeling and
  Optimization in Mobile, Ad Hoc, and Wireless Networks (WiOpt)}. IEEE, 2013.

\bibitem{howard-etal02incremental-self-deployment-algorithm-mobile-sensor-netw%
ork}
M.~Howard, M.J. Matari\'{c}, and S.~Sukhat~Gaurav.
\newblock An incremental self-deployment algorithm for mobile sensor networks.
\newblock {\em Kluwer Autonomous Robots}, 13(2):113--126, 2002.

\bibitem{souryal-etal07real-time-deployment-range-extension}
M.R. Souryal, J.~Geissbuehler, L.E. Miller, and N.~Moayeri.
\newblock Real-time deployment of multihop relays for range extension.
\newblock In {\em Proc. of the International Conference on Mobile Systems,
  Applications, and Services (MobiSys)}, pages 85--98. ACM, 2007.

\bibitem{aurisch-tlle09relay-placement-emergency-response}
Thorsten Aurisch and Jens T\"{o}lle.
\newblock Relay {P}lacement for {A}d-hoc {N}etworks in {C}risis and {E}mergency
  {S}cenarios.
\newblock In {\em Proc. of the Information Systems and Technology Panel (IST)
  Symposium}. NATO Science and Technology Organization, 2009.

\bibitem{liu-etal10breadcrumb}
H.~Liu, J.~Li, Z.~Xie, S.~Lin, K.~Whitehouse, J.~A. Stankovic, and D.~Siu.
\newblock Automatic and robust breadcrumb system deployment for indoor
  firefighter applications.
\newblock In {\em Proc. of the International Conference on Mobile Systems,
  Applications, and Services (MobiSys)}. ACM, 2010.

\bibitem{sinha-etal12optimal-sequential-relay-placement-random-lattice-path}
A.~Sinha, A.~Chattopadhyay, K.P. Naveen, M.~Coupechoux, and A.~Kumar.
\newblock Optimal sequential wireless relay placement on a random lattice path.
\newblock {\em http://arxiv.org/abs/1207.6318}.

\bibitem{agarwal-patwari07correlated-shadow-fading-multihop}
Piyush Agrawal and Neal Patwari.
\newblock Correlated link shadow fading in multi-hop wireless networks.
\newblock {\em http://arxiv.org/abs/0804.2708}.

\bibitem{vodel-hardt13energy-efficient-communication-distributed-embedded-syst%
ems}
Matthias Vodel and Wolfram Hardt.
\newblock Energy-efficient communication in distributed, embedded systems.
\newblock In {\em Proc. of the 9th International Workshop on Resource
  Allocation, Cooperation and Competition in Wireless Networks (RAWNET), in
  conjunction with IEEE WiOpt}. IEEE, 2013.

\bibitem{bertsekas07dynamic-programming-optimal-control-2}
D.P. Bertsekas.
\newblock {\em Dynamic Programming and Optimal Control, Vol. II}.
\newblock Athena Scientific, 2007.

\bibitem{tijms03stochastic-models}
H.C. Tijms.
\newblock {\em A First Course in Stochastic Models}.
\newblock WILEY, 2003.

\end{thebibliography}

\end{document}